\documentclass[12pt]{article}

\def\UseBibLatex{1}

\makeatletter
\def\input@path{{styles/}}
\makeatother

\providecommand{\BibLatexMode}[1]{}
\providecommand{\BibTexMode}[1]{}

\renewcommand{\BibLatexMode}[1]{#1}
\renewcommand{\BibTexMode}[1]{}

\ifx\UseBibLatex\undefined%
  \renewcommand{\BibLatexMode}[1]{}
  \renewcommand{\BibTexMode}[1]{#1}
\fi

\BibLatexMode{%
   \usepackage[bibencoding=utf8,style=alphabetic,backend=biber]{biblatex}%
   \usepackage{sariel_biblatex}%
}

\usepackage[cm]{fullpage}%
\usepackage{amsmath}%
\usepackage{amssymb}%
\usepackage[table]{xcolor}%

\usepackage[amsmath,thmmarks]{ntheorem}%

\usepackage{titlesec}%
\usepackage{xcolor}%
\usepackage{mleftright}%
\usepackage{xspace}%
\usepackage{graphicx}
\usepackage{hyperref}%
\usepackage[inline]{enumitem}
\usepackage{hyperref}%
\usepackage[ocgcolorlinks]{ocgx2}
\usepackage{euscript}
\usepackage{wrapfig}  %
\usepackage{caption}

\hypersetup{%
      unicode,
      breaklinks,%
      colorlinks=true,%
      urlcolor=[rgb]{0.25,0.0,0.0},%
      linkcolor=[rgb]{0.5,0.0,0.0},%
      citecolor=[rgb]{0,0.2,0.445},%
      filecolor=[rgb]{0,0,0.4},
      anchorcolor=[rgb]={0.0,0.1,0.2}%
}

\titlelabel{\thetitle. }%

\theoremseparator{.}%

\theoremstyle{plain}%
\newtheorem{theorem}{Theorem}[section]

\newtheorem{lemma}[theorem]{Lemma}

\newtheorem{claim}[theorem]{Claim}%

\theoremstyle{plain}%
\theoremheaderfont{\sf} \theorembodyfont{\upshape}%
\newtheorem*{remark:unnumbered}[theorem]{Remark}%
\newtheorem{remark}[theorem]{Remark}%

\newtheorem{defn}[theorem]{Definition}

\theoremheaderfont{\em}%
\theorembodyfont{\upshape}%
\theoremstyle{nonumberplain}%
\theoremseparator{}%
\theoremsymbol{\myqedsymbol}%
\newtheorem{proof}{Proof:}%

\providecommand{\emphind}[1]{}%
\renewcommand{\emphind}[1]{\emph{#1}\index{#1}}

\definecolor{blue25emph}{rgb}{0, 0, 11}

\providecommand{\emphic}[2]{}
\renewcommand{\emphic}[2]{\textcolor{blue25emph}{%
      \textbf{\emph{#1}}}\index{#2}}

\providecommand{\emphi}[1]{}%
\renewcommand{\emphi}[1]{\emphic{#1}{#1}}

\definecolor{almostblack}{rgb}{0, 0, 0.3}

\providecommand{\emphw}[1]{}%
\renewcommand{\emphw}[1]{{\textcolor{almostblack}{\emph{#1}}}}%

\providecommand{\emphOnly}[1]{}%
\renewcommand{\emphOnly}[1]{\emph{\textcolor{blue25emph}{\textbf{#1}}}}

\newcommand{\atgen}{\symbol{'100}}
\newcommand{\myqedsymbol}{\rule{2mm}{2mm}}
\newcommand{\SarielThanks}[1]{%
   \thanks{%
      School of Computing and Data Science; %
      University of Illinois; %
      201 N. Goodwin Avenue; %
      Urbana, IL, 61801, USA; %
      \href{mailto:spam@illinois.edu}{sariel@illinois.edu}; %
      \url{http://sarielhp.org/}.%
   #1%
   }%
}

\newcommand{\HLink}[2]{\hyperref[#2]{#1~\ref*{#2}}}
\newcommand{\HLinkSuffix}[3]{\hyperref[#2]{#1\ref*{#2}{#3}}}

\newcommand{\figlab}[1]{\label{fig:#1}}
\newcommand{\figref}[1]{\HLink{Figure}{fig:#1}}

\newcommand{\thmlab}[1]{{\label{theo:#1}}}
\newcommand{\thmref}[1]{\HLink{Theorem}{theo:#1}}

\newcommand{\seclab}[1]{\label{sec:#1}}
\newcommand{\secref}[1]{\HLink{Section}{sec:#1}}

\newcommand{\apndlab}[1]{\label{apnd:#1}}
\newcommand{\apndref}[1]{\HLink{Appendix}{apnd:#1}}

\newcommand{\remlab}[1]{\label{rem:#1}}
\newcommand{\remref}[1]{\HLink{Remark}{rem:#1}}%

\newcommand{\clmlab}[1]{\label{claim:#1}}
\newcommand{\clmref}[1]{\HLink{Claim}{claim:#1}}

\providecommand{\deflab}[1]{\label{def:#1}}
\renewcommand{\deflab}[1]{\label{def:#1}}
\newcommand{\defref}[1]{\HLink{Definition}{def:#1}}

\newcommand{\lemlab}[1]{\label{lemma:#1}}
\newcommand{\lemref}[1]{\HLink{Lemma}{lemma:#1}}%

\providecommand{\eqlab}[1]{}%
\renewcommand{\eqlab}[1]{\label{equation:#1}}
\newcommand{\Eqref}[1]{\HLinkSuffix{Eq.~(}{equation:#1}{)}}

\providecommand{\remove}[1]{}%

\newcommand{\pth}[1]{\mleft(#1\mright)}%

\newcommand{\ProbC}{{\mathbb{P}}}
\newcommand{\ExC}{{\mathbb{E}}}

\newcommand{\Prob}[1]{\ProbC\mleft[ #1 \mright]}
\newcommand{\Ex}[1]{\ExC\mleft[ #1 \mright]}

\newcommand{\ceil}[1]{\mleft\lceil {#1} \mright\rceil}
\newcommand{\floor}[1]{\mleft\lfloor {#1} \mright\rfloor}

\newcommand{\brc}[1]{\left\{ {#1} \right\}}

\newcommand{\cardin}[1]{\left\lvert {#1} \right\rvert}%

\renewcommand{\th}{th\xspace}
\newcommand{\ds}{\displaystyle}%

\renewcommand{\Re}{\mathbb{R}}%

\newlist{compactenumA}{enumerate}{5}%
\setlist[compactenumA]{topsep=0pt,itemsep=-1ex,partopsep=1ex,parsep=1ex,%
   label=(\Alph*)}%

\newlist{compactenuma}{enumerate}{5}%
\setlist[compactenuma]{topsep=0pt,itemsep=-1ex,partopsep=1ex,parsep=1ex,%
   label=(\alph*)}%

\newlist{compactenumI}{enumerate}{5}%
\setlist[compactenumI]{topsep=0pt,itemsep=-1ex,partopsep=1ex,parsep=1ex,%
   label=(\Roman*)}%

\newlist{compactenumi}{enumerate}{5}%
\setlist[compactenumi]{topsep=0pt,itemsep=-1ex,partopsep=1ex,parsep=1ex,%
   label=(\roman*)}%

\newlist{compactitem}{itemize}{5}%
\setlist[compactitem]{topsep=0pt,itemsep=-1ex,partopsep=1ex,parsep=1ex,%
   label=\ensuremath{\bullet}}%

\numberwithin{figure}{section}%
\numberwithin{table}{section}%
\numberwithin{equation}{section}%

\providecommand{\parent}{\overline{\mathrm{p}}}
\providecommand{\parentX}[1]{\parent\pth[]{#1}}

\newcommand{\depthX}[1]{\mathrm{depth}\pth{#1}}

\newcommand{\Spread}{\Phi}
\newcommand{\SpreadX}[1]{\Phi\pth{#1}}
\providecommand{\TPDF}[2]{\texorpdfstring{#1}{#2}}

\newcommand{\Cell}{\Box}

\newcommand{\ropt}{r_{\mathrm{opt}}}

\renewcommand{\th}{t{}h\xspace}

\newcommand{\WSPDRep}{\mathcal{W}}
\newcommand{\SW}{\EuScript{S}}

\providecommand{\ds}{\displaystyle}

\newcommand{\WSPD}{\textsf{WSPD}\xspace}
\newcommand{\SSPD}{\textsf{SSPD}\xspace}
\newcommand{\BAR}{\textsf{BAR}\xspace}

\newcommand{\SSPDs}{\textsf{SSPD}s\xspace}

\newcommand{\pnt}{\mathsf{p}}
\newcommand{\pntA}{\mathsf{q}}
\newcommand{\pntB}{\mathsf{r}}
\newcommand{\pntC}{\mathsf{z}}

\newcommand{\PntSet}{\mathsf{P}}
\newcommand{\PntSetA}{\mathsf{Q}}
\newcommand{\PntSetB}{{\mathsf{R}}}
\newcommand{\PntSetC}{{\mathsf{S}}}

\providecommand{\pbrcx}[1]{\left[ {#1} \right]}
\providecommand{\Ex}[1]{\mathop{\mathbf{E}}\!\pbrcx{#1}}
\providecommand{\Prob}[1]{\mathop{\mathbf{P{}r}}\!\pbrcx{#1}}

\newcommand{\Pin}{\PntSet_{\mathrm{in}}}
\newcommand{\Pout}{\PntSet_{\mathrm{out}}}
\newcommand{\Pouter}{\PntSet_{\mathrm{outer}}}

\newcommand{\Pring}{\PntSet_{\mathrm{ring}}}
\newcommand{\ring}{\mathcal{R}}

\providecommand{\MakeBig}{\rule[-.2cm]{0cm}{0.4cm}}
\providecommand{\MakeSBig}{\rule[0.0cm]{0.0cm}{0.35cm}} %

\providecommand{\pbrc}[1]{\!\!\left[ {#1} \MakeBig \right]}

\providecommand{\cardin}[1]{\left| {#1} \right|}
\providecommand{\ceil}[1]{\left\lceil {#1} \right\rceil}
\providecommand{\pth}[2][\!]{#1\left({#2}\right)}

\providecommand{\si}[1]{#1}

\newcommand{\DiamX}[1]{\mathrm{\si{diam}}\pth{#1}}

\newcommand{\MTR}{\mathcal{M}}
\newcommand{\BallC}{\mathsf{b}}

\newcommand{\BallX}[2]{\mathsf{b}\pth{ #1, #2  }}
\newcommand{\RingX}[3]{\mathrm{ring}\pth{ #1, #2, #3 \MakeSBig}}

\newcommand{\Graph}{\ensuremath{\mathcal{G}}}

\newcommand{\DistChar}{\mathsf{d}}
\newcommand{\DistSetX}[2]{\DistChar\pth{#1,#2}}

\providecommand{\dist}[1]{\left\| {#1} \right\|}
\newcommand{\distX}[2]{\left\| {#1 #2} \right\|}
\newcommand{\distG}[2]{\DistChar_\Graph\pth{ #1, #2}}
\definecolor{blue25}{rgb}{0,0,0.25} \providecommand{\emphic}[2]{%
   \textcolor{blue25}{%
      \textbf{\emph{#1}}}%
   \index{#2}}

\providecommand{\emphi}[1]{\emphic{#1}{#1}}
\providecommand{\eps}{{\varepsilon}}%
\providecommand{\brc}[1]{\left\{ {#1} \right\}}
\providecommand{\sep}[1]{\,\left|\, {#1} \MakeBig\right.}

\renewcommand{\Re}{\mathbb{R}}%

\newcommand{\WeightX}[1]{\omega \pth{#1}}
\newcommand{\QTree}{\EuScript{T}}

\newcommand{\PSetX}[1]{\mathcal{S}\pth{#1}}

\newcommand{\cset}{\mathcal{C}}

\newcommand{\cone}{\sigma}

\newcommand{\hub}{\mathrm{hub}}

\providecommand{\etal}{\textit{e{t}~{a}l.}\xspace}

\newcommand{\coneAngle}{\psi}

\newcommand{\tspanner}{$t$-spanner\xspace}

\newcommand{\Furer}{F{\"u}r{e}r\xspace}
\newcommand{\Err}{\mathcal{E}}

\newcommand{\SetA}{\mathcal{X}}
\newcommand{\SetB}{\mathcal{Y}}
\newcommand{\SetC}{\mathcal{Z}}

\newcommand{\Family}{\EuScript{F}}

\newcommand{\regionAB}{R_{\pntA,\pntB}}
\newcommand{\regionRest}{R_{\mathrm{rest}}}

\newlength{\savedparindent}

\newcommand{\myParagraph}[1]{\paragraph{#1}}
\newcommand{\Edges}{\mathsf{E}}

\newcommand{\CX}{\EuScript{X}}%
\newcommand{\CY}{\EuScript{Y}}%
\newcommand{\Thickness}{\varpi}%

\newcommand{\src}{\mathsf{s}}
\newcommand{\target}{\mathsf{t}}

\BibLatexMode{%
   \bibliography{sspd}
}

\begin{document}

\title{New Constructions of \SSPD{}s and their Applications%
   \thanks{A preliminary version of this paper appeared in SoCG 2010
      \cite{ah-ncsa-10}.  The latest version of this paper is
      available online \cite{ah-ncsa-10}.}}

\author{Mohammad A. Abam%
   \thanks{Department of Computer Engineering, Sharif University of Technology, Tehran, Iran.
   email:{\tt abam\atgen{}sharif.edu}.}%
   \and %
   Sariel Har-Peled%
   \SarielThanks{Work on this paper was partially supported by a NSF
      AF award CCF-0915984.}  }

\date{March 29, 2018}

\maketitle

\begin{abstract}
    We present a new optimal construction of a semi-separated pair
    decomposition (i.e., \SSPD) for a set of $n$ points in $\Re^d$. In
    the new construction each point participates in a few pairs, and
    it extends easily to spaces with low doubling dimension. This is
    the first optimal construction with these properties.

    As an application of the new construction, for a fixed $t>1$, we
    present a new construction of a \tspanner with $O(n)$ edges and
    maximum degree $O(\log^2 n)$ that has a separator of size
    $O\pth{n^{1-1/d}}$.
\end{abstract}

\section{Introduction}

For a point-set $\PntSet$, a pair decomposition of $\PntSet$ is a set
$\WSPDRep$ of pairs of subsets of $\PntSet$, such that for every pair
of points of $\pnt, \pntA \in \PntSet$ there exists a pair
$\brc{\SetA, \SetB} \in \WSPDRep$ such that $\pnt \in \SetA$ and
$\pntA \in \SetB$---see \secref{preliminaries} for the formal
definition.  A well-separated pair decomposition (\WSPD) of $\PntSet$
is a pair decomposition of $\PntSet$ such that for every pair
$\brc{\SetA, \SetB}$, the distance between $\SetA$ and $\SetB$ is
large when compared to the {maximum} diameter of the two point
sets. The notion of \WSPD was developed by Callahan and
Kosaraju~\cite{ck-dmpsa-95}, and it provides a compact representation
of the quadratic pairwise distances of the point-set $\PntSet$, since
there is a \WSPD with a linear number of pairs.

The total weight of a pair decomposition $\WSPDRep$ is the total size
of the sets involved; that is $ \WeightX{\WSPDRep} = \sum_{\brc{\SetA,
      \SetB} \in \WSPDRep} \pth{ \cardin{\SetA} + \cardin{\SetB}}$.
Naturally, a \WSPD with near linear weight is easier to manipulate and
can be used in applications where the total weight effects the overall
performance.  Unfortunately, it is easy to see that, in the worst
case, the total weight of any \WSPD is $\Omega\pth{n^2}$.  Callahan
and Kosaraju~\cite{ck-dmpsa-95} overcame this issue by generating an
implicit representation of the \WSPD using a tree. Indeed, they build
a tree $\QTree$ (usually a compressed quadtree, or some other
variant) that stores the points of the given point set in the
leaves. Pairs are reported as $\brc{u,v}$ where $u$ and $v$ are nodes
in the tree such that their respective pair $\PSetX{u} \otimes
\PSetX{v}$ is well separated, where $\PSetX{u}$ denotes the set of
points stored at the subtree of $u$.

\myParagraph{\SSPD{}s.}  To overcome this obesity problem, a weaker notion of {semi-separated pair decomposition} (\SSPD) had been suggested by Varadarajan \cite{v-dcamc-98}.  Here, an \SSPD $\SW$ of a point set has the property that for each pair $\brc{\SetA, \SetB} \in \SW$ the distance between $\SetA$ and $\SetB$ is large when compared to the \emph{minimum} diameter of the two point sets (in the \WSPD case this was the \emph{maximum} of the diameters).  See \figref{sspd_ex} for an example of a semi-separated pair that is not well-separated.

\begin{wrapfigure}{r}{0.25\linewidth}
    \centering
    \includegraphics{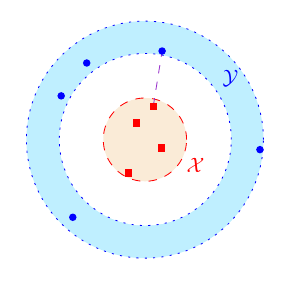}
    \caption{}
    \figlab{sspd_ex}
\end{wrapfigure}

By weakening the separation, one can get an \SSPD with near-linear
total weight. Specifically, Varadarajan~\cite{v-dcamc-98} showed how
to compute an \SSPD of weight $O\pth{n \log^4 n}$ for a set of $n$
points in the plane, in $O(n\log^5 n)$ time (for a constant separation
factor).  He used the \SSPD for speeding up his algorithm for the
min-cost perfect-matching in the plane. Recently, Abam \etal
\cite{adfg-rftgs-09} presented an algorithm which improves the
construction time to $O\pth{\eps^{-2} n\log n}$ and the weight to
$O\pth{ \eps^{-2} n\log n}$, where $1/\eps$ is the separation required
between pairs. It is known that any pair decomposition has weight
$\Omega\pth{ n\log n }$ \cite{bs-ss-07}, implying that the result of
Abam \etal \cite{adfg-rftgs-09} is optimal.  This construction was
generalized to $\Re^d$ with the construction time being $O(\eps^{-d}
n\log n)$ and the total weight of the \SSPD being $O(\eps^{-d} n\log
n)$, see \cite{adfgs-gswps-11}.

\paragraph{Spanners.}

More recently \SSPDs were used in constructing certain geometric
spanners \cite{adfg-rftgs-09, adfgs-gswps-11, acfs-psspd-09}.  Let
$\Graph=(\PntSet, \Edges)$ be a geometric graph on a set $\PntSet$ of
$n$ points in $\Re^d$.  That is, $\Graph$ is an edge-weighted graph
where the weight of an edge $\pnt\pntA \in \Edges$ is the Euclidean
distance between $\pnt$ and $\pntA$. The distance in $\Graph$ between
two points $\pnt$ and $\pntA$, denoted by $\distG{\pnt}{\pntA}$, is
the length of the shortest (that is, minimum-weight) path between
$\pnt$ and $\pntA$ in $\Graph$. A graph $\Graph$ is a (geometric)
\emphi{\tspanner{}}, for some $t\geq 1$, if for any two points $\pnt,
\pntA \in \PntSet$ we have $\distG{\pnt}{\pntA} \leq t\cdot
\distX{\pnt}{\pntA}$. Note that the concept of geometric spanners can
be easily extended to any metric space.  Geometric spanners have
received considerable attention in the past few years---see
\cite{ns-gsn-07} and references therein.  Obviously, the complete
graph is a \tspanner, but a preferable spanner would provide short
paths between its nodes, while having few edges. Other properties
considered include
\begin{compactenumi}
    \item low total length of edges,
    \item low diameter,
    \item low maximum degree, and
    \item having small separators.
\end{compactenumi}

\paragraph{Separators.}

A graph $\Graph$ is said to have a \emphi{$k$-separator} if its
vertices can be decomposed into three sets $\SetA$, $\SetB$ and
$\SetC$ such that both $\cardin{\SetA}$ and $\cardin{\SetB}$ are
$\Omega(n)$ and $\cardin{\SetC} \leq k$ (i.e., the set $\SetC$ is the
\emph{separator}). Furthermore, there is \emph{no} edge in $\Graph$
connecting a vertex in $\SetA$ to a vertex in $\SetB$.  Graphs with
good separators are the best candidates to applying the divide and
conquer approach---see \cite{lt-apst-80, mttv-sspnng-97, sw-gsta-98}
and references therein for more results and applications. Lipton and
Tarjan \cite{lt-apst-80} showed that any planar graph has an $O\pth{
   \!  \sqrt{n} \, \MakeSBig }$-separator.  The Delaunay triangulation
of a planar point-set $\PntSet$ is an $O(1)$-spanner
\cite{kg-cgwaceg-92}. Furthermore, it is a planar graph and is thus an
$O(1)$-spanner with an $O(\sqrt{n})$-separator. \Furer and
Kasiviswanathan \cite{fk-sgig-07} recently presented a \tspanner for a
ball graph which is an intersection graph of a set of $n$ ball in
$\Re^d$ with arbitrary radii which has an
$O\pth{n^{1-1/d}}$-separator. Since complete Euclidean graphs are a
special case of unit ball graphs, their results yields a new
construction of \tspanner for geometric graphs with small separators.

\paragraph{Metric spaces with low doubling dimension.}
Recently the notion of doubling dimension \cite{a-pldr-83,
   gkl-bgfld-03, h-lams-01} which is a generalization of the Euclidean
dimension, has received considerable attention.  The \emph{doubling
   constant} of a metric space $\MTR$ is the maximum, over all balls
$\BallC$ in the metric space $\MTR$, of the minimum number of balls
needed to cover $\BallC$, using balls with half the radius of
$\BallC$. The logarithm of the doubling constant is the
\emphi{doubling dimension} of the space---note that $\Re^d$ has
$\Theta(d)$ doubling dimension.  Constructions of \WSPD, spanners, and
a data-structure for approximate nearest neighbor for metric spaces
with fixed doubling dimension were provided by Har-Peled and Mendel
\cite{hm-fcnld-06}.

\paragraph{Limitations of known constructions of \SSPDs.}
The optimal construction of \SSPDs of \cite{adfg-rftgs-09} uses
\BAR-trees \cite{dgk-bartc-01}, and as such it is not applicable to
metric spaces with constant doubling dimension. Moreover, in all
previously known optimal constructions of \SSPDs a point might appear
in many (i.e., linear number of) pairs, see \apndref{lower:bound} for
an example demonstrating this. Note, that in Varadarajan's original
construction \cite{v-dcamc-98} a point might appear in a $O( \log^4
n)$ pairs (for the planar case).  In particular, in all previous
constructions of spanners with small separators the degree of a point
might be $\Omega(n)$.

\paragraph{Our results.}
Interestingly, building an \SSPD for a point set with polynomially
bounded spread is relatively easy, as we point out in
\secref{preliminaries} (see \remref{low:spread:easy}). However,
extending this construction to the unbounded spread is more
challenging. Intuitively, this is because the standard \WSPD
construction is local and greedy in nature, and does not take the
global structure of the point set into account (in particular, it
ignore weight issues all together), which is necessary when handling
unbounded spread.

Building upon this bounded spread construction, we present two new
constructions of \SSPDs for a point set in $\Re^d$.  The first
construction guarantees that each point appears in $O\pth{\log^2 n}$
pairs and the resulting \SSPD has weight $O\pth{n\log^2 n}$. This
construction is (arguably) simpler than previous constructions.

The second construction uses a randomized partition scheme, which is
of independent interest, and results in an optimal \SSPD where each
point appears in $O(\log n)$ pairs, with high probability. This is the
first construction to guarantee that no point participates in too many
pairs. The two new constructions of \SSPDs work also in finite metric
spaces with low doubling dimension. This enables us to extend, in a
plug and play fashion, several results from Euclidean space to spaces
with low doubling dimension---see \secref{immediate:applications} for
details.

We also present an algorithm for constructing a spanner in Euclidean
space from any \SSPD. Since this is a simple generalization of the
algorithm given in~\cite{adfg-rftgs-09}, we delegate the description
of this algorithm to \apndref{SSPD:to:spanners}. The new proof showing
that the output graph is \tspanner is independent of the \SSPD
construction, unlike the previous proof of Abam \etal
\cite{adfg-rftgs-09}.

Using the new constructions, for a fixed $t$, we present a new
construction of a \tspanner with $O(n)$ edges and maximum degree
$O\pth{\log^2 n}$. The construction of this spanner takes
$O\pth{n\log^2 n}$ time, and it contains an $O\pth{ n^{1-1/d}
}$-separators. Note that, in the worst case, the separator can not be
much smaller.  The previous construction of \Furer and Kasiviswanathan
\cite{fk-sgig-07} is slower, and the maximum degree in the resulting
spanner is unbounded---see \thmref{separator} for details.

\bigskip

The paper is organized as follows: In \secref{preliminaries}, we
formally define some of the concepts we need and prove some basic
lemmas.  \secref{simple} presents a simple construction of \SSPDs.  In
\secref{optimal}, we present the new optimal construction of
\SSPDs. In \secref{applications}, we present the new construction of a
spanner with a small separator.  We depart with a few concluding
remarks in \secref{conclusions}.

\section{Preliminaries}
\seclab{preliminaries}

\paragraph{Definitions and  notation.}
Let $\BallX{\pnt}{r}$ denote the closed ball centered at a point
$\pnt$ of radius $r$.  Let $\RingX{\pnt}{r}{R} = \BallX{\pnt}{R}
\setminus \BallX{\pnt}{r}$ denote the ring centered at $\pnt$ with
outer radius $R$ and inner radius $r$.

For two sets of points $\PntSet$ and $\PntSetA$ in $\Re^d$, we denote
the distance between the sets $\PntSet$ and $\PntSetA$ by $\ds
\DistSetX{\PntSet}{\PntSetA} = \min_{\pnt \in \PntSet, \pntA \in
   \PntSetA} \dist{\pnt -\pntA}$.  We also use
$\ds \PntSet \otimes \PntSetA = \brc{\brc{\pnt,\pntA} \sep{ \pnt\in
      \PntSet,\, \pntA\in \PntSetA \, \text{ and } \pnt \neq \pntA }}
$
to denote all the (unordered) pairs of points formed by the sets $\PntSet$ and
$\PntSetA$.

\begin{defn}[Pair decomposition.]
    For a point-set $\PntSet$, a \emphi{pair decomposition} of
    $\PntSet$ is a set of pairs $ \displaystyle \WSPDRep =
    \brc{\MakeSBig
       \brc{\SetA_1,\SetB_1},\ldots,\brc{\SetA_s,\SetB_s}}$, such that
   \begin{compactenumi}
       \item $\SetA_i,\SetB_i\subset \PntSet$ for every $i$,
       \item $\SetA_i \cap \SetB_i = \emptyset$ for every $i$, and
       \item $\cup_{i=1}^s \SetA_i \otimes \SetB_i = \PntSet \otimes \PntSet$.
   \end{compactenumi}

   \medskip

   The \emphi{weight} of a pair decomposition $\WSPDRep$ is defined to
   be $\WeightX{\WSPDRep} = \sum_{i=1}^s \pth{ \cardin{\SetA_i } +
      \cardin{\SetB_i}}$.
\end{defn}

As mentioned in the introduction, such a pair decomposition is usually
described implicitly by reporting pairs $\brc{u,v}$ where $u$ and $v$
are nodes of a tree $\QTree$ that stores the points of the given point
set in the leaves and is used to construct the decomposition. As such,
a pair $\brc{u,v}$ represents the pairs $\PSetX{u} \otimes \PSetX{v}$,
where $\PSetX{u}$ is the set of points stored in the subtree of $u$.

\begin{defn}
    A pair of sets $\PntSetB$ and $\PntSetC$ is
    \emphic{$(1/\eps)$-separated}{separated!sets} if $\max \pth{
       \DiamX{\PntSetB}, \DiamX{\PntSetC} } \leq \eps \cdot
    \DistSetX{\PntSetB}{\PntSetC}$. Furthermore, they are
    \emphi{$(1/\eps)$-semi-separated} if $\min \pth{ \DiamX{\PntSetB},
       \DiamX{\PntSetC} } \leq \eps \cdot
    \DistSetX{\PntSetB}{\PntSetC}$.

    \deflab{separation}
\end{defn}

\begin{defn}[\WSPD]
    For a point-set $\PntSet$, a \emphic{well-separated pair
       decomposition}{WSPD} of $\PntSet$ with parameter $1/\eps$ is a
    pair decomposition $\WSPDRep$ of $\PntSet$, such that for any pair
    $\brc{\SetA_i, \SetB_i} \in \WSPDRep$, the sets $\SetA_i$ and
    $\SetB_i$ are $(1/\eps)$-separated.

   \deflab{WSPD}
\end{defn}

\begin{defn}[\SSPD]
   For a point-set $\PntSet$, a \emphic{semi-separated pair
      decomposition}{SSPD} of $\PntSet$ with parameter $1/\eps$,
   denoted by $(1/\eps)$-\SSPD, is a pair decomposition of $\PntSet$
   formed by a set of pairs $\SW$, such that all the pairs of
   $\SW$ are $(1/\eps)$-semi-separated.
\end{defn}

Note that, by definition, a $(1/\eps)$-\WSPD of $\PntSet$ is also a
$(1/\eps)$-\SSPD for $\PntSet$.  Interestingly, one can split any pair
decomposition such that it covers only pairs that appear in some
desired cut.

\begin{lemma}
    \lemlab{split:PD}%
    Given any pair decomposition $\WSPDRep$ of a point-set $\PntSet$,
    and given a subset $\PntSetA$, one can compute a pair
    decomposition $\WSPDRep'$ for $\PntSetA \otimes
    \overline{\PntSetA}$ (that covers \emph{only} these pairs), where
    $\overline{\PntSetA}= \PntSet \setminus \PntSetA$. Furthermore,
    the following properties hold.
    \begin{compactenumA}
        \item If a point appears in $k$ pairs of $\WSPDRep$ then it
        appears in at most $k$ pairs of $\WSPDRep'$.
        \item We have $\WeightX{\WSPDRep'} \leq \WeightX{\WSPDRep}$.
        \item The number of pairs in $\WSPDRep'$ is at most twice
        the number of pairs in $\WSPDRep$.
        \item For any pair $\brc{\SetA', \SetB'} \in \WSPDRep'$ there
        exists a pair $\brc{\SetA, \SetB} \in \WSPDRep$ such that
        $\SetA' \subseteq \SetA$ and $\SetB' \subseteq \SetB$.
        \item The time to compute $\WSPDRep'$ is linear in the weight
        of $\WSPDRep$.
    \end{compactenumA}
\end{lemma}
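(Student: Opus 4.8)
The plan is to modify each pair $\brc{\SetA_i, \SetB_i} \in \WSPDRep$ by restricting its two sides so that only pairs crossing the cut $(\PntSetA, \overline{\PntSetA})$ survive. For a single pair $\brc{\SetA, \SetB}$, the pairs it contributes that cross the cut are exactly those in $(\SetA \cap \PntSetA) \otimes (\SetB \cap \overline{\PntSetA})$ together with those in $(\SetA \cap \overline{\PntSetA}) \otimes (\SetB \cap \PntSetA)$. So I would replace $\brc{\SetA, \SetB}$ by the (at most) two pairs
\[
   \brc{\SetA \cap \PntSetA,\; \SetB \cap \overline{\PntSetA}}
   \qquad\text{and}\qquad
   \brc{\SetA \cap \overline{\PntSetA},\; \SetB \cap \PntSetA},
\]
discarding any pair in which one of the two sides is empty. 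Let $\WSPDRep'$ be the collection of all pairs so produced, ranging over all $\brc{\SetA_i,\SetB_i} \in \WSPDRep$. First I would check that this is a pair decomposition of $\PntSetA \otimes \overline{\PntSetA}$: given $\pnt \in \PntSetA$ and $\pntA \in \overline{\PntSetA}$, there is a pair $\brc{\SetA,\SetB} \in \WSPDRep$ with (say) $\pnt \in \SetA$, $\pntA \in \SetB$; then $\pnt \in \SetA \cap \PntSetA$ and $\pntA \in \SetB \cap \overline{\PntSetA}$, so the first of the two derived pairs covers $\brc{\pnt,\pntA}$. Conversely every derived pair is contained in $\PntSetA \otimes \overline{\PntSetA}$ by construction, and the two sides of each derived pair are disjoint since $\SetA_i \cap \SetB_i = \emptyset$ already and the intersections only shrink them; moreover $\PntSetA$ and $\overline{\PntSetA}$ are disjoint, which also guarantees the two sides land on opposite sides of the cut.

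The five enumerated properties then follow almost immediately. For (D): each derived pair $\brc{\SetA',\SetB'}$ has $\SetA' \subseteq \SetA_i$ and $\SetB' \subseteq \SetB_i$ for its parent pair, so the containment witness is the parent. For (C): each parent spawns at most two children, so $\cardin{\WSPDRep'} \le 2\cardin{\WSPDRep}$. For (B): the two children of $\brc{\SetA_i,\SetB_i}$ have combined weight $\cardin{\SetA_i \cap \PntSetA} + \cardin{\SetB_i \cap \overline{\PntSetA}} + \cardin{\SetA_i \cap \overline{\PntSetA}} + \cardin{\SetB_i \cap \PntSetA} = \cardin{\SetA_i} + \cardin{\SetB_i}$, since $\PntSetA$ and $\overline{\PntSetA}$ partition $\PntSet$; summing over $i$ gives $\WeightX{\WSPDRep'} \le \WeightX{\WSPDRep}$ (with equality unless some empty side was discarded). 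For (A): if a point $\pnt$ lies in $\PntSetA$, then for each parent pair $\brc{\SetA_i,\SetB_i}$ containing $\pnt$, the point $\pnt$ survives in at most one of the two children — namely in $\SetA_i \cap \PntSetA$ if $\pnt \in \SetA_i$, or in $\SetB_i \cap \PntSetA$ if $\pnt \in \SetB_i$ (it cannot be in both $\SetA_i$ and $\SetB_i$). The symmetric statement holds for $\pnt \in \overline{\PntSetA}$. Hence $\pnt$ appears in at most as many pairs of $\WSPDRep'$ as of $\WSPDRep$.

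For (E), the running time: computing $\WSPDRep'$ amounts to scanning each pair $\brc{\SetA_i,\SetB_i}$ once, and for each point in $\SetA_i$ or $\SetB_i$ testing membership in $\PntSetA$ (in $O(1)$ time after preprocessing $\PntSetA$ into a lookup table, or marking each point of $\PntSet$ with a bit indicating whether it lies in $\PntSetA$), and placing it in the appropriate child set. This is $O\pth{\cardin{\SetA_i} + \cardin{\SetB_i}}$ per pair, so $O\pth{\WeightX{\WSPDRep}}$ in total. I do not expect a serious obstacle here: the only points needing care are the bookkeeping to discard empty sides (so that the output is a legal pair decomposition with nonempty sides, if that is required) and being explicit about the $O(1)$-per-point membership test underlying the linear time bound; everything else is a direct unwinding of the definitions. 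If the implicit tree-based representation is in play, the same operation is performed on the node sets $\PSetX{u}$, but the statement as given is purely about the explicit pairs, so the set-theoretic argument above suffices.
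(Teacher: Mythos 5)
Your construction is exactly the one the paper uses: replace each pair $\brc{\SetA,\SetB}$ by $\brc{\SetA \cap \PntSetA, \SetB \cap \overline{\PntSetA}}$ and $\brc{\SetA \cap \overline{\PntSetA}, \SetB \cap \PntSetA}$, discard empty sides, and verify the five properties directly. The paper leaves the verification as ``easy to check,'' whereas you spell it out correctly, so the proposal matches the paper's proof in both approach and substance.
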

\begin{proof}
    Let $\WSPDRep' = \brc{ \brc{\SetA \cap \PntSetA, \SetB \cap
          \overline{\PntSetA}} \; , \; \brc{\SetA \cap
          \overline{\PntSetA}, \SetB \cap \PntSetA}\sep{ \brc{\SetA,
             \SetB} \in \WSPDRep}}$. Naturally, we can throw away
    pairs $\brc{X,Y} \in \WSPDRep'$ such that $X$ or $Y$ are empty
    sets.  It is now easy to check that the above properties hold for
    the pair decomposition $\WSPDRep'$.
\end{proof}

We  need the following easy partition lemma.

\begin{lemma}[\cite{hm-fcnld-06}]
    Given any set $\PntSet$ of $n$ points in $\Re^d$, an any constant
    $\mu \geq 1$, and a sufficiently large constant $c \geq 1$ (that
    depends only on $\mu$ and the dimension $d$), one can compute, in
    expected linear time, a ball $\BallX{\pnt}{r}$ that contains at
    least $n/c$ points of $\PntSet$, such that $\BallX{\pnt}{\mu r}$
    contains at most $n/2$ points of $\PntSet$, where $\pnt \in
    \PntSet$.

    \lemlab{small:ball}
\end{lemma}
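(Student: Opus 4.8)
The plan is to give a Las Vegas algorithm that repeatedly samples a point of $\PntSet$ and runs a cheap test on it, and to use a grid pigeonhole argument \emph{only in the analysis}, to certify that a constant fraction of the points pass the test. Fix the constants first. Let $N = N(\mu,d)$ be an integer upper bound on the number of cells of an axis-parallel grid of side length $\ell$ that a ball of radius $\mu\sqrt{d}\,\ell$ can meet (such an $N$ exists and depends only on $\mu$ and $d$; e.g.\ $N = \lceil(2\mu\sqrt{d}+2)^{d}\rceil$), and set $c = 2^{d+1}N$. We may assume $n > c$, since otherwise the claim is trivial: return any point of $\PntSet$ together with the radius $\DiamX{\PntSet}$.

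\emph{A constant fraction of the points are good.} For $i \in \mathbb{Z}$, let $G_i$ be the axis-parallel grid with cells of side length $\ell_i = 2^{i}$, and let $m_i$ be the largest number of points of $\PntSet$ lying in a single cell of $G_i$. The sequence $(m_i)_i$ is non-decreasing, it equals $1$ for all sufficiently small $\ell_i$ (points in a common cell are at distance at most $\sqrt d\,\ell_i$) and $n$ for all sufficiently large $\ell_i$, and it satisfies $m_{i+1} \le 2^{d} m_i$, since every cell of $G_{i+1}$ is a union of at most $2^{d}$ cells of $G_i$. Let $i^{*}$ be the smallest index with $m_{i^{*}} \ge n/c$; by minimality $m_{i^{*}-1} < n/c$, so $m_{i^{*}} \le 2^{d} m_{i^{*}-1} < 2^{d}\cdot n/c = n/(2N)$, that is $n/c \le m_{i^{*}} \le n/(2N)$. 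Fix a cell $Q$ of $G_{i^{*}}$ containing $m_{i^{*}}$ points, call each point of $Q\cap\PntSet$ \emph{good}, and note there are $m_{i^{*}} \ge n/c$ of them. Let $r = \sqrt{d}\,\ell_{i^{*}} = \DiamX{Q}$. If $\pnt$ is good then $Q \subseteq \BallX{\pnt}{r}$, so $\cardin{\BallX{\pnt}{r}\cap\PntSet} \ge n/c$; moreover $\BallX{\pnt}{\mu r}$ is contained in a box of side $2\mu r = 2\mu\sqrt{d}\,\ell_{i^{*}}$, hence meets at most $N$ cells of $G_{i^{*}}$, each holding at most $m_{i^{*}} \le n/(2N)$ points, so $\cardin{\BallX{\pnt}{\mu r}\cap\PntSet} \le n/2$.

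\emph{The algorithm.} Replacing $r$ by the distance $r_\pnt$ from $\pnt$ to its $\ceil{n/c}$-th nearest point of $\PntSet$ can only shrink the radius --- since $\BallX{\pnt}{r}$ already contains at least $\ceil{n/c}$ points --- while still giving $\cardin{\BallX{\pnt}{r_\pnt}\cap\PntSet} \ge n/c$ and, because $\mu \ge 1$, keeping $\BallX{\pnt}{\mu r_\pnt}\subseteq \BallX{\pnt}{\mu r}$; hence for every good $\pnt$ the ball $\BallX{\pnt}{r_\pnt}$ is a valid answer. The algorithm is therefore: pick $\pnt \in \PntSet$ uniformly at random; compute $r_\pnt$ by linear-time selection on the $n$ distances from $\pnt$ to the points of $\PntSet$; compute $\cardin{\BallX{\pnt}{\mu r_\pnt}\cap\PntSet}$ by a scan; if it is at most $n/2$ output $\BallX{\pnt}{r_\pnt}$, otherwise repeat. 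Each iteration runs in $O(n)$ time and, since every good point makes it succeed, succeeds with probability at least $(n/c)/n = 1/c$; so the expected number of iterations is at most $c$, and the total expected running time is $O(cn) = O(n)$, as $c$ depends only on $\mu$ and $d$.

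\emph{The crux.} The only part with real content is the existence claim --- that one radius can simultaneously capture $\ge n/c$ points while its $\mu$-blow-up misses half the points. The grid pigeonhole achieves this because the heaviest-cell count changes by at most a factor $2^{d}$ between consecutive scales, so some scale lands inside the window $[\,n/c,\, n/(2N)\,]$; taking $c = 2^{d+1}N$ makes this window wide enough (multiplicative width $2^{d}$) that it must be hit. The remaining ingredients --- passing from a grid cell to a ball, bounding the blow-up by $N$ cells, and the Las Vegas wrapper, which also sidesteps paying a $\log(\text{spread})$ factor for locating the right scale --- are routine, and the entire argument carries over to a metric of doubling dimension $\lambda$ by replacing grids with net hierarchies and $2^{d}, N$ with $2^{O(\lambda)}$.
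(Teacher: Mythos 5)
Your proof is correct, and while the algorithm you run is the same Las Vegas procedure as the paper's (sample a point, take the smallest radius capturing $\lceil n/c\rceil$ points, test the $\mu$-blow-up, repeat), your certificate that a constant fraction of samples succeed is genuinely different. The paper argues via the \emph{optimal} ball: letting $\BallX{\pntA}{\ropt}$ be the smallest ball centered at a point of $\PntSet$ containing $n/c$ points, it observes that no ball of radius $\ropt/2$ contains $n/c$ points, conditions on the sampled point landing in $\BallX{\pntA}{\ropt}$ (probability $\geq 1/c$) so that $r \leq 2\ropt$, and then covers $\BallX{\pnt}{\mu r}$ by $O(1)$ balls of radius $\ropt/2$. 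You instead run a pigeonhole over dyadic grid scales: since the heaviest-cell count can grow by at most $2^d$ per scale, some scale has its heaviest cell holding between $n/c$ and $n/(2N)$ points, and every point of that cell is a good sample. The two analyses buy roughly the same thing --- your constant $c$ is a factor $2^d$ larger, and both transfer to doubling metrics --- but yours has the mild aesthetic advantage of exhibiting the good set explicitly rather than conditioning on an event, at the cost of needing nested grids and the (harmless, standard) assumption that the points are distinct so that $m_i=1$ at fine scales.

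One small slip: your dismissal of the case $n \leq c$ is wrong as written. Returning radius $\DiamX{\PntSet}$ makes $\BallX{\pnt}{\mu r}$ contain all $n$ points, violating the $n/2$ bound; you should instead return $r=0$ (for $n\geq 2$ and distinct points this gives one point in the ball and one in the blow-up, which suffices since $n/c\leq 1\leq n/2$), and note that the statement is vacuously false for $n=1$. The paper silently ignores this degenerate regime as well, so this does not affect the substance of the argument.
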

\begin{proof}
    We include the proof for the sake of completeness.
    Pick randomly a point $\pnt$ from $\PntSet$, and compute the ball
    $\BallX{\pnt}{r}$ of smallest radius around $\pnt$ containing at
    least $n/c$ points of $\PntSet$. Next, consider the ball of radius
    $\BallX{\pnt}{ \mu r}$. If it contains at most $n/2$ points of
    $\PntSet$, then we are done. Otherwise, we repeat this procedure
    until we succeed.

    To see why this algorithm succeeds with constant probability in
    each iteration, consider the smallest radius ball that contains at
    least $m=n/c$ points of $\PntSet$ and is centered at a point of
    $\PntSet$.  Let $\pntA \in \PntSet$ be its center, $\ropt$ be its
    radius, and let $\PntSetA = \PntSet \cap \BallX{\pntA}{\ropt}$ be
    the points of $\PntSet$ contained in this ball. Observe that any
    ball of radius $\ropt / 2$ contains less than $m$ points (the ball
    is not necessarily centered at a point of $\PntSet$).  With
    probability $\geq 1/c$, we have that $\pnt$ is in $\PntSetA$; if
    this is the case, then $r \leq 2\ropt$.

    Furthermore, the ball $\BallX{ \pnt}{ \mu \ropt }$ can be covered
    by $O(1)$ balls of radius $\ropt/2$. Indeed, consider the axis
    parallel box of sidelength $2\mu \ropt$ centered at $\pnt$, and
    partition it into a grid with sidelength $\ropt/\sqrt{d}$. Observe
    that every grid cell can be covered by a ball of radius $\ropt/2$,
    and this grid has at most $c' = (2\mu\ropt/(\ropt/\sqrt{d}) + 1)^d
    = \pth{2 \mu \sqrt{d} + 1}^d$ cells.  Hence it holds that
    $\cardin{\PntSet \cap \BallX{\pnt}{ \mu r}} < c' m \leq n/2$, by
    requiring that $c \geq 2c'$.

    Thus, the algorithm succeeds with probability $1/c$ in each
    iteration, and the expected number of iterations performed is
    $O(c)$. This implies the result, as each iteration takes $O(n)$
    time.
\end{proof}%

The following partition lemma is one of the key ingredients in the new
\SSPD construction.

\begin{lemma}
    Let $\PntSet$ be a set of $n$ points in $\Re^d$, $t > 0$ be a
    parameter, and let $c$ be a sufficiently large constant. Then one
    can compute in linear time a ball $\BallC = \BallX{\pnt}{r}$, such that
    \begin{compactenumi}
        \item $\cardin{\BallC \cap \PntSet} \geq n/c$,
        \item $\cardin{\RingX{\pnt}{r}{r(1+1/t)} \cap \PntSet} \leq
        n/2t$, and
        \item $\cardin{\PntSet \setminus \BallX{\pnt}{2r}} \geq n/2$.
    \end{compactenumi}

    \lemlab{ring:separator}
\end{lemma}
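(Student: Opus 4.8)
The plan is to apply \lemref{small:ball} with a suitable choice of the constant $\mu$ to get a ball $\BallX{\pnt}{r_0}$ that captures a constant fraction of the points while $\BallX{\pnt}{\mu r_0}$ still misses at least half of them, and then extract from the ``gap'' between radius $r_0$ and radius $\mu r_0$ a thin ring that contains few points. The first and third requirements will essentially come for free from \lemref{small:ball}; the content is in producing the thin ring.

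First I would invoke \lemref{small:ball} with $\mu = 4$ (any constant exceeding $2$ works; I pick $4$ so that, after shrinking, the ball of radius $2r$ still lies inside $\BallX{\pnt}{\mu r_0}$), obtaining in expected linear time a point $\pnt \in \PntSet$ and a radius $r_0$ such that $\cardin{\BallX{\pnt}{r_0} \cap \PntSet} \geq n/c_0$ and $\cardin{\BallX{\pnt}{4 r_0} \cap \PntSet} \leq n/2$. Now consider the $2t$ disjoint rings
\[
  \ring_j = \RingX{\pnt}{r_0 (1 + j/(2t))}{r_0(1 + (j+1)/(2t))},
  \qquad j = 0, 1, \ldots, 2t-1,
\]
whose union is exactly $\RingX{\pnt}{r_0}{2 r_0}$. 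Since these rings are pairwise disjoint and all contained in $\BallX{\pnt}{4 r_0}$, their point counts sum to at most $n/2$, so by averaging at least one of them, say $\ring_{j_0}$, contains at most $\frac{n/2}{2t} = n/4t$ points of $\PntSet$. Set $r = r_0(1 + j_0/(2t))$ and take $\BallC = \BallX{\pnt}{r}$. Then $\RingX{\pnt}{r}{r(1 + 1/(2t)) } \subseteq \ring_{j_0}$ has at most $n/4t$ points; a minor adjustment of the indexing (using rings of relative width $1/t$ instead of $1/(2t)$, so $t$ rings cover $\RingX{\pnt}{r_0}{2r_0}$) gives the stated bound $n/2t$ directly, which is the cleaner route and I would present it that way.

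It remains to check the other two conditions for this $r$. Since $r \geq r_0$, we have $\BallX{\pnt}{r} \supseteq \BallX{\pnt}{r_0}$, so $\cardin{\BallC \cap \PntSet} \geq n/c_0 \geq n/c$ provided $c \geq c_0$. For the third condition, since $r < 2 r_0$ we have $\BallX{\pnt}{2r} \subseteq \BallX{\pnt}{4 r_0}$, hence $\cardin{\PntSet \setminus \BallX{\pnt}{2r}} \geq \cardin{\PntSet \setminus \BallX{\pnt}{4 r_0}} \geq n/2$. Finally, the running time: \lemref{small:ball} runs in expected linear time, and the averaging step — computing the distances from $\pnt$ to all points and bucketing them into the $t$ rings, then picking the lightest bucket — takes $O(n)$ time for fixed $t$ (or $O(n \log t)$ in general, which is $O(n)$ for constant $t$).

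The one point that needs a little care, and which I would flag as the mild obstacle, is the interplay of constants: \lemref{small:ball} only guarantees the ball it returns contains \emph{at least} $n/c_0$ points, not exactly, so one must make sure the ``$\mu r_0$ contains at most $n/2$'' guarantee is strong enough to survive being spread over $t$ rings and still leave slack for both the shrink to radius $r < 2r_0$ and the doubling to radius $2r < 4r_0$. Choosing $\mu = 4$ (rather than the bare minimum $\mu > 2$) and rings of width $r_0/t$ resolves this cleanly, and then $c := \max(c_0, 2)$ (or whatever \lemref{small:ball} demands for $\mu = 4$) suffices; I would state ``$c$ sufficiently large'' to absorb this. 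Everything else is bookkeeping.
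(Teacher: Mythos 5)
Your overall plan---invoke \lemref{small:ball} and then average over a collection of disjoint rings to find a sparse one---is the right one, and properties (i) and (iii) are handled correctly. The gap is in property (ii): the ring you must certify is $\RingX{\pnt}{r}{r(1+1/t)}$, whose width is $r/t$, i.e., proportional to the radius $r$ you end up choosing, whereas your candidate rings $\ring_j$ all have the \emph{fixed} width $r_0/(2t)$. Once you set $r = r_0(1+j_0/(2t))$ with $j_0 > 0$ you have $r > r_0$, so $r(1+1/(2t)) = r + r/(2t) > r + r_0/(2t)$, and the containment you assert, $\RingX{\pnt}{r}{r(1+1/(2t))} \subseteq \ring_{j_0}$, is in fact reversed: the required ring strictly contains $\ring_{j_0}$ and spills into the neighbouring rings, about which you know nothing. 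The ``minor adjustment'' to $t$ rings of relative width $1/t$ does not repair this---for $j_0$ near $t-1$ one has $r \approx 2r_0$, so the required ring has width $\approx 2r_0/t$ and covers two of your arithmetic rings (and part of it may even lie beyond $2r_0$, where you have no per-ring bound at all).

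The fix, which is what the paper does, is to space the candidate radii \emph{geometrically} rather than arithmetically: set $r_i = r_0(1+1/t)^i$ for $i = 0,\ldots,t$, so that each candidate ring $\RingX{\pnt}{r_{i-1}}{r_i}$ is \emph{exactly} of the required form $\RingX{\pnt}{r_{i-1}}{r_{i-1}(1+1/t)}$. Since $(1+1/t)^t \leq e$, all $t$ of these disjoint rings lie inside $\BallX{\pnt}{e r_0}$, and the averaging argument then gives one ring with at most $(n/2)/t$ points, which is precisely (ii) for $r = r_{i-1}$. (With this spacing $r$ can be as large as about $e r_0$ rather than $2r_0$, so you need $\mu$ somewhat larger than your $4$---the paper takes $\mu = 8$ so that $\BallX{\pnt}{2r} \subseteq \BallX{\pnt}{8r_0}$ still contains at most $n/2$ points.) With that change the rest of your write-up goes through.
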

\begin{proof}
    Let $\BallC = \BallX{\pnt}{\alpha}$ be the ball computed, in
    $O(n)$ time, by \lemref{small:ball} such that
    $\cardin{\BallX{\pnt}{\alpha} \cap \PntSet} \geq n/c$ and
    $\cardin{\BallX{\pnt}{8\alpha} \cap \PntSet} \leq n/2$.  We will
    set $r\in[\alpha,e\alpha]$ in such a way that property (ii) will
    hold for it. Indeed, set $r_i = \alpha(1+1/t)^i$, for $i=0,\ldots,
    t$, and consider the rings $\ring_i = \RingX{\pnt}{r_{i-1}}{r_{i}
    }$, for $i=1, \ldots, t$. We have that $r_{t} =\alpha(1+1/t)^{t}
    \leq \alpha \pth{\exp(1/t)}^{t} \leq \alpha e$, since $1+x \leq
    e^x$ for all $x \geq 0$. As such, all these (interior disjoint)
    rings are contained inside $\BallX{\pnt}{4\alpha}$. It follows
    that one of these rings, say the $i$\th ring $\ring_i$, contains
    at most $(n/2)/t$ of the points of $\PntSet$ (since
    $\BallX{\pnt}{8\alpha}$ contains at most half of the points of
    $\PntSet$). For $r = r_{i-1} \leq 4\alpha$ the ball $\BallC =
    \BallX{\pnt}{r}$ has the required properties, as $\BallX{\pnt}{2r}
    \subseteq \BallX{\pnt}{8\alpha}$.
\end{proof}

\subsection{\WSPD construction for the bounded spread case}

The \emphi{spread} of a point-set $\PntSet$ is the quantity $\ds
\SpreadX{\PntSet} = \pth{\max_{\pnt,\pntA \in \PntSet}
   \distX{\pnt}{\pntA} } / \pth{ \min_{\pnt,\pntA \in \PntSet, \pnt\ne
      \pntA} \distX{\pnt}{\pntA}}$.  We next describe a simple \WSPD
construction whose weight depends on the spread of the point set.  The
idea is to use a regular quadtree of height $O(\log \Spread)$ to
compute the \WSPD, and observe that each node participates in
$O(1/\eps^d)$ pairs.

\begin{lemma}
    Let $\PntSet$ be a set of $n$ points in $\Re^d$, with spread $
    \Spread = \SpreadX{\PntSet}$, and let $\eps > 0$ be a
    parameter. Then, one can compute a $(1/\eps)$-\WSPD (and thus a
    $(1/\eps)$-\SSPD) for $\PntSet$ of total weight $O(n \eps^{-d}
    \log \Spread)$. Furthermore, any point of $\PntSet$ participates
    in at most $O\pth{ \eps^{-d} \log \Spread}$ pairs.

    \lemlab{s:s:p:d:spread}
\end{lemma}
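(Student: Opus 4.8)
The plan is to build a single quadtree of height $O(\log\Spread)$, run the standard Callahan--Kosaraju \WSPD algorithm on it, and then carefully bound both the total weight and the per-point participation by a packing argument local to each level of the tree. First I would normalize: after scaling, assume the closest pair in $\PntSet$ is at distance $1$ and the diameter is $\Spread$, so all points lie in a box of side $O(\Spread)$. Build a (regular, not compressed) quadtree $\QTree$ on this box; since the minimum interpoint distance is $1$, every leaf cell containing more than one point has side $O(1)$, so we may stop subdividing at side-length $\Theta(1)$, giving $\QTree$ height $h = O(\log\Spread)$. For a node $u$ let $\PSetX{u}$ be the points in its cell and let $\ell_u$ be its side length.

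Next I would recall the \WSPD construction: call the recursive procedure on the root pair $(\mathrm{root},\mathrm{root})$; given a pair $(u,v)$, if $\PSetX{u}\otimes\PSetX{v}$ is $(1/\eps)$-well-separated, output $\{u,v\}$; otherwise recurse, splitting whichever of $u,v$ has the larger cell into its $2^d$ children and pairing each child with the other node. The standard invariant (see Callahan--Kosaraju~\cite{ck-dmpsa-95}) is that when $(u,v)$ is examined, the two cells have comparable diameter, $\mathrm{diam}(\Cell_u)=\Theta(\mathrm{diam}(\Cell_v))$, and are at distance $O(\eps^{-1})$ times that diameter; correctness and the disjointness/covering properties are exactly as in the classical argument and I would just cite them. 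The only new thing to argue is the quantitative bound, and here is where the key step lies.

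The main step is the following charging/packing claim: fix a node $u$ of $\QTree$ at level with side length $\ell_u$. Every pair $\{u,v\}$ output by the algorithm with $u$ as one endpoint has $v$ a node whose cell has side $\Theta(\ell_u)$ and whose cell is within distance $O(\eps^{-1}\ell_u)$ of $\Cell_u$; moreover the cells $\Cell_v$ over all such $v$ are interior-disjoint (they come from one level of the quadtree) and each has volume $\Theta(\ell_u^d)$. A volume argument then shows there are only $O(\eps^{-d})$ such $v$, i.e. each quadtree node participates in $O(\eps^{-d})$ output pairs. The same reasoning applied to the node $v$ on the other side shows the output pairs incident to a fixed $u$ number $O(\eps^{-d})$ — I would present this carefully since it is the crux. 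Now summing: a point $\pnt\in\PntSet$ lies in exactly one quadtree node per level, hence in $h+1 = O(\log\Spread)$ nodes total, and each of those nodes contributes $O(\eps^{-d})$ pairs containing $\pnt$, so $\pnt$ participates in $O(\eps^{-d}\log\Spread)$ pairs. This is the second assertion of the lemma.

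Finally, for the weight bound I would write $\WeightX{\WSPDRep}=\sum_{\{u,v\}}(\cardin{\PSetX{u}}+\cardin{\PSetX{v}}) = \sum_{\pnt\in\PntSet}(\text{number of pairs containing }\pnt)$, which by the previous paragraph is $\sum_{\pnt} O(\eps^{-d}\log\Spread) = O(n\eps^{-d}\log\Spread)$, as claimed; and since a $(1/\eps)$-\WSPD is a $(1/\eps)$-\SSPD by the remark following \defref{WSPD}, the \SSPD statement follows for free. The one delicate point, and the part I expect to need the most care, is pinning down the invariant that paired nodes have comparable cell size and bounded separation measured in units of that size, since that is exactly what converts "interior-disjoint cells in a bounded region" into the $O(\eps^{-d})$ packing bound; everything else is bookkeeping or a citation to the classical \WSPD analysis.
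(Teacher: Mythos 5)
Your proposal is correct and follows essentially the same route as the paper: a regular quadtree of height $O(\log \Spread)$, a packing argument showing each node participates in $O(\eps^{-d})$ pairs, and the observation that a point lies in only $O(\log\Spread)$ nodes. The only (inessential for this lemma) difference is that the paper modifies the Callahan--Kosaraju recursion to split \emph{both} nodes of an unseparated pair so that paired nodes sit at exactly the same quadtree level, whereas you use the standard split-the-larger rule and rely on paired cells being non-nested (hence interior-disjoint) and of comparable size.
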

\begin{proof}
    Build a regular (i.e., not compressed) quadtree for $\PntSet$ and
    observe that its height is $h = O(\log \Spread)$. To avoid some
    minor technicalities, if a leaf contains a point of $\PntSet$ and
    its height $<h$ then continue refining it till it is of height
    $h$.  Now, all the leafs of the quadtree that have a point of
    $\PntSet$ stored in them are of the same height $h$. Clearly, the
    time to construct this quadtree is $O\pth{ n \log \Spread}$

    Now, construct a $(1/\eps)$-\WSPD for $\PntSet$ using this
    quadtree, such that a node participates only in pairs with nodes
    that are of the exact same level in the quadtree.  This requires a
    slight modification of the algorithm of \cite{ck-dmpsa-95}, such
    that if it fails to separate the pair of nodes $\brc{u,v}$, then
    it recursively tries to separate all the children of $u$ from all
    the children of $v$, thus keeping the invariant that all the pairs
    considered involve nodes of the same level of the quadtree.

    We claim that every node participates in $O(\eps^{-d})$ pairs in
    the resulting \WSPD.  Indeed, a \WSPD pair $\brc{u,v}$ corresponds
    to two cells $\Cell_u$ and $\Cell_v$ that are cubes of the same
    size, such that $\DiamX{\Cell_u}/\eps \leq
    \DistSetX{\Cell_u}{\Cell_v} $. However, for such a pair, we must
    also have that $\DistSetX{\Cell_u}{\Cell_v} \leq 4d \cdot
    \DiamX{\Cell_u}/\eps$. Otherwise, the parents of $u$ and $v$ would
    be $(1/\eps)$-separated, and the algorithm would use them as a
    pair instead of $\brc{u, v}$.

    As such, all the pairs involving a node $u$ in the quadtree must
    use nodes in the same level of the quadtree, and they are in a
    ring of radius $\Theta( \DiamX{\Cell_u}/\eps)$ around it. Clearly,
    there are $O(1/\eps^d)$ such nodes. Implying that $u$ participates
    in at most $O(1/\eps^d)$ pairs.

    Now, a point $\pnt \in \PntSet$ participates in pairs involving
    nodes $v$, such that $v$ is on the path of $\pnt$ from its leaf to
    the root of the quadtree. As such, there are $O( \log \Spread)$
    such nodes, and $\pnt$ will appear in $U = O\pth{ \eps^{-d} \log
       \Spread}$ pairs in the \WSPD. This implies that the total
    weight of this \WSPD is $n U = O\pth{ n\eps^{-d} \log \Spread}$,
    as claimed.
\end{proof}

\begin{remark}
    \lemref{s:s:p:d:spread} implies the desired result (i.e., an \SSPD
    with low weight) if the spread of $\PntSet$ is polynomial in
    $\cardin{\PntSet}$.

    \remlab{low:spread:easy}
\end{remark}

\begin{remark}
    In \lemref{s:s:p:d:spread}, by forcing the $(1/\eps)$-\WSPD to
    form pairs only between nodes in the same level of the quadtree,
    the resulting \WSPD has the property that each node of the
    quadtree participates in $O(1/\eps^d)$ pairs.

    \remlab{few:pairs:per:node}
\end{remark}

\section{A simple construction of \SSPDs}
\seclab{simple}

We first describe a simple construction of \SSPDs for a point-set
$\PntSet$, which is suboptimal.

\begin{theorem}
    Let $\PntSet$ be a set of $n$ points in $\Re^d$, and let $\eps >
    0$ be a parameter. Then, one can compute a $(1/\eps)$-\SSPD for
    $\PntSet$ of total weight $O(n \eps^{-d} \log^2 n)$.

    \lemlab{easy}
\end{theorem}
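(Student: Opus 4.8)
The plan is to use the partition lemma (\lemref{ring:separator}) recursively to break the point set into pieces, building an \SSPD by combining \SSPDs for the pieces together with a small number of pairs that "cross" at each level of the recursion. The key idea for getting the $\log^2 n$ factor is that \lemref{ring:separator} hands us a ball $\BallC = \BallX{\pnt}{r}$ with three properties: it contains a constant fraction of the points (so recursing on $\Pin = \PntSet \cap \BallC$ makes progress), its complement outside $\BallX{\pnt}{2r}$ also contains a constant fraction (so recursing on $\Pout = \PntSet \setminus \BallX{\pnt}{2r}$ makes progress), and the thin ring between radius $r$ and $r(1+1/t)$ is lightly populated. I would first observe that once we have such a ball, the set $\Pin$ (inside radius $r$) and the set $\PntSet \setminus \BallX{\pnt}{r(1+1/t)}$ (outside the slightly larger radius) are geometrically well-placed: the inner set has diameter at most $2r$ while its distance to the outer set is at least $r/t$, so the pair $\brc{\Pin, \PntSet \setminus \BallX{\pnt}{r(1+1/t)}}$ is already $O(t)$-semi-separated — and taking $t = \Theta(1/\eps)$ makes it a valid \SSPD pair of total weight $O(n)$ covering all pairs with one endpoint strictly inside $\BallC$ and the other strictly outside the enlarged ball.

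Next I would handle the points left uncovered: pairs where at least one endpoint lies in the ring $\RingX{\pnt}{r}{r(1+1/t)}$, and pairs where both endpoints lie inside $\BallC$ or both lie outside $\BallX{\pnt}{2r}$. The ring has at most $n/2t$ points; for $t$ a constant (recall $\eps$ fixed), this is a constant fraction, not small enough to ignore directly. The standard trick here is to build a crude \SSPD between the ring points and \emph{all} of $\PntSet$ by a different mechanism — for instance, cover $\PntSet$ with $O(1)$ balls at each relevant scale and pair each ring point's cluster against them — contributing $O(n)$ weight per recursion level for the ring, hence $O(n \log n)$ over all levels of one recursive decomposition; but since each point may sit in the ring at $O(\log n)$ different recursive calls along its root-to-leaf path, this accumulates to $O(n \eps^{-d}\log^2 n)$ total. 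Then recurse on $\Pin$ and on $\Pout$: because both are a constant fraction smaller than $\PntSet$, the recursion has depth $O(\log n)$, and the per-level cost is $O(n\eps^{-d}\log n)$, again yielding total weight $O(n\eps^{-d}\log^2 n)$. Summing the contributions — the $O(n)$ crossing pair per node of the recursion tree, plus the ring-handling cost, plus the recursive calls — gives the claimed bound.

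I expect the main obstacle to be the ring. Unlike \lemref{small:ball}-style arguments where the "bad" set is a provably constant fraction that shrinks geometrically, here the ring can contain up to a $1/2t$ fraction of the points, which does not by itself let us charge it away cheaply in a single recursive step. The resolution is to not recurse on the ring but rather to "spend" it immediately: build a bounded-spread-style \SSPD (in the spirit of \lemref{s:s:p:d:spread} and \remref{few:pairs:per:node}, but localized to the $O(1)$ scales relevant near the ball $\BallC$) that pairs the ring points against the whole set, absorbing the $O(\eps^{-d})$ and one $\log n$ factor there, and the second $\log n$ from the recursion depth. Making precise that every original pair $\brc{\pnt,\pntA}$ is covered exactly once — either by the crossing pair at the first node where $\pnt$ and $\pntA$ are separated, or by the ring-\SSPD at some node where one of them is a ring point, or recursively inside $\Pin$ or $\Pout$ — is the bookkeeping that the formal proof must nail down, but it is routine given the three properties of \lemref{ring:separator}.
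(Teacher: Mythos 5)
There is a genuine gap, and it comes from missing the one trick the paper's proof turns on: invoking \lemref{ring:separator} with $t=n$ rather than with a constant $t=\Theta(1/\eps)$. With $t=n$ the ring $\RingX{\pnt}{r}{(1+1/2t)r}$ contains at most $\floor{n/2t}=\floor{1/2}=0$ points, i.e., it is \emph{empty}, so there is no ``ring set'' to dispose of at all; what the empty ring buys instead is a guaranteed gap $\ell \geq r/2n$ between $\Pin$ and everything outside $\BallX{\pnt}{r}$. That gap is exactly what makes the hard part tractable: the set $\Pin\cup\Pout$, where $\Pout$ is the annulus out to radius $2r/\eps$, has diameter $O(r/\eps)=O(n\ell/\eps)$ and all relevant interpoint distances at least $\ell$, so snapping it to a grid of side $\Theta(\eps\ell)$ yields spread $O(n/\eps^2)$ and \lemref{s:s:p:d:spread} applies, costing each point $O(\eps^{-d}\log n)$ pairs per level; recursing on $\Pin$ and $\Pout\cup\Pouter$ then gives the second $\log$. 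Your version, with $t$ constant, leaves a ring holding up to $n/2t=\Theta(\eps n)$ points, and the mechanism you sketch for it does not exist as described: an \SSPD pairing the ring against \emph{all} of $\PntSet$ must cover pairs at every distance scale of the unbounded-spread point set, not just ``$O(1)$ scales relevant near $\BallC$,'' so it cannot be built with $O(n)$ weight per level by covering $\PntSet$ with $O(1)$ balls.

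Second, the crossing pair you propose is separated in the wrong direction. For $\brc{\Pin,\; \PntSet\setminus\BallX{\pnt}{r(1+1/t)}}$ the quantity $\min\pth{\DiamX{\Pin},\cdot}$ can be as large as $2r$ while the distance between the two sets is only $r/t$; by \defref{separation} this pair is merely $(1/2t)$-semi-separated, and taking $t=\Theta(1/\eps)$ makes it $\Theta(\eps)$-semi-separated---the opposite of the required $(1/\eps)$. To make a single pair containing all of $\Pin$ on one side $(1/\eps)$-semi-separated, the other side must lie at distance $\Omega(r/\eps)$, which is precisely why the paper splits the exterior into $\Pout$ (up to radius $2r/\eps$, handled via the bounded-spread construction) and $\Pouter$ (beyond $2r/\eps$, where $\brc{\Pin,\Pouter}$ is a legitimate pair). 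Your partition also silently drops the points in the annulus between radii $r(1+1/t)$ and $2r$, which belong to none of $\Pin$, the ring, or $\Pout=\PntSet\setminus\BallX{\pnt}{2r}$. The recursive skeleton and the appeal to \lemref{s:s:p:d:spread} are the right instincts, but without the empty-ring/spread-reduction step the argument does not close.
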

\begin{proof}
    Using \lemref{ring:separator}, with $t=n$, we compute a ball
    $\BallX{\pnt}{r}$ that contains at least $n/c$ points of
    $\PntSet$,
    and such that $\RingX{\pnt}{r}{(1+1/2t)r}$ contains at most
    $\floor{ n/2t} = \floor{1/2} = 0$ points of $\PntSet$ (that is,
    this ring contains no point of $\PntSet$).

    \begin{figure}[t]%
        \centering
        \includegraphics[scale=0.9999]{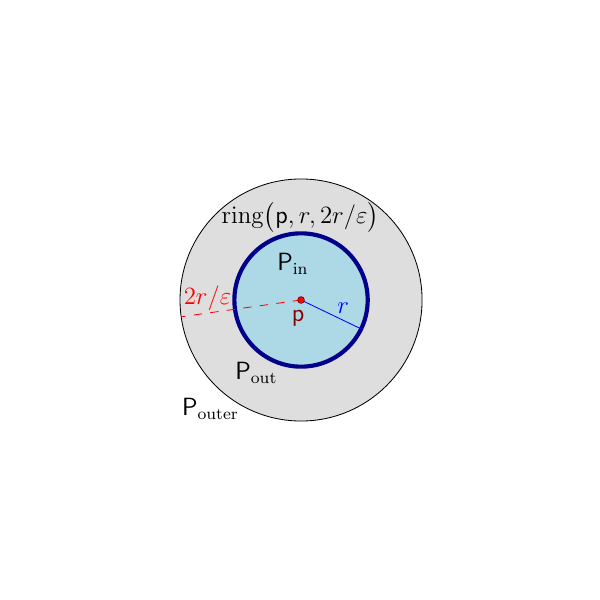}%
        \caption{}
        \figlab{d_s}
    \end{figure}

    Let $\Pin = \PntSet \cap \BallX{\pnt}{r}$, $\Pout = \PntSet \cap \RingX{\pnt}{r}{ 2r/\eps}$, and $\Pouter = \PntSet \setminus \pth[]{ \Pin \cup \Pout}$, see \figref{d_s}.  Clearly, $\brc{\Pin, \Pouter}$ is a $(1/\eps)$-semi-separated pair, which we add to our \SSPD. Let $\ell = \DistSetX{\Pin}{\Pout}$ and observe that $\ell \geq r/2n$.

    We would like to compute the \SSPD for all pairs of points in
    $X = \Pin \otimes \Pout$. Observe that none of these pairs has
    length smaller than $\ell$, and the diameter of the point-set
    $\PntSetA = \Pin \cup \Pout$ is
    $\DiamX{ \PntSetA } \leq 4 \ell n/\eps$. Thus, we can snap the
    point-set $\PntSetA$ to a grid of sidelength
    $\eps \ell / 2\sqrt{d}$. The resulting point-set $\PntSetA'$ has
    spread $O( n/\eps^2)$. Next, compute a $(2/\eps)$-\SSPD for the
    snapped point-set $\PntSetA'$, using the algorithm of
    \lemref{s:s:p:d:spread}. Clearly, the computed \SSPD when
    interpreted on the original point-set $\PntSetA$ would cover all
    the pairs of $X$, and it would provide an $(1/\eps)$-\SSPD for
    these pairs.  By \lemref{s:s:p:d:spread}, every point of
    $\PntSetA$ would participate in at most
    $O\pth{ \eps^{-d} \log(n/\eps)}=O\pth{ \eps^{-d} \log n }$ pairs.

    To complete the construction, we need to construct a
    $(1/\eps)$-\SSPD for the pairs $\Pin \otimes \Pin$ and
    $(\Pout \cup \Pouter) \otimes (\Pout \cup \Pouter)$. To this end,
    we continue the construction recursively on the point-sets $\Pin$
    and $\Pout \cup \Pouter$.

    In the resulting \SSPD, every point participates in at most %
    \begin{equation*}
        T(n)
        = 1 + O\pth{ \eps^{-d} \log n } + \max \pth{ T(n_1), T(n_2)},
    \end{equation*}
    where $n_1 = \cardin{\Pin}$ and $n_2 = \cardin{\Pout \cup
       \Pouter}$.  Since $n_1+ n_2 = n$ and $n_1, n_2\geq n/c$, where
    $c$ is some constant. It follows that $T(n) = O\pth{ \eps^{-d}
       \log^2 n}$.  Now, a point participates in at most $T(n)$ pairs,
    and as such the total weight of the \SSPD is $O(n T(n))$, as
    claimed.
\end{proof}

\section{An optimal construction of \SSPDs}
\seclab{optimal}

\subsection{The construction}
Let $\PntSet$ be a set of $n$ points in $\Re^d$.  If $n = O(1/\eps^d)$
then we compute a $(1/\eps)$-\WSPD of the point set and return it as
the \SSPD.  Otherwise, we compute a ball $\BallX{\pnt}{ r }$ that
contains at least $n/c$ points of $\PntSet$ and such that $\PntSet
\setminus \BallX{\pnt}{ 20r}$ contains at least $n/2$ of the points of
$\PntSet$, where $c$ is a sufficiently large constant that depends
only on the dimension $d$.

\begin{wrapfigure}{r}{0.3\linewidth}
    \centering
    \includegraphics[scale=0.7]{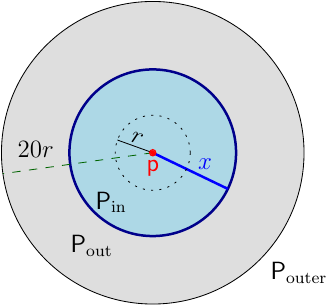}
\end{wrapfigure}
We randomly choose a number $x$ in the range $[5r, 6r]$, and consider the sets:
\begin{align*} \Pin = \PntSet \cap \BallX{\pnt}{ x}, %
  \qquad %
  &%
    \Pout = \PntSet \cap \RingX{\pnt}{x}{20r},%
  \\
  & \text{and} \qquad%
    \Pouter = \PntSet \setminus \pth[]{ \MakeSBig \Pin \cup \Pout}.
\end{align*}
We recursively compute an \SSPD for the set $\Pin$ and an \SSPD for the set $\Pout \cup \Pouter$.

It remains to separate all the pairs of points in $\Pin \otimes
\pth{\Pout \cup \Pouter}$. We do this in two steps, and merge all
these pair decompositions together to get the desired \SSPD of
$\PntSet$.

\subsubsection{Separating \TPDF{$\Pin$}{Pin} from \TPDF{$\Pouter$}{P outer}}

Partition the points of $\Pin$ into $O(1/\eps^d)$ clusters, such that
each cluster has diameter $\leq \eps r/20$. Clearly, we need
$m=O(1/\eps^d)$ such clusters $C_1'',\ldots, C_m''$. Now, since
$\DistSetX{C_i''}{\Pouter} \geq r$, it follows that $C_i''$ and
$\Pouter$ are $(1/\eps)$-semi-separated, for all $i$. Therefore, we
create the pair separating $C_i \otimes \Pouter$, for all $i$. Each
point of $\PntSet$ participates in $O(1/\eps^d)$ such pairs.

We will refer to all the pairs generated in this stage as being
\emphi{long} pairs.

\subsubsection{Separating \TPDF{$\Pin$}{pin} from \TPDF{$\Pout$}{P out}}

This is the more challenging partition to implement as there is no gap
between the two sets.  We build a quadtree $\QTree$ for $\PntSetB =
\Pin \cup \Pout = \PntSet \cap \BallX{\pnt}{20r}$ and compute a
$1/\rho$-\WSPD on this quadtree, where $\rho=\eps/4$.  Namely, a pair
in this decomposition is a pair of two nodes $u$ and $v$ in the
quadtree $\QTree$, such that $\DiamX{\Cell_u} = \DiamX{\Cell_v} \leq
(1 / \rho) \DistSetX{ \Cell_u}{\Cell_v}$, where $\Cell_u$ and
$\Cell_v$ denote the cells in $\Re^d$ that $u$ and $v$ corresponds to.
The construction outputs only the pairs $\{ u,v\}$ such that
$\PSetX{u} \otimes \PSetX{v}$ contains at least one pair of $\Pin
\otimes \Pout$ (i.e., the other pairs of this \WSPD are being
ignored).  The details of how to do this efficiently are described
next.

We  refer to all the pairs generated in this stage as being
\emphi{short} pairs.

\paragraph{On the fly computation of the quadtree.}
To do the above efficiently we do not compute the quadtree $\QTree$ in
advance.  Rather, we start with a root node containing all the points
of $\PntSetB$.  Whenever the algorithm for computing the pair
decomposition tries to access a child of a node $v$, such that $v$
exists in the tree but not its children, we compute the children of
$v$, and split the points currently stored in $v$ (i.e., $\PSetX{v}$)
into the children of $v$. For such newly created child $w$, we check
if $\PSetX{w} \subseteq \Pin$ or $\PSetX{w} \subseteq \Pout$ (and if
so, we turn on the relevant flags in $w$).  Then the regular execution
of the algorithm for computing a pair decomposition resumes.

\paragraph{On the fly pruning of pairs considered.}
Whenever the algorithm handles a pair of vertices $u,v$ of $\QTree$,
it first checks if $\PSetX{u}, \PSetX{v} \subseteq \Pin$ or
$\PSetX{u}, \PSetX{v} \subseteq \Pout$, and if so, the algorithm
returns immediately without generating any pair (i.e., all the pairs
that can be generated from this recursive call do not separate points
of $\Pin$ from $\Pout$ and as such they are not relevant for the task
at hand).  Using the precomputed flags at the nodes of $\QTree$ this
check can be done in constant time.  (A similar idea of doing on the
fly implicit construction of a quadtree while generating some subset
of a pair decomposition was used by Har-Peled \cite{h-pacdp-01}.)

\paragraph{Generating pairs that belong to the same level.}
Note that since we are using a \emph{regular} quadtree to compute the
pair decomposition, we can guarantee that any pair of nodes
$\brc{u,v}$ realizing a pair (i.e., $\PSetX{u} \otimes \PSetX{v}$)
belongs to the same level of the quadtree. Namely, the sidelength of
the two cells $\Cell_u$ and $\Cell_v$ is the same.  (It is easy to
verify that using a regular quadtree to construct \WSPD, instead of
compressed quadtrees, increases the number of pairs in the \WSPD only
by a constant factor.)

\paragraph{Cleaning up.}
Finally, we need to guarantee that only pairs in $\Pin \otimes \Pout$
are covered by this generated (partial) pair decomposition. To this
end, we use the splitting algorithm described in the proof of
\lemref{split:PD} to get this property.

\subsection{Analysis}

\subsubsection{Separating \TPDF{$\Pin$}{Pin}  from
   \TPDF{$\Pout$}{Pout}}

We first analyze the weight of the pairs in the \SSPD separating
$\Pin$ from $\Pout$. Let $\PntSetB = \PntSet \cap \BallX{\pnt}{ 20r}$,
and for a point $\pntA \in \PntSetB$, let $D_\pntA$ be the random
variable which is the (signed) distance of $\pntA$ from the boundary
of the ball $\BallX{\pnt}{x}$. Formally, we have
\[
D_\pntA =  {\dist{\pnt - \pntA} - x }.
\]
For each specific point $\pntA$, the random variable $D_\pntA$ is
uniformly distributed in an interval of length $r$ (note, that for
different points this interval is different).

\begin{claim}
    Consider a point $\pntA \in \PntSetB$, and a pair of nodes of the
    quadtree $\brc{u,v}$ that is in the generated \SSPD of $\Pin
    \otimes \Pout$ and such that $\pntA \in \PSetX{u}$. Then
    $\ds
    \depthX{u} = \depthX{v} \in
    \pbrcx{\lg \frac{1}{\eps}, \;\; \lg \frac{1}{\eps} + \beta +
       \lg \frac{r}{\cardin{D_\pntA}}}$,
    where $\beta$ is some constant. The level $\depthX{u}$ is
    \emphi{active} for $\pntA$, and the number of active levels in the
    quadtree for $\pntA$ is bounded by $\nu(\pntA) = 1 + \beta + \lg
    \frac{r}{\cardin{D_\pntA}}$.

    \clmlab{first}
\end{claim}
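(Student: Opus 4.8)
The plan is to pin down exactly which levels of the regular quadtree $\QTree$ on $\PntSetB$ can possibly host a \WSPD-pair $\{u,v\}$ that (a) separates a point of $\Pin$ from a point of $\Pout$, and (b) has $\pntA$ in $\PSetX u$. The key geometric fact is that such a pair must straddle the sphere $\partial\BallX{\pnt}{x}$: since $\PSetX u$ is not entirely inside $\Pin$ or entirely inside $\Pout$ only if the algorithm did not prune it, but more to the point, because $\{u,v\}$ survives the cleanup of Lemma \ref{lemma:split:PD}, one of the two cells $\Cell_u,\Cell_v$ meets $\BallX{\pnt}{x}$ and the other meets its complement (within $\BallX{\pnt}{20r}$). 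Combined with the \WSPD separation condition $\DiamX{\Cell_u}=\DiamX{\Cell_v}\le(1/\rho)\DistSetX{\Cell_u}{\Cell_v}$ and $\rho=\eps/4$, the distance between the two cells is at most $O(\DiamX{\Cell_u}/\eps)$, so both cells lie within distance $O(\DiamX{\Cell_u}/\eps)$ of the sphere. In particular the cell $\Cell_u$ containing $\pntA$ has diameter $s$ with $\cardin{D_\pntA}\le \DistSetX{\pntA}{\partial\BallX{\pnt}{x}}+\DiamX{\Cell_u} = O(s/\eps)$ — wait, rather $\cardin{D_\pntA}\le$ distance from $\pntA$ to the near side of $\Cell_u$ plus the reach across, which is $O(s/\eps)$; this yields the \emph{lower} bound on the depth (upper bound on $s$), because a cell at depth $j$ has diameter $s_j = \DiamX{\Cell_{\mathrm{root}}}/2^{j}$ and $\DiamX{\Cell_{\mathrm{root}}}=\Theta(r)$, so $s = \Theta(r)/2^{\depthX u}$ and $s = O(\eps\cardin{D_\pntA})$ forces $\depthX u \ge \lg(1/\eps) + \lg(r/\cardin{D_\pntA}) - O(1)$.

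Next I would establish the \emph{upper} bound on $\depthX u$ (equivalently, a lower bound on the cell size $s$). Here I use that the pair $\{u,v\}$ was actually generated, i.e.\ the parents $\parentX u,\parentX v$ were \emph{not} $(1/\rho)$-separated; this is the standard \WSPD argument, giving $\DistSetX{\Cell_u}{\Cell_v} = O(d\cdot\DiamX{\Cell_u}/\rho) = O(d s/\eps)$. Since $\pntA\in\Cell_u$ and $\Cell_u$ straddles or is adjacent to the sphere, and $\Cell_v$ lies on the far side, the distance from $\pntA$ to the sphere is at most $\DistSetX{\Cell_u}{\Cell_v} + O(s) = O(ds/\eps)$, i.e.\ $\cardin{D_\pntA} = O(ds/\eps)$, hence $s = \Omega(\eps\cardin{D_\pntA}/d)$. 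Translating to depth: $\depthX u = \lg(\Theta(r)/s) \le \lg(r/\cardin{D_\pntA}) + \lg(1/\eps) + O(\lg d) = \lg(1/\eps) + \lg(r/\cardin{D_\pntA}) + \beta$ for a suitable constant $\beta$ absorbing the $O(\lg d)$ and all the other additive constants. The two bounds together give the displayed interval $\depthX u\in[\lg\frac1\eps,\ \lg\frac1\eps+\beta+\lg\frac{r}{\cardin{D_\pntA}}]$, and $\depthX u=\depthX v$ is immediate from the "same level" property of the regular-quadtree \WSPD noted in the construction. The count of active levels is then just the length of this integer interval plus one, namely $\nu(\pntA)=1+\beta+\lg\frac{r}{\cardin{D_\pntA}}$.

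The main obstacle I anticipate is being careful about the bookkeeping of constants in the straddling argument — specifically, making sure that "$\Cell_u$ is within $O(s/\eps)$ of the sphere" is justified on \emph{both} ends (once from the separation inequality used for the pair itself, once from the non-separation of the parents), and that the additive $O(s)$ slack from the cell diameter, the $\Theta(r)$ vs.\ exact diameter of the root cell of a regular quadtree covering $\BallX{\pnt}{20r}$, and the dimension-dependent $\sqrt d$ factors can all be folded into a single universal constant $\beta$. A secondary point to handle cleanly: the lower endpoint $\lg(1/\eps)$ should be stated as holding up to rounding (the depth is an integer), and one should note $\cardin{D_\pntA}\le r$ always, so $\lg(r/\cardin{D_\pntA})\ge 0$ and the interval is non-degenerate. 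None of this is deep; it is purely a matter of writing the chain of inequalities in the right order and choosing $\beta$ at the end.
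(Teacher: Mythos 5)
Your second paragraph correctly establishes the \emph{upper} bound on $\depthX{u}$, and by essentially the paper's own route: since the pair $\brc{u,v}$ was actually generated, its parents are not $(1/\rho)$-separated, so $\DistSetX{\Cell_u}{\Cell_v} = O\pth{\DiamX{\Cell_u}/\eps}$; because the pair must cover a point of $\Pin$ and a point of $\Pout$, the sphere $\partial\BallX{\pnt}{x}$ passes within that distance of $\pntA$, giving $\cardin{D_\pntA} = O\pth{\DiamX{\Cell_u}/\eps}$, i.e.\ $\DiamX{\Cell_u} = \Omega\pth{\eps \cardin{D_\pntA}}$, which is the depth upper bound. (The paper states the same argument contrapositively: if $\DiamX{\Cell_u} \leq \eps\cardin{D_\pntA}/c$ then the parents would already be $(1/\rho)$-separated and the pair would never have been emitted.) The same-level property and the count $\nu(\pntA)$ are fine.

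The genuine gap is the \emph{lower} bound $\depthX{u} \geq \lg(1/\eps)$, which your first paragraph does not prove. There you derive (correctly, though with the wrong justification --- the bound $\DistSetX{\Cell_u}{\Cell_v} = O(\DiamX{\Cell_u}/\eps)$ comes from the \emph{non-separation of the parents}, not from the \WSPD separation condition, which bounds that distance from \emph{below}) the inequality $\cardin{D_\pntA} = O(s/\eps)$ with $s = \DiamX{\Cell_u}$, and then read it as ``$s = O(\eps\cardin{D_\pntA})$''. The direction is flipped: $\cardin{D_\pntA} = O(s/\eps)$ means $s = \Omega(\eps\cardin{D_\pntA})$, a lower bound on the cell size, i.e.\ exactly the \emph{upper} bound on the depth that you re-derive in the second paragraph. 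The conclusion you extract, $\depthX{u} \geq \lg\frac{1}{\eps} + \lg\frac{r}{\cardin{D_\pntA}} - O(1)$, is not only unproven but false: a point $\pntA \in \Pin$ lying very close to the sphere (so $\cardin{D_\pntA}$ is tiny) still participates in the pair that separates it from a point $\pntB \in \Pout$ at distance $\Theta(r)$; that pair uses cells of diameter $\Theta(\eps r)$ and hence sits at depth $\lg(1/\eps) + O(1)$, far below your claimed threshold. The correct argument for the lower bound is the one the paper gives, and it uses the separation condition in the direction it actually goes: any $(1/\rho)$-separated pair of cells of $\PntSetB$ satisfies $\DiamX{\Cell_u} \leq \rho\DistSetX{\Cell_u}{\Cell_v} \leq \rho\DiamX{\PntSetB}$, and cells that small in a quadtree whose root has diameter $\Theta(\DiamX{\PntSetB})$ live at depth at least $\lg(1/\rho) \geq \lg(1/\eps)$. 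With that one-line replacement for your first paragraph, the write-up is complete.
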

\begin{proof}
    Assume that $\pntA \in \Pin$ (a symmetric argument would work for
    the case that $\pntA \in \Pout$), and observe that $\pntA$ is in
    distance at least $\cardin{D_\pntA}$ from all the points of
    $\Pout$.  Consider any pair of nodes of the quadtree $u$ and $v$
    such that:
    \begin{compactenumi}
        \item $u$ and $v$ are in the same level in the  quadtree,
        \item $\pntA \in \PSetX{u}$,
        \item the cells of $u$ and $v$ have diameter
        $\Delta=\DiamX{\Cell_u} =\DiamX{\Cell_v} $, such that $\Delta
        \leq \eps \cardin{D_\pntA}/c$ (for $c$ to be specified
        shortly), and
        \item $\PSetX{v} \cap \Pout \ne \emptyset$.
    \end{compactenumi}
    Now, $\Cell_u$ and $\Cell_v$ have the same side
    length, and the distance between the two cells is at least
    \begin{align*}
        \DistSetX{ \Cell_u}{\Cell_v} &\geq \DistSetX{\pntA}{\Re^d
           \setminus \BallX{\pnt}{x}} - 2\Delta \geq \cardin{D_\pntA}
        - \frac{2\eps \cardin{D_\pntA}}{c}%
        = %
        \pth{1-\frac{2\eps}{c}} \frac{c}{\eps} \cdot \frac{\eps}{c}
        \cardin{D_\pntA}
        \geq %
        \frac{c-2\eps}{\eps} \Delta.
    \end{align*}
    Namely, $u$ and $v$ are $\alpha$--separated, for $\alpha
    =\pth{c-2\eps}/ \eps$.  In particular, by picking $c$ to be
    sufficiently large, we can guarantee that their respective parents
    $\parentX{u}$ and $\parentX{v}$ are $\alpha/4$-separated, and
    $\alpha/4 \geq 1 /\rho$. This implies that $\parentX{u}$ and
    $\parentX{v}$ are $1/\rho$-separated, and the algorithm would have
    included $\brc{\parentX{u}, \parentX{v}}$ in the \SSPD, never
    generating the pair $\brc{u,v}$.

    Namely, a node $u$ in the quadtree that contains $\pntA$ and also
    participates in an \SSPD pair, has $\DiamX{\Cell_u} \geq \eps
    \cardin{D_\pntA}/c$. As such, the node $u$ has depth at most
    \[
    O(1) + \lg \frac{\DiamX{\Pin \cup \Pout}}{\eps \cardin{D_\pntA}/c}
    = O(1) + \lg \frac{40r }{\eps \cardin{D_\pntA}/c} = O(1)+ \lg
    \frac{1}{\eps} + \lg \frac{r}{ \cardin{D_\pntA}},
    \]
    as claimed.

    As for the lower bound -- let $\DiamX{\PntSetB}$ denote the
    diameter of $\DiamX{\PntSetB}$. To get a $(1/\rho)$-separation of
    any two points of $\PntSetB$, one needs to use cells with diameter
    $\leq \rho \,\DiamX{\PntSetB}$. The depth of such nodes in the
    quadtree is $\ds \geq \ceil{\lg \frac{\DiamX{\PntSetB}}{{\rho \,
             \DiamX{\PntSetB}} }} \geq \lg \frac{1}{\eps}$, as
    claimed.
\end{proof}

\begin{claim}
    For a point $\pntA \in \PntSetB$ we have that $\Ex{ \lg \ds
       \frac{r}{\cardin{D_\pntA}}} = O(1)$.

    \clmlab{second}
\end{claim}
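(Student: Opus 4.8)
The plan is to bound the expectation $\Ex{\lg(r/\cardin{D_\pntA})}$ by exploiting the fact that $D_\pntA$ is uniformly distributed on an interval of length $r$. Recall that $x$ was chosen uniformly in $[5r,6r]$, so $D_\pntA = \dist{\pnt-\pntA} - x$ ranges over an interval $I_\pntA$ of length $r$ as $x$ varies. Thus $\cardin{D_\pntA}/r$ is the absolute value of a random variable that is uniform on an interval of length $1$ (after rescaling by $r$). The key point is that for \emph{any} fixed interval $J \subseteq \Re$ of length $1$, if $Z$ is uniform on $J$, then $\Pr[\cardin{Z} \le s] \le 2s$ for all $s \ge 0$ (the set $\{|z| \le s\}$ has length at most $\min(2s,1) \le 2s$ and the density of $Z$ is $1$ on $J$). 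Hence $\cardin{D_\pntA}/r$ stochastically dominates a random variable $Y$ with $\Pr[Y \le s] = \min(2s,1)$.

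First I would reduce to the cleanest form: write $W = \cardin{D_\pntA}/r \in (0,1]$ (it is positive with probability $1$), so that $\lg(r/\cardin{D_\pntA}) = \lg(1/W) = -\lg W \ge 0$ whenever $W \le 1$. Next I would integrate the tail: for a nonnegative random variable, $\Ex{-\lg W} = \int_0^\infty \Pr[-\lg W \ge u]\, du = \int_0^\infty \Pr[W \le 2^{-u}]\, du$. Using the bound $\Pr[W \le s] \le 2s$ established above, we get $\Pr[W \le 2^{-u}] \le 2^{1-u}$, and therefore
\begin{equation*}
    \Ex{\lg \frac{r}{\cardin{D_\pntA}}} \;=\; \int_0^\infty \Pr\bigl[W \le 2^{-u}\bigr]\, du \;\le\; \int_0^\infty 2^{1-u}\, du \;=\; \frac{2}{\ln 2} \;=\; O(1),
\end{equation*}
which is exactly the claimed bound. (One can instead split the integral at $u=1$, bounding the probability by $1$ for $u \le 1$ and by $2^{1-u}$ for $u > 1$, to avoid worrying about the regime where $2^{-u} > 1/2$; this gives the slightly cleaner bound $1 + 1/\ln 2$.)

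The one subtlety to nail down — and the only place any care is needed — is the claim that $\cardin{D_\pntA}$ scaled to a unit interval has the stated tail bound \emph{uniformly in $\pntA$}, since the interval $I_\pntA$ depends on $\pntA$ and need not be symmetric about $0$ or even contain $0$. This is handled by the observation that the density of $D_\pntA$ is $1/r$ on $I_\pntA$ and $0$ elsewhere, so $\Pr[\cardin{D_\pntA} \le s] = \mathrm{length}(I_\pntA \cap [-s,s])/r \le (2s)/r$ regardless of the location of $I_\pntA$; the bound only gets better when $0 \notin I_\pntA$. Everything else is a routine tail integration, so I do not expect a genuine obstacle here — the statement is essentially the assertion that a bounded uniform variable is logarithmically integrable near any point, with a constant independent of where the interval sits.
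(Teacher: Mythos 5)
Your proof is correct and follows essentially the same route as the paper: both hinge on the observation that $D_\pntA$ has density $1/r$ on an interval of length $r$, so $\Prob{\cardin{D_\pntA}\le s}\le 2s/r$ regardless of where that interval sits, and then integrate the logarithm against this tail bound (the paper via the dyadic sum $\Prob{i\le \lg(r/\cardin{D_\pntA})\le i+1}\le 2^{-i}$ summed as $\sum_i (i+1)2^{-i}$, you via a continuous tail integral). The only nit is your parenthetical claim that $\cardin{D_\pntA}/r\in(0,1]$ almost surely, which fails for points $\pntA$ far from the sphere of radius $x$ (e.g.\ $\dist{\pnt-\pntA}$ near $0$ or near $20r$ forces $\cardin{D_\pntA}\ge 5r$), but there $\lg(r/\cardin{D_\pntA})$ is negative, so dropping the negative part only strengthens the upper bound and the argument stands.
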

\begin{proof}
    Let $I$ be the interval of length $r$, such that $D_\pntA$ is
    distributed uniformly in $I$. Clearly $i \leq \lg \frac{r}{
       \cardin{D_\pntA}} \leq i+1$ if and only if $r2^{-i-1} \leq
    \cardin{D_\pntA} \leq r2^{-i}$. Namely, $D_\pntA \in J_i$ or
    $D_\pntA \in - J_{i}$, where $J_i = \pbrc{ r2^{-i-1} , r2^{-i} }$.
    Therefore,
    \begin{align}
        \Prob{ i \leq \lg \frac{r}{\cardin{D_\pntA}} \leq i+1 } \leq
        \frac{2\cardin{J_i}}{r} \leq \frac{1}{2^i},
        \eqlab{geometric}
    \end{align}
    and $
    \ds \Ex{ \lg \frac{r}{\cardin{D_\pntA}} } \leq \sum_{i=0}^\infty
    (i+1) \cdot \frac{1}{2^i} = O(1)$.
\end{proof}

\begin{lemma}
    The expected total weight of the pairs of the \SSPD of $\Pin
    \otimes \Pout$ is $O(n /\eps^d)$.

    \lemlab{size:p:in:p:out}
\end{lemma}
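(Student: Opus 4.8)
The plan is to bound the weight by summing, over all points $\pntA \in \PntSetB$, the number of \SSPD pairs of $\Pin \otimes \Pout$ in which $\pntA$ participates (viewed through the quadtree node containing it), and then take expectations. By \clmref{first}, every quadtree node $u$ with $\pntA \in \PSetX{u}$ that participates in a short pair lies on one of the $\nu(\pntA) = 1 + \beta + \lg\frac{r}{\cardin{D_\pntA}}$ active levels for $\pntA$. On each such level, $\pntA$ lies in exactly one cell $\Cell_u$, and by \remref{few:pairs:per:node} (applied with $\rho = \eps/4$, i.e. $\eps$ replaced by $\Theta(\eps)$ in the construction via a regular quadtree) that cell participates in at most $O(\rho^{-d}) = O(\eps^{-d})$ pairs of the underlying $1/\rho$-\WSPD. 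Hence $\pntA$ participates in at most $O(\eps^{-d}) \cdot \nu(\pntA)$ short pairs. The cleaning-up step (splitting via \lemref{split:PD}, property (A)) does not increase this count.

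The next step is to sum and take expectations. The contribution of $\pntA$ to the total weight is $O(\eps^{-d})\cdot\ExC[\nu(\pntA)]$, and by \clmref{second}, $\ExC[\nu(\pntA)] = 1 + \beta + O(1) = O(1)$. Therefore the expected total weight is
\[
\sum_{\pntA \in \PntSetB} O(\eps^{-d}) \cdot \ExC\!\left[\nu(\pntA)\right]
= \sum_{\pntA \in \PntSetB} O(\eps^{-d})
= O\!\pth{ \cardin{\PntSetB}/\eps^d } = O\!\pth{ n/\eps^d },
\]
since $\cardin{\PntSetB} \leq n$. This gives the claimed bound.

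A small technical point I would need to address carefully is that the weight of a pair decomposition counts $\cardin{\SetA} + \cardin{\SetB}$ over pairs, which equals exactly the number of (point, pair)-incidences; so bounding, for every point, the number of pairs it participates in and summing is precisely the weight — no double counting or loss. I also need that the on-the-fly pruning and the restriction to pairs actually meeting $\Pin \otimes \Pout$ only \emph{removes} pairs, so the per-point participation count is still dominated by the $O(\eps^{-d})\nu(\pntA)$ bound coming from the full $1/\rho$-\WSPD on $\PntSetB$. The only genuinely delicate ingredient is \clmref{first}'s guarantee that participating nodes are confined to $O(1) + \lg\frac{r}{\cardin{D_\pntA}}$ consecutive levels — but that is already established — so the remaining work here is just the linearity-of-expectation bookkeeping above; I expect no real obstacle.
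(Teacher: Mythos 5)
Your proposal is correct and follows essentially the same route as the paper's proof: bound the active levels for each point via \clmref{first}, multiply by the $O(\eps^{-d})$ pairs per node from \remref{few:pairs:per:node}, take expectations using \clmref{second}, and sum over the points of $\PntSetB$. The extra remarks about weight equaling point--pair incidences and about pruning only removing pairs are sound bookkeeping that the paper leaves implicit.
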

\begin{proof}
    By \clmref{first} we have that every point $\pntA \in \PntSetB$
    (in expectation) participates in nodes that have depth in the
    range $\pbrcx{ \lg \frac{1}{\eps}, X + O(1) + \lg
       \frac{1}{\eps}}$, where $X = \lg \frac{r}{\cardin{D_\pntA}}$.
    Now, since every node of the regular quadtree participates in at
    most $O(1/\eps^d)$ \WSPD pairs (in the same level, see
    \remref{few:pairs:per:node}), we conclude that $\pntA$
    participates in $O((1 + X)/\eps^d)$ pairs.  By \clmref{second}, we
    have that $O\pth {\Ex{ \MakeBig (1 + X)/\eps^d } } =
    O(1/\eps^d)$. The claim now follows by summing this up over all
    the points of $\PntSetB$.
\end{proof}

\begin{lemma}
    The expected time to compute pairs of the \SSPD of $\Pin \otimes
    \Pout$ is $O(n/\eps^d)$.

    \lemlab{in:out}
\end{lemma}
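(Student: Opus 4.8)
The plan is to show that the running time is dominated, up to constant factors, by the same quantity that controls the weight in \lemref{size:p:in:p:out}, namely the total number of (node, active-level) incidences, plus an $O(n)$ overhead for building the quadtree on the fly. Concretely, I would argue that the work done by the algorithm can be charged to: (a) the cost of creating the quadtree nodes that are actually visited and distributing points into them; (b) the cost of the recursive \WSPD-style calls that are made on pairs of nodes; and (c) the final cleanup via \lemref{split:PD}. I would handle these three in order.

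First, for (a): every quadtree node that is ever created is touched only because some pair-decomposition recursive call tried to access it, and each such node lies on a path from the root to a leaf containing some point $\pntA\in\PntSetB$. By \clmref{first}, the only levels at which a node containing $\pntA$ can participate in a generated pair (short pair) are the active levels for $\pntA$, of which there are $\nu(\pntA)=1+\beta+\lg\frac{r}{\cardin{D_\pntA}}$ in expectation $O(1)$ by \clmref{second}; moreover the on-the-fly pruning rule guarantees we never descend into a subtree whose point set is entirely inside $\Pin$ or entirely inside $\Pout$, so every node we create genuinely straddles the boundary and hence sits at (or just above) an active level for each of its points. Splitting a node's point set among its $2^d$ children costs $O(2^d)$ per point, i.e.\ $O(1)$ per point for fixed $d$ (absorbed into the $\eps^{-d}$ slack), so the total node-creation and point-distribution cost is $O\pth{\sum_{\pntA\in\PntSetB}\nu(\pntA)}$, which is $O(n/\eps^d)$ in expectation after applying \clmref{second} and summing over the at most $n$ points of $\PntSetB$ — exactly as in the proof of \lemref{size:p:in:p:out}.

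Second, for (b): the recursive pair-separation procedure is the standard \WSPD recursion on a regular quadtree, where at each call on $\brc{u,v}$ we either (i) stop because the pair is $1/\rho$-separated, or the pruning flag fires, or one side is empty, outputting $O(1)$ work, or (ii) recurse on the $O(1)$ (for fixed $d$) child-pairs. As in the classical Callahan–Kosaraju analysis, the number of recursive calls is proportional to the number of generated pairs plus the number of visited nodes, and each generated short pair $\brc{u,v}$ charges $O(1)$ to each of $u,v$ at an active level of their points; summing gives again $O\pth{\sum_{\pntA}\nu(\pntA)/\eps^d}=O(n/\eps^d)$. Finally, for (c), \lemref{split:PD}(E) says the splitting/cleanup step runs in time linear in the weight of the decomposition it is applied to, and by \lemref{size:p:in:p:out} that weight is $O(n/\eps^d)$ in expectation, so this step adds only $O(n/\eps^d)$. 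Adding the $O(n)$ cost to set up the root node of $\QTree$ and compute the initial $\Pin/\Pout$ flags completes the bound.

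The main obstacle I anticipate is (b): carefully justifying that the on-the-fly, pruned \WSPD recursion does not do asymptotically more work than a standard \WSPD computation — in particular, that visiting a node never costs more than $O(1)$ amortized beyond the pairs and point-moves already accounted for, and that the pruning via precomputed flags really lets each internal recursive call that produces no pair be charged to a descendant call that does (or to a leaf), so there is no "wasted" branching blowup. This is the point where I would invoke the fact that the regular quadtree increases the pair count by only a constant factor and reuse the structure of the proof of \lemref{size:p:in:p:out} essentially verbatim, replacing "participates in a pair" by "is visited by the recursion."
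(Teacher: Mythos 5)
Your overall plan is the same as the paper's: charge the construction of $\QTree$ to the depths of the points (bounded via \clmref{first} and \clmref{second}), and charge the pair-generation work to the $O(1/\eps^d)$ same-level pairs each visited node can be involved in. However, there are two problems, the second of which is a genuine gap sitting exactly where you flagged your ``main obstacle.'' First, your claim that the on-the-fly pruning guarantees ``every node we create genuinely straddles the boundary'' is false: the pruning test fires on a \emph{pair} $\brc{u,v}$ only when both $\PSetX{u},\PSetX{v}\subseteq\Pin$ or both are inside $\Pout$, so nodes lying entirely inside $\Pin$ (say) are still created and still participate in many considered calls. Consequently the node-creation cost is not $O\pth{\sum_\pntA \nu(\pntA)}$ but $O\pth{\sum_\pntA Y_\pntA}$ where $Y_\pntA = O(1)+\lg\frac{1}{\eps}+\lg\frac{r}{\cardin{D_\pntA}}$ is the maximum depth of $\pntA$ in $\QTree$; this is $O(n\log(1/\eps))$ in expectation, which still fits inside $O(n/\eps^d)$, so here only the intermediate claim is wrong.

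The real gap is in your part (b). \clmref{first} bounds the levels at which a node containing $\pntA$ can \emph{generate} a pair, not the levels at which it is \emph{visited}: the recursion must descend through the top $\alpha=\floor{\lg 1/\eps}$ levels before any pair can be $(1/\rho)$-separated, and at each of those levels a visited node is involved in up to $O(1/\eps^d)$ considered (and possibly pruned, hence non-productive) calls. Charging each such call to a descendant call that produces a pair does not work, because the leaves of the recursion tree include all the pruned calls, and there can be $\Theta(1/\eps^d)$ of those per visited node; a per-point charge over all visited levels therefore gives only $O\pth{n\eps^{-d}\log(1/\eps)}$, a $\log(1/\eps)$ factor too large. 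The paper closes this hole with a separate \emph{global} bound for the top $\alpha$ levels: the quadtree has only $O(2^{d\alpha})=O(1/\eps^{d})$ nodes there, each involved in $O(1/\eps^d)$ considered pairs, for total work $O(1/\eps^{2d})=O(n/\eps^d)$ --- crucially using that this subroutine is only invoked when $n=\Omega(1/\eps^d)$. Below level $\alpha$ the per-point work is $O\pth{(Y_\pntA-\alpha)/\eps^d}$, whose expectation is $O(1/\eps^d)$ by \clmref{second}, giving $O(n/\eps^d)$ overall. You need both halves of this split; neither your active-level charge nor your descendant-charging scheme supplies the first half. (Your part (c), invoking \lemref{split:PD} and \lemref{size:p:in:p:out} for the cleanup, is fine and is a reasonable addition the paper leaves implicit.)
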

\begin{proof}
    We break the running time analysis into two parts. First, we bound
    the time to compute (the partial) quadtree $\QTree$. Observe that
    this can be charged to the time spend moving the points down the
    quadtree.  Arguing as in \clmref{first}, a point $\pntA \in
    \PntSetB$ is contained in pairs considered by the algorithm with
    maximum depth $\ds Y_\pntA = O(1) + \lg \frac{1}{\eps} + \lg
    \frac{r}{\cardin{D_\pntA}}$. In particular, the maximum depth of
    $\pntA$ in $\QTree$ is $Y_\pntA+1$. Indeed, $\pntA$ is pushed down
    the quadtree only when the algorithm is trying to separate a pair
    of nodes $\{u, v\}$ such that $\pntA \in \PSetX{u}$ or $\pntA \in
    \PSetX{v}$. As such, the expected time to compute $\QTree$ is
    proportional to $\Ex{\sum_{\pntA \in \PntSetB} Y_\pntA}$, which by
    \clmref{second}, and linearity of expectations, is $O(n \log
    (1/\eps))$.

    Secondly, we need to bound the time it takes to generate the pairs
    themselves.  First, observe that the algorithm considers only
    pairs of nodes that are in the same level of the quadtree.  Now,
    for a specific node $u$ of the quadtree $\QTree$ the total number
    of nodes participating in pairs considered (by the algorithm),
    that include $u$ and are in the same level as $u$ is
    $O(1/\eps^d)$.  In particular, we bound by $O(1/\eps^{2d}) =
    O(n/\eps^d)$ the total time spend by the algorithm in handling
    pairs that are in the top $\alpha = \floor{\lg 1/\eps}$ levels of
    the quadtree.

    To bound the remaining work, consider a point $\pntA$ and all the
    recursive calls in the algorithm that consider nodes that contain
    $\pntA$. The total recursive work that $\pntA$ is involved in is
    bounded by $O(Y_\pntA/\eps^d)$. However, by the above, we can
    ignore the work involved by the top $\alpha$ levels.  As such, the
    total work in identifying the generated pairs involving $\pntA$ is
    bounded by $O((Y_\pntA -\alpha)/\eps^d)$. And in expectation, by
    \clmref{second}, this is $O( 1/\eps^d)$, and $O(n/\eps^d)$
    overall.
\end{proof}

\subsubsection{Bounding the total weight}

\begin{lemma}
    In the \SSPD computed, every point participates in $O( \eps^{-d}
    \log n)$ long pairs.

    \lemlab{single:point:long}
\end{lemma}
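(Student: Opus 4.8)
The plan is to trace where ``long'' pairs come from and show that, along a single recursive branch of the construction, each point accumulates only $O(\eps^{-d})$ long pairs per recursion level, while the recursion depth is $O(\log n)$. Recall that long pairs are generated only in the ``Separating $\Pin$ from $\Pouter$'' stage: there we partition $\Pin$ into $m = O(1/\eps^d)$ clusters $C_1'', \ldots, C_m''$ of diameter $\leq \eps r/20$ and create one pair $C_i'' \otimes \Pouter$ for each $i$. So in a single invocation of the construction on a point set of size $n$, a point $\pntA$ participates in at most $O(1/\eps^d)$ long pairs: if $\pntA \in \Pin$ it lies in exactly one cluster $C_i''$, hence in at most one pair from this stage as a ``$\Pin$-side'' participant, and if $\pntA \in \Pouter$ it participates in all $m = O(1/\eps^d)$ pairs as a ``$\Pouter$-side'' participant. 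Either way the per-invocation contribution is $O(1/\eps^d)$.

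Next I would bound the number of invocations that touch a fixed point $\pntA$. The construction recurses on $\Pin$ and on $\Pout \cup \Pouter$, which are disjoint and whose union is $\PntSet$; moreover, by the choice of the ball $\BallX{\pnt}{r}$ via \lemref{small:ball} (with $\mu = 20$), we have $\cardin{\Pin} \geq \cardin{\BallX{\pnt}{r}\cap \PntSet}\ge n/c$ and $\cardin{\Pout \cup \Pouter} = \cardin{\PntSet \setminus \BallX{\pnt}{x}} \ge \cardin{\PntSet \setminus \BallX{\pnt}{20r}} \geq n/2$. Hence both children have size at least $n/c$ and at most $(1 - 1/c)n$ (for a suitable constant $c$ depending only on $d$). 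A fixed point $\pntA$ belongs to exactly one of the two recursive subproblems, so the sizes of the subproblems containing $\pntA$ form a strictly decreasing sequence $n = n_0 > n_1 > \cdots$ with $n_{j+1} \leq (1 - 1/c) n_j$, and the recursion bottoms out once the size drops to $O(1/\eps^d)$. Therefore $\pntA$ is involved in $O(\log n)$ invocations of the construction (the constant absorbing the $\eps$-dependence of the base case into the $O(\log n)$, or more precisely $O(\log(n\eps^d)) = O(\log n)$ levels).

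Combining the two bounds: along the unique root-to-leaf path of recursive subproblems containing $\pntA$, each of the $O(\log n)$ levels contributes $O(1/\eps^d)$ long pairs, for a total of $O(\eps^{-d}\log n)$ long pairs in which $\pntA$ participates. The base case produces a $(1/\eps)$-\WSPD of an $O(1/\eps^d)$-point set and hence contributes only $O(1/\eps^d)$ further pairs, which is absorbed. I do not expect a genuine obstacle here; the only point needing a little care is making sure the ``$\Pouter$-side'' count is still $O(1/\eps^d)$ rather than something larger — but it is, since the number of clusters $m$ is $O(1/\eps^d)$ by a standard packing argument (a ball of radius $r$ in $\Re^d$, or in a doubling metric, is covered by $O(1/\eps^d)$ sets of diameter $\eps r/20$). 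Everything else is just the geometric decay of the subproblem sizes, which is already guaranteed by \lemref{small:ball}.
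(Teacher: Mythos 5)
Your proof is correct and follows essentially the same route as the paper's (which is just a terse three-sentence version of the same argument): $O(1/\eps^d)$ long pairs per invocation --- one as a cluster member of $\Pin$, or all $O(1/\eps^d)$ as a member of $\Pouter$ --- times the $O(\log n)$ recursion levels along the unique branch containing the point. The extra care you take with the base case is harmless but unnecessary, since by definition long pairs arise only in the $\Pin$--$\Pouter$ separation stage.
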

\begin{proof}
    The depth of the recursion is $O( \log n)$. A point is being sent
    only to a single recursive call. Furthermore, at each level of the
    recursion, a point might participate in at most $O( \eps^{-d})$
    long pairs.
\end{proof}

\begin{lemma}
    In the \SSPD computed, every point participates in $O( \eps^{-d}
    \log n)$ short pairs, both in expectation and with high
    probability.

    \lemlab{single:point:short}
\end{lemma}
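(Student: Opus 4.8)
The plan is to analyze how many short pairs a fixed point $\pntA$ participates in across the entire recursion, and then argue that this sum is $O(\eps^{-d}\log n)$ both in expectation and with high probability. The first observation is that, as in \lemref{single:point:long}, the recursion has depth $O(\log n)$ (since each recursive call on $\Pin$ or $\Pout \cup \Pouter$ shrinks the point count by a constant factor, because the computed ball contains at least $n/c$ points and its complement outside $\BallX{\pnt}{20r}$ contains at least $n/2$ points), and a point $\pntA$ is passed to exactly one of the two recursive subproblems at each level. So $\pntA$ is involved in at most $O(\log n)$ invocations of the ``separate $\Pin$ from $\Pout$'' procedure over the whole construction. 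For each such invocation, I would bound the number of short pairs that $\pntA$ contributes to.

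Within a single invocation where $\pntA \in \PntSetB = \PntSet \cap \BallX{\pnt}{20r}$, \clmref{first} tells us that $\pntA$ participates in quadtree nodes whose depth lies in a window of $\nu(\pntA) = 1 + \beta + \lg\frac{r}{\cardin{D_\pntA}}$ active levels, and by \remref{few:pairs:per:node} each node of the regular quadtree participates in only $O(\eps^{-d})$ same-level \WSPD pairs. Hence in this single invocation $\pntA$ is in $O(\eps^{-d}\,\nu(\pntA)) = O\pth{\eps^{-d}\pth{1 + \lg\frac{r}{\cardin{D_\pntA}}}}$ short pairs. Summing over the $O(\log n)$ invocations $j$ that touch $\pntA$, the total number of short pairs containing $\pntA$ is
\[
S(\pntA) = O\!\pth{\eps^{-d}\sum_{j} \pth{1 + \lg \frac{r_j}{\cardin{D_\pntA^{(j)}}}}},
\]
where $D_\pntA^{(j)}$ is the signed-distance random variable from the $j$th invocation and $r_j$ the corresponding radius. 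The expectation bound is then immediate: there are $O(\log n)$ terms, each with expectation $O(1)$ by \clmref{second} (applied to the conditional distribution at that invocation — the analysis of \clmref{second} only uses that $D_\pntA^{(j)}$ is uniform on an interval of length $r_j$, which holds regardless of the earlier random choices), so by linearity $\Ex{S(\pntA)} = O(\eps^{-d}\log n)$.

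For the high-probability statement, the point is that $S(\pntA)$ is (up to the $O(\eps^{-d})$ factor) a sum of $O(\log n)$ independent random variables $X_j = 1 + \lg\frac{r_j}{\cardin{D_\pntA^{(j)}}}$ — independent because the random cut value $x$ chosen at each recursive node is drawn independently — each nonnegative with $\Ex{X_j} = O(1)$ and, by \Eqref{geometric}, a geometrically decaying upper tail: $\Prob{X_j \ge i} \le 2^{-(i-2)}$ or so. Such variables have sub-exponential tails, so a Chernoff/Bernstein-type bound for sums of independent sub-exponential variables applies: $\Prob{\sum_j X_j \ge C\log n} \le n^{-\Omega(1)}$ for a suitable constant $C$. (Alternatively, since each $X_j$ is stochastically dominated by $2 + $ a geometric variable, one can bound $\sum_j X_j$ by $O(\log n) + \text{Negative Binomial}$, and a standard tail bound for negative binomials gives the same conclusion.) Multiplying by the $O(\eps^{-d})$ pairs-per-node factor yields that $\pntA$ participates in $O(\eps^{-d}\log n)$ short pairs with probability $\ge 1 - n^{-\Omega(1)}$; a union bound over the $n$ points then makes it hold simultaneously for all points with high probability.

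The main obstacle I anticipate is making the independence/conditioning argument fully rigorous: the radii $r_j$, the point sets $\PntSetB$ at each level, and even which invocations touch $\pntA$ are themselves determined by earlier random choices, so $X_j$ is not literally a fixed random variable but one whose distribution depends on the history. The clean way to handle this is to condition on the entire sequence of balls/radii produced by \lemref{ring:separator}-style computations down $\pntA$'s recursion path — these are computed deterministically (or with independent auxiliary randomness) from the point sets, \emph{before} the cut value $x$ at each level is chosen — and observe that conditioned on this history, the cut values $x_j$ are still independent uniform draws, so each $D_\pntA^{(j)}$ is conditionally uniform on an interval of length $r_j$ with the tail bound \Eqref{geometric} intact. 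Then \clmref{second} and the sub-exponential concentration go through conditionally, and since the conclusion $O(\eps^{-d}\log n)$ does not depend on the conditioning, it holds unconditionally. Everything else is routine given \clmref{first}, \clmref{second}, \remref{few:pairs:per:node}, and \lemref{single:point:long}.
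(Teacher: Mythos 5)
Your proof is correct and follows essentially the same route as the paper's: decompose the count by recursion level, bound the per-level contribution by $O(\eps^{-d}(1+Z_k))$ via \clmref{first} and \remref{few:pairs:per:node}, note that each $Z_k$ is dominated by a geometric variable via \clmref{second} and \Eqref{geometric}, and conclude with a Chernoff-type bound for independent geometric summands across levels. Your extra care about conditioning on the recursion history to justify independence, and the final union bound over all points, are refinements the paper states only implicitly.
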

\begin{proof}
    Consider a point $\pntA \in \PntSet$, and let $X_k$ be the number
    of short pairs it appears in when considering the subproblem
    containing it in the $k$\th level of the recursion.  By
    \clmref{first} and \clmref{second}, we have that
    \[
    X_k = O\pth{\frac{\nu(\pntA)}{\eps^d}}%
    =%
    O\pth{ \frac{1}{\eps^d} \pth{ {1 + \lg
             \frac{r}{\cardin{D_\pntA}}}} }%
    =%
    O\pth{ \frac{1+Z_k}{\eps^d}},
    \]
    where $Z_k= \lg \frac{r}{\cardin{D_\pntA}}$ is dominated by a
    geometric variable with expectation $O(1)$ (see
    \Eqref{geometric}). Therefore, the number of pairs $\pntA$
    participates in is bounded by a sum of $h$ geometric variables,
    each one with expectation $O(1/\eps^d)$, where $h=O(\log n)$ is
    the depth of the recursion. These variable arise from different
    levels of the recursion, and are as such independent.  Now, there
    are Chernoff type inequalities for such summations that
    immediately imply the claim---see \cite{mr-ra-95}.
\end{proof}

\begin{lemma}
    Given a point-set $\PntSet$ in $\Re^d$, and parameter $\eps > 0$,
    one can compute, in expected time $O(n \eps^{-d} \log n )$, an
    \SSPD of $\PntSet$ of total expected weight $O\pth{n\eps^{-d} \log
       n}$.  Furthermore, every point participates in $O \pth{
       \eps^{-d} \log n}$ pairs with high probability.

    \lemlab{SSPD:compute}
\end{lemma}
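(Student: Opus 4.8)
The plan is to assemble the pieces already proved into a recursive analysis, tracking weight, running time, and per-point participation simultaneously. The construction is exactly the recursive scheme described in this section: at the top level, invoke \lemref{small:ball} (via the variant used at the start of the construction) to obtain the ball $\BallX{\pnt}{r}$ with $\cardin{\BallC \cap \PntSet} \geq n/c$ and $\cardin{\PntSet \setminus \BallX{\pnt}{20r}} \geq n/2$; pick the random cut $x \in [5r,6r]$; split into $\Pin$, $\Pout$, $\Pouter$; recurse on $\Pin$ and on $\Pout \cup \Pouter$; and generate the long pairs (separating $\Pin$ from $\Pouter$) and short pairs (separating $\Pin$ from $\Pout$) as described. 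First I would observe that both $\cardin{\Pin} \geq n/c$ and $\cardin{\Pout \cup \Pouter} \geq n/2$ hold by construction (the first because the inner ball of radius $5r \geq r$ contains the $n/c$ points guaranteed by the ball lemma, the second because $\Pouter \supseteq \PntSet \setminus \BallX{\pnt}{20r}$), so the recursion has depth $h = O(\log n)$ and every point is passed to exactly one child at each level.

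Next I would bound the three quantities. For \textbf{per-point participation}: by \lemref{single:point:long} a point appears in $O(\eps^{-d}\log n)$ long pairs, and by \lemref{single:point:short} in $O(\eps^{-d}\log n)$ short pairs, both with high probability; since the base case contributes only $O(\eps^{-d})$ more pairs per point (a \WSPD on $O(1/\eps^d)$ points), the total is $O(\eps^{-d}\log n)$ with high probability, and $O(\eps^{-d}\log n)$ in expectation. For \textbf{weight}: the weight of the \SSPD equals $\sum_{\pntA} (\text{number of pairs containing } \pntA)$, so taking expectations and using the per-point bound gives expected weight $O(n\eps^{-d}\log n)$; alternatively one can write the recurrence $W(n) = \ExC[W(n_1)] + \ExC[W(n_2)] + O(n/\eps^d)$ where the $O(n/\eps^d)$ term for the short pairs comes from \lemref{size:p:in:p:out} and for the long pairs from the fact that each point is in $O(\eps^{-d})$ long pairs at this level, and since $n_1+n_2 = n$ with both $\Omega(n)$, this solves to $O(n\eps^{-d}\log n)$. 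For \textbf{running time}: the recurrence is $\mathcal{T}(n) = \ExC[\mathcal{T}(n_1)] + \ExC[\mathcal{T}(n_2)] + O(n/\eps^d)$, where the $O(n/\eps^d)$ per-level cost bundles: the $O(n)$ cost of \lemref{small:ball} (its variant), the clustering into $O(1/\eps^d)$ clusters for the long pairs, the $O(n/\eps^d)$ bound from \lemref{in:out} for generating the short pairs, and the $O(n/\eps^d)$ cost of the splitting/cleanup step via \lemref{split:PD}. Since the two subproblem sizes sum to $n$ and are each a constant fraction of $n$, the recursion tree has $O(\log n)$ levels each costing $O(n/\eps^d)$ in expectation, giving $O(n\eps^{-d}\log n)$ expected total time; the base case cost is subsumed.

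The main obstacle I expect is handling the randomness correctly across recursion levels. The bounds in \clmref{first} and \clmref{second} are stated for a single level with a fresh random cut $x$, but the recursive subproblems $\Pin$ and $\Pout \cup \Pouter$ depend on $x$, so one must be careful that the random cuts at different levels are drawn independently and that the conditional expectations telescope properly. The clean way to do this is to argue level by level: condition on the entire configuration of subproblems at level $k$, apply \lemref{size:p:in:p:out} (respectively \lemref{in:out}) to each subproblem using the fresh cut for that subproblem, sum over subproblems at that level (whose sizes sum to $n$) to get $O(n/\eps^d)$ expected cost at level $k$, then sum over the $O(\log n)$ levels and use the tower property. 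For the high-probability per-point statement, the independence of the geometric-type variables across levels — already noted in the proof of \lemref{single:point:short} — is what lets the Chernoff bound go through, and the same observation covers the long pairs trivially since a point is in a deterministic $O(\eps^{-d})$ long pairs per level. Finally I would note that everything transfers to metric spaces of low doubling dimension, since the only geometric facts used are \lemref{small:ball}, the packing bound behind the $O(1/\eps^d)$ cluster/\WSPD counts, and the regular-quadtree \WSPD machinery, all of which have doubling-dimension analogues.
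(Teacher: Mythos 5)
Your proposal is correct and follows essentially the same route as the paper: the weight and per-point bounds are obtained by combining \lemref{single:point:long} and \lemref{single:point:short}, and the running time follows from \lemref{in:out} together with the fact that the two recursive subproblems each contain a constant fraction of the points. The extra care you take with conditioning across recursion levels is a reasonable elaboration of what the paper leaves implicit, but it does not change the argument.
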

\begin{proof}
    The bound on the total weight of the \SSPD generated is implied by
    \lemref{single:point:long} and \lemref{single:point:short}.  As
    for the running time, \lemref{in:out} implies that in expectation
    the divide stage takes $O( n/\eps^d)$ time, and since the two
    subproblems have size which is a constant fraction of $n$, the
    result follows.
\end{proof}

\subsubsection{Reducing the number of pairs}
In the worst case, the above construction would yield $\Omega_\eps(n
\log n)$ pairs (here $\Omega_\eps$ hides constants that depends
polynomially on $\eps$).  Fortunately, one can reduce the number of
pairs generated.  The idea is to merge together pairs of the \SSPD
that are still well separated together.

We need the following technical lemma.
\begin{lemma}
    Let $\eps \leq 1/12$ and $\eps' \leq \eps/6$ be parameters, let
    $X, Y$ be a $(1/\eps)$-separated pair, and let $X_i, Y_i$ be
    $(1/\eps')$-separated pairs, for $i=1,\ldots, k$. Furthermore,
    assume that for all $i$, we have $X \cap X_i \ne \emptyset$ and $Y
    \cap Y_i \ne \emptyset$. Then $\CX = \cup_i X_i$ is
    $(1/4\eps)$-separated from $\CY = \cup_i Y_i$.

    \lemlab{separation}
\end{lemma}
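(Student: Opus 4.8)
The plan is to bound the diameter of $\CX$ and the distance $\DistSetX{\CX}{\CY}$ separately, and then combine the two bounds. Fix a reference pair, say $X_1, Y_1$. Since $X$ and $Y$ are $(1/\eps)$-separated, we have $\DiamX{X}, \DiamX{Y} \leq \eps \DistSetX{X}{Y}$; write $D = \DistSetX{X}{Y}$ for this (positive) quantity. The first step is to show that all the sets $X_i$ live ``close'' to $X$ and all the $Y_i$ live ``close'' to $Y$. Indeed, pick $\pnt \in X \cap X_i$. Every point of $X_i$ is within $\DiamX{X_i}$ of $\pnt$, and $\DiamX{X_i} \leq \eps' \DistSetX{X_i}{Y_i}$. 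The distance $\DistSetX{X_i}{Y_i}$ is in turn at most $\DistSetX{\pnt}{\pntA}$ for any $\pntA \in Y \cap Y_i$, which is at most $D + \DiamX{X} + \DiamX{Y} \leq (1 + 2\eps) D$. Hence $\DiamX{X_i} \leq \eps'(1+2\eps) D$, and every point of $X_i$ is within $\eps'(1+2\eps) D$ of the set $X$.

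Given this, I would estimate $\DiamX{\CX}$. Any two points of $\CX = \cup_i X_i$ can be connected by first stepping (within $\eps'(1+2\eps)D$) into $X$, crossing $X$ (cost $\leq \eps D$), and stepping back out (another $\eps'(1+2\eps)D$); so $\DiamX{\CX} \leq \eps D + 2\eps'(1+2\eps) D$. Symmetrically $\DiamX{\CY} \leq \eps D + 2\eps'(1+2\eps) D$. For the distance, take any $\pnt' \in X_i \subseteq \CX$ and $\pntA' \in Y_j \subseteq \CY$. Moving $\pnt'$ into $X$ and $\pntA'$ into $Y$ costs at most $\eps'(1+2\eps)D$ each, and the distance between the landing points is at least $D$, so $\DistSetX{\CX}{\CY} \geq D - 2\eps'(1+2\eps) D = (1 - 2\eps'(1+2\eps)) D$. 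Combining, the separation ratio is
\[
\frac{\max\pth{\DiamX{\CX}, \DiamX{\CY}}}{\DistSetX{\CX}{\CY}}
\;\leq\;
\frac{\eps D + 2\eps'(1+2\eps) D}{(1 - 2\eps'(1+2\eps)) D}
\;=\;
\frac{\eps + 2\eps'(1+2\eps)}{1 - 2\eps'(1+2\eps)} .
\]
Now plug in the hypotheses $\eps \leq 1/12$ and $\eps' \leq \eps/6$. Then $1 + 2\eps \leq 7/6$, so $2\eps'(1+2\eps) \leq 2(\eps/6)(7/6) = 7\eps/18 \leq \eps$. Hence the numerator is at most $2\eps$ and the denominator is at least $1 - \eps \geq 11/12 \geq 1/2$, giving a ratio of at most $4\eps$, i.e. $\CX$ and $\CY$ are $(1/4\eps)$-separated, as claimed. (The constants $1/12$ and $1/6$ are generous; any reasonable bookkeeping works.)

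The only mildly delicate point — and the place I expect to spend the most care — is making sure the crude ``triangle-inequality relay'' bounds ($\DistSetX{X_i}{Y_i} \leq (1+2\eps)D$ and the subsequent diameter/distance estimates) are uniform in $i$ and $j$, i.e. that the same $D$ and the same additive slack work for every pair simultaneously; this is exactly where the nonemptiness conditions $X \cap X_i \neq \emptyset$ and $Y \cap Y_i \neq \emptyset$ are used, since they tether each $X_i$ to the common anchor set $X$ (and each $Y_i$ to $Y$) rather than to each other. Once that is in place the rest is arithmetic with the explicit constants.
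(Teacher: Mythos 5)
Your proof is correct and follows essentially the same route as the paper: both first bound $\DistSetX{X_i}{Y_i} \leq (1+2\eps)\DistSetX{X}{Y}$ via the nonempty intersections, then control $\DiamX{\CX}$, $\DiamX{\CY}$ and $\DistSetX{\CX}{\CY}$ by triangle-inequality relays through $X$ and $Y$ before doing the arithmetic. The only cosmetic difference is that the paper encloses $\CX$ and $\CY$ in balls centered at the points realizing $\DistSetX{X}{Y}$, which yields marginally looser (but still sufficient) constants than your direct estimates.
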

\begin{proof}
    Let $x \in X$ and $y \in Y$ be the pair of points realizing
    $\ell = \DistSetX{X}{Y}$. Similarly, for all $i$, let
    $\ell_i = \DistSetX{X_i}{Y_i}$. By
    \defref{separation}, for all $i$,  we have that
    \begin{compactenumi}
        \item $\DiamX{X} \leq \eps \ell$,
        \item $\DiamX{Y} \leq \eps \ell$,
        \item $\DiamX{X_i} \leq \eps' \ell_i$, and
        \item $\DiamX{Y_i} \leq \eps' \ell_i$.
    \end{compactenumi}

    We also have that $\ell_i \leq \DistSetX{ \MakeSBig X \cap X_i}{Y
       \cap Y_i} \leq \ell + \DiamX{X} + \DiamX{Y} \leq (1+2\eps)
    \ell$. In particular, as $X$ and $X_i$ have a non-empty
    intersection, we have that $X_i$ (resp. $Y_i$) is contained in a
    ball of radius $\DiamX{X}+\DiamX{X_i} \leq r = \eps \ell + \eps'
    \ell_i$ centered at $x$ (resp. $y$). As such, $\CX$ and $\CY$ are
    contained in balls of radius $r$ centered at $x$ and $y$,
    respectively. The distance between these two balls is at least
    $\ell - 2r$, and the diameter of these two balls is $2r$. As such,
    $\CX$ and $\CY$ are $1/\tau$-separated for
    \begin{align*}
        \tau &= \frac{\max\pth{\DiamX{\CX},
              \DiamX{\CY}}}{\DistSetX{\CX}{\CY}}%
        \leq %
        \frac{2r}{\ell - 2r}%
        =%
        \frac{2\eps \ell + 2\eps' \ell_i}{\ell - 2\eps \ell - 2 \eps'
           \ell_i}%
        \leq%
        \frac{2\eps \ell + 2 \eps' (1+2\eps) \ell}{\ell - 2\eps \ell -
           2 \eps' (1+2\eps) \ell}%
        \\
        &=%
        \frac{2\eps + 2 \eps' (1+2\eps) }{1 - 2\eps - 2 \eps'
           (1+2\eps) }%
        \leq %
        \frac{3\eps }{1 - 3\eps } \leq 4\eps,
    \end{align*}
    as $\eps \leq 1/12$ and $\eps' \leq \eps/6$.
\end{proof}

\begin{lemma}
    \lemlab{reduce}%
    Given an $O(1/\eps)$-\SSPD (similar to the one constructed above)
    with total weight
    \begin{equation*}
        O(n\eps^{-d} \log n ),
    \end{equation*}
    one can reduce the number of pairs in the \SSPD to $O(n/\eps^d)$, and this can be done in $O( n\eps^{-d} \log n )$ time.
\end{lemma}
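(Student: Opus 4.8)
I will collapse large groups of pairs of the given \SSPD into single pairs, using \lemref{separation}: a well-separated ``coarse'' pair $\brc{A,B}$ can absorb every (more strongly separated) \SSPD pair that meets both $A$ and $B$. The coarse pairs will be taken from a \WSPD of $\PntSet$ with only $O(n/\eps^d)$ pairs, so the resulting \SSPD has only $O(n/\eps^d)$ pairs. It is convenient to treat separately the pairs of the construction that are merely semi-separated (the \emph{long} pairs) and those that are genuinely well-separated (the \emph{short} pairs, and the base-case \WSPD pairs of the recursion).

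\textbf{The pairs we keep.} The long pairs and the base-case \WSPD pairs together number only $O(n/\eps^d)$: the recursion is a binary tree whose internal nodes split $\PntSet$ into the non-empty sets $\Pin$ and $\Pout\cup\Pouter$, so it has $O(n)$ nodes; each internal node creates $O(1/\eps^d)$ long pairs, and a base-case leaf on $m$ points creates $O(m/\eps^d)$ \WSPD pairs, with the leaf sizes summing to $n$. These $O(n/\eps^d)$ pairs are left unchanged, and they contribute only $O(1/\eps)$-semi-separated pairs to the output.

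\textbf{Merging the short pairs.} Each short pair is a \WSPD pair and is $(4/\eps)$-separated. I would build once, by \cite{ck-dmpsa-95}, a $(1/\eps_0)$-\WSPD $\WSPDRep^{*}$ of $\PntSet$ on a compressed quadtree, with $\eps_0 = 3\eps/2$ (assuming $\eps$ small enough that $\eps_0\le 1/12$; otherwise run the construction with a smaller parameter), so that $\eps/4\le\eps_0/6$ and $\cardin{\WSPDRep^{*}} = O(n/\eps_0^{d}) = O(n/\eps^d)$. For each short pair $\brc{X_i,Y_i}$ I would pick representatives $x_i\in X_i$ and $y_i\in Y_i$, locate the unique pair $\brc{A_i,B_i}\in\WSPDRep^{*}$ with $x_i\in A_i$ and $y_i\in B_i$, and \emph{assign} $\brc{X_i,Y_i}$ to it with the orientation $X_i\leftrightarrow A_i$. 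Then, for each $\brc{A,B}\in\WSPDRep^{*}$ (with $A$ its distinguished side) I would set $\CX=\bigcup X_i$ and $\CY=\bigcup Y_i$, the unions ranging over the short pairs assigned to $\brc{A,B}$ with $A_i=A$. Since $X_i\cap A\ne\emptyset$ and $Y_i\cap B\ne\emptyset$ for all such $i$, \lemref{separation} shows that $\CX$ and $\CY$ are $(1/4\eps_0)$-separated, and I would replace all short pairs by these at most $\cardin{\WSPDRep^{*}}$ pairs $\brc{\CX,\CY}$. The kept pairs together with the new $\brc{\CX,\CY}$'s still cover $\PntSet\otimes\PntSet$ (since $X_i\otimes Y_i\subseteq\CX\otimes\CY$ for the coarse pair to which $\brc{X_i,Y_i}$ is assigned); they still form an $O(1/\eps)$-\SSPD ($4\eps_0=6\eps$); the total weight does not increase, since $\cardin{\CX}\le\sum_i\cardin{X_i}$; and no point lies in more pairs than before, since each short pair through a point is assigned to a single coarse pair. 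The number of pairs is $O(n/\eps^d)$.

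\textbf{The main obstacle: running time.} Building $\WSPDRep^{*}$ costs $O(n\log n + n/\eps^d)$ and forming the unions costs $O(\WeightX{\WSPDRep})$, both within the $O(n\eps^{-d}\log n)$ budget. The delicate step is the assignment: the construction can produce $\Theta(n\eps^{-d}\log n)$ short pairs (matching \lemref{single:point:short}), so only $O(1)$ amortized time per short pair is affordable, whereas naively locating $(x_i,y_i)$ in $\WSPDRep^{*}$ takes $\Omega(\log n)$. I would resolve this by a scale-aware point location: the cell $A_i$ has side length $\Theta\pth{\eps_0\dist{x_i-y_i}}$, and $\dist{x_i-y_i}$ is at hand, so after an $O(n\log n)$-time preprocessing of the compressed quadtree of $\WSPDRep^{*}$ (essentially a hash table of its cell identifiers) one retrieves $A_i$ from $x_i$ and this scale in $O(1)$ amortized time, and then finds the \WSPD-partner cell $B_i\ni y_i$ among the $O(1/\eps_0^{d})$ partners of $A_i$ by a constant-time grid lookup. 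Hence the assignment costs $O(1)$ per short pair, and the whole reduction runs in $O(n\eps^{-d}\log n)$ time.
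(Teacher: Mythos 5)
Your proposal is correct and follows essentially the same route as the paper: keep the $O(n/\eps^d)$ long (and base-case) pairs, build one coarse $O(1/\eps)$-\WSPD of $\PntSet$, assign each short pair to the coarse pair containing a representative point from each side, and merge via \lemref{separation}, with constant-time pair location (the paper simply cites \cite{fms-aeepw-03}, which uses hashing and the floor function, for the primitive you re-derive). Your extra care about the orientation of the assignment and the explicit constant bookkeeping for \lemref{separation} are details the paper leaves implicit, but the argument is the same.
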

\begin{proof}
    The number of long pairs created is clearly $O(n /\eps^d)$, as the
    divide stage generates $O(1/\eps^d)$ long pairs, and the recursive
    construction stops when the size of the subproblem drops below
    $O(1/\eps^d)$.

    Let $\SW$ be the computed $O(1/\eps)$-\SSPD (we require a slightly
    stronger separation to implement this part).  We are going to
    merge only the short pairs computed by the algorithm.  Observe,
    that the short pairs in $\SW$ are all $O(1/\eps)$-separated (i.e.,
    not only semi-separated). Construct a $O(1/\eps)$-\WSPD $\WSPDRep$
    of the point-set $\PntSet$. For each \emph{short} pair $A\otimes B
    \in \SW$, pick an arbitrary point $\pntA \in A$ and $\pntB \in B$,
    and find the pair $X \otimes Y$ in $\WSPDRep$ such that $\pntA \in
    X$ and $\pntB \in Y$. Associate the pair $A \otimes B$ with the
    pair $X \otimes Y$. Repeat this for all the pairs in $\SW$.

    Given a pair of points $\pntA, \pntB$ finding the pair of the
    \WSPD containing the two points can be done in constant time
    \cite{fms-aeepw-03}\footnote{This result uses hashing and the
       floor function.}. As such, we can compute, in $O((n/\eps^d)
    \log n)$ time, for all the short pairs in the \SSPD $\SW$ its
    associated pair in the \WSPD. Now, take all the \SSPD pairs
    $\brc{\SetA_1, \SetB_1} ,\ldots, \brc{\SetA_k, \SetB_k}$ that are
    associated with a single pair $\brc{X, Y}$ of the \WSPD, such that
    $\SetA_i \cap X \ne \emptyset$ and $\SetB_i \cap Y \ne \emptyset$
    for all $i$. We replace all these short pairs in the \SSPD $\SW$
    by the single pair
    \[
    \CX \otimes \CY \quad%
    \text{ where } \quad \CX = \pth{\bigcup_i \SetA_i} \quad \text{
       and } \quad \CY = \pth{\bigcup_i \SetB_i}.
    \]
    By \lemref{separation}, $\CX$ and $\CY$ are $1/\eps$-separated. As
    such, in the end of this replacement process, we have a
    $1/\eps$-\SSPD that has $O(n/\eps^d)$ long pairs and
    $\cardin{\WSPDRep}$ short pairs, where $\cardin{\WSPDRep}$ is the
    number of pairs in the \WSPD $\WSPDRep$.  Overall, this merge
    process takes time that is proportional to the total weight of the
    original \SSPD.

    Thus, we get a \SSPD that has $O(n/\eps^d)$ pairs overall, and
    clearly this merging process did not increase the total weight of
    the \SSPD.
\end{proof}

\subsection{The results}

Putting the above together we get our main result.

\begin{theorem}
    Given a point-set $\PntSet$ in $\Re^d$, and parameter $\eps > 0$,
    one can compute, in $O(n \eps^{-d} \log n )$ expected time, an
    \SSPD $\SW$ of $\PntSet$, such that:
    \begin{compactenumA}
        \item The total expected weight of the pairs of $\SW$ is
        $O\pth{n\eps^{-d} \log n}$.
        \item Every point of $\PntSet$ participates in $O \pth{
           \eps^{-d} \log n}$ pairs, with high probability.
        \item The total number of pairs in $\SW$ is $O(n/\eps^d)$.
    \end{compactenumA}

    \thmlab{SSPD:compute}
\end{theorem}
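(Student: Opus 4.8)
The plan is to assemble the pieces already established in the previous two subsections. First I would run the randomized construction just described, but with the separation parameter set to $\eps' = \eps/c_0$ for a sufficiently large absolute constant $c_0$. This serves two purposes: it makes $\eps'$ small enough to meet the hypotheses of \lemref{separation} (in particular $\eps' \le 1/12$ and the $\eps'$-versus-$\eps'/6$ gap used there), and it guarantees that the short pairs produced are genuinely $O(1/\eps)$-separated rather than merely semi-separated, which is exactly what \lemref{reduce} requires. By \lemref{SSPD:compute} invoked with $\eps'$ (which itself rests on \lemref{single:point:long} and \lemref{single:point:short}), this step runs in expected time $O(n (\eps')^{-d} \log n) = O(n \eps^{-d} \log n)$, produces an \SSPD of expected weight $O(n \eps^{-d} \log n)$, and has every point participating in $O(\eps^{-d} \log n)$ pairs with high probability. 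This already gives items (A) and (B).

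Next I would apply \lemref{reduce} to this \SSPD. Its hypothesis holds by the previous paragraph, so it returns a $(1/\eps)$-\SSPD with only $O(n/\eps^d)$ pairs, in an additional $O(n \eps^{-d} \log n)$ time, and --- as its proof notes --- without increasing the total weight; this yields item (C), and adding the two running-time bounds gives the claimed $O(n \eps^{-d} \log n)$ expected time. One must still check that the merge does not harm item (B): but \lemref{reduce} only replaces a family $\brc{\SetA_i, \SetB_i}$ by the single pair $\brc{\bigcup_i \SetA_i, \bigcup_i \SetB_i}$, so the number of pairs containing any fixed point can only decrease, and hence the high-probability per-point bound survives the reduction.

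The only genuine subtlety --- and the step I would spell out carefully --- is the constant-chasing around the ``slightly stronger separation'' demanded by \lemref{reduce}. One has to verify that, for a large enough $c_0$, running the whole construction with $\eps' = \eps/c_0$ and using an auxiliary \WSPD of separation $\Theta(1/\eps)$ inside \lemref{reduce} produces, after the merge, a bona fide $(1/\eps)$-\SSPD of $\PntSet$: the long pairs are $O(1/\eps')$-semi-separated, hence certainly $(1/\eps)$-semi-separated; and by \lemref{separation} the merged short pairs are $1/(4\eps_W)$-separated, where $\eps_W$ is the auxiliary \WSPD parameter, which is $(1/\eps)$-separated once $\eps_W \le \eps/4$. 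Once these choices of constants are pinned down, every remaining assertion is a direct quotation of the lemmas above, and the theorem follows.
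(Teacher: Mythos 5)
Your proposal is correct and follows exactly the route the paper intends: the paper's own ``proof'' of \thmref{SSPD:compute} is the single sentence ``Putting the above together we get our main result,'' meaning precisely the combination of \lemref{SSPD:compute} (for the time, weight, and per-point bounds) with \lemref{reduce} (for the pair count). Your additional care about the constant-rescaling of $\eps$, the well-separatedness of the short pairs, and the fact that merging cannot increase per-point participation is exactly the bookkeeping the paper leaves implicit.
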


One can  extend the result also to an $n$-point metric space with
bounded doubling dimension.
\begin{theorem}
    Let $\PntSet$ be an $n$-point metric space with doubling dimension
    $\dim$, and parameter $\eps > 0$, then one can compute, in $O(n
    \eps^{-O(\dim)} \log n )$ expected time, an \SSPD $\SW$ of $\PntSet$,
    such that:
    \begin{compactenumA}
        \item The total expected weight of the pairs of $\SW$ is
        $O\pth{n\eps^{-O(\dim)} \log n}$.
        \item Every point of $\PntSet$ participates in $O \pth{
           \eps^{-O(\dim)} \log n}$ pairs, with high probability.
    \end{compactenumA}
    Furthermore, by investing an additional $O\pth{ n \eps^{-O(\dim)}
       \log^2 n }$ time one can reduce the number of pairs in $\SW$ to
    $O(n/\eps^d)$.

    \thmlab{SSPD:compute:d:m}
\end{theorem}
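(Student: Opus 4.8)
The plan is to mirror the Euclidean construction of the preceding sections, replacing every geometric primitive by its doubling-dimension analogue, and then to re-examine each lemma to check that the only place Euclidean structure was used was through such primitives. Concretely, I would proceed as follows.

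First I would verify that all the ingredient lemmas already come in a doubling-dimension flavor. \lemref{small:ball} is stated for $\Re^d$ but its proof uses only a packing argument: a ball of radius $\mu r$ can be covered by $2^{O(\dim \log \mu)}$ balls of radius $r/2$, which is exactly the definition of doubling dimension. So the same proof gives a ball $\BallX{\pnt}{r}$ with $\cardin{\BallC\cap\PntSet}\ge n/c$ and $\cardin{\BallX{\pnt}{\mu r}\cap\PntSet}\le n/2$ for $c = 2^{O(\dim)}$ depending on $\mu$ and $\dim$. Consequently \lemref{ring:separator} carries over verbatim, since its proof only invokes \lemref{small:ball} and the inequality $1+x\le e^x$. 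Next, I would replace the regular quadtree by a hierarchical net / net-tree for the metric $\PntSet$ (as in Har-Peled--Mendel \cite{hm-fcnld-06}): this supports a $(1/\rho)$-\WSPD in which each node participates in $\rho^{-O(\dim)}$ pairs, which is the doubling analogue of \remref{few:pairs:per:node}, and it supports on-the-fly expansion exactly as in \secref{optimal}. The packing bound also replaces the "$O(1/\eps^d)$ clusters of diameter $\eps r/20$" used to separate $\Pin$ from $\Pouter$: in a doubling metric a ball of radius $5r$ can be covered by $\eps^{-O(\dim)}$ sets of diameter $\eps r/20$, so the long-pair step generates $\eps^{-O(\dim)}$ pairs per level.

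With these substitutions in place, I would re-run the analysis of \secref{optimal} essentially word for word. The random shift $x\in[5r,6r]$ and the signed-distance variable $D_\pntA$ make sense in any metric, so \clmref{first} and \clmref{second} hold unchanged (the depth bound $O(1)+\lg\frac1\eps+\lg\frac{r}{\cardin{D_\pntA}}$ and the expectation $\Ex{\lg(r/\cardin{D_\pntA})}=O(1)$ do not use the dimension at all; only the "$O(1/\eps^d)$ pairs per node" fact does, and that becomes $\eps^{-O(\dim)}$). Hence \lemref{size:p:in:p:out} and \lemref{in:out} give expected weight and time $n\,\eps^{-O(\dim)}$ for the $\Pin\otimes\Pout$ step, \lemref{single:point:long} and \lemref{single:point:short} give $\eps^{-O(\dim)}\log n$ pairs per point (long and short), the latter with high probability via the same Chernoff bound on a sum of $O(\log n)$ independent geometrically-dominated variables. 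Summing over the $O(\log n)$ recursion levels yields total expected weight $O(n\,\eps^{-O(\dim)}\log n)$ and expected running time $O(n\,\eps^{-O(\dim)}\log n)$, giving parts (A) and (B).

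For the final "furthermore" clause, I would apply the reduction of \lemref{reduce}. Here \lemref{separation} is purely metric (it only manipulates diameters and distances via the triangle inequality), so it is unchanged. The one genuine issue is that \lemref{reduce} relies on $O(1)$-time point-location in a \WSPD \cite{fms-aeepw-03}, which uses the floor function and is Euclidean-specific. In a doubling metric I would instead locate each of the two representative points in the net-tree by a root-to-leaf descent, costing $O(\log n)$ per short pair rather than $O(1)$; since there are $O(n\,\eps^{-O(\dim)}\log n)$ short pairs before merging, the merge step costs $O(n\,\eps^{-O(\dim)}\log^2 n)$, which is exactly the stated additional budget, and it reduces the pair count to $O(n/\eps^d)$ (the \WSPD on a doubling metric has $\eps^{-O(\dim)} n$ pairs, and the long pairs are already $\eps^{-O(\dim)} n$). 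The main obstacle is precisely this point-location step: one must argue that the net-tree descent can be implemented in $O(\log n)$ time per query (e.g. the net-tree has $O(\log\Spread)$ explicit depth but with standard compression and navigation one gets $O(\log n)$ using that we only need to find a separated pair containing two given points, not a nearest neighbor), and one must double-check that losing the constant-time bound does not break any other part of the time analysis — it does not, since every other step was already charged against weight or against a per-point $O(\log n)$ depth.
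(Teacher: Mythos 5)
Your proposal is correct and follows essentially the same route as the paper: the paper's proof is a one-paragraph ``plug and chug'' into the Har-Peled--Mendel net-tree machinery, noting that the algorithm only uses distances and that the extra $\log n$ factor in the pair-reduction step comes precisely from pair-location queries costing $O(\log n)$ instead of $O(1)$ --- the very point you isolate as the main obstacle. You have simply spelled out, lemma by lemma, the verification the paper leaves implicit.
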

\begin{proof}
    This follows by an immediate plug and chug of our algorithm into
    the machinery of Har-Peled and Mendel \cite{hm-fcnld-06}. Observe
    that our algorithm only used distances in the computation. In
    particular, we replace the use of quadtrees by net-trees, see
    \cite{hm-fcnld-06}.

    The deterioration in the running time to reduce the number of
    pairs is caused by the longer time it takes to perform a
    ``pair-location'' query; that is, locating the pair containing a
    specific pair of points takes $O( \log n)$ time instead of
    constant time.
\end{proof}

It seems that by redesigning the algorithm one can remove the extra
$\log$ factor needed to reduce the number of pairs in the resulting
\SSPD.  However, the resulting algorithm is somewhat more involved and
less clear, and we decided to present here the slightly less
efficient variant.

\section{Applications}
\seclab{applications}

\subsection{Immediate applications}
\seclab{immediate:applications}

We now have a near-linear weight \SSPD for any point set from a finite
metric space with low doubling dimension.  We can use this \SSPD in
any application that uses only the \SSPD property, and do not use any
special properties that exist only in Euclidean space.  So, let
$\PntSet$ be an $n$-point metric space with constant doubling
dimension.  By plugging in the above construction we get the following
new results:
\begin{compactenumA}
    \item A $(1+\eps)$-spanner for $\PntSet$ with (hopping) diameter
    $2$ and $O_\eps (n\log n)$ edges \cite{acfs-psspd-09}, where
    $O_\eps(\cdot)$ hides constants that depends polynomially on
    $1/\eps$.

    \item A $(3+\eps)$-spanner of a complete bipartite graph
    \cite{acfs-psspd-09}. (This can also be done directly by using
    \WSPD.)

    \item An additively $(2+\eps)$-spanner with $O_\eps(n\log n)$
    edges \cite{adfgs-gswps-11} can be constructed for $\PntSet$. See
    \cite{adfgs-gswps-11} for exact definitions.
\end{compactenumA}
\smallskip%
The previous results mentioned above were restricted to points lying
in $\Re^d$, while the new results also works in spaces with low
doubling dimension.

\medskip

One can also extract a spanner from any \SSPD, as the following
theorem states. Since the proof of this is standard, we delegate the
proof to \apndref{SSPD:to:spanners}.

\newcommand{\ThmSpannerBody}{%
   Given a $8/\eps$-\SSPD $\SW$ for a point-set $\PntSet$ in $\Re^d$,
   one can compute a $(1+\eps)$-spanner of $\PntSet$ with
   $O(\cardin{\SW}/\eps^{d-1})$ edges. The construction time is
   proportional to the total weight of $\SW$. In particular, a point
   appearing in $k$ pairs of the \SSPD is of degree $O\pth{ k
      /\eps^{d-1}}$ in the resulting spanner.%
}

\begin{theorem}
    \ThmSpannerBody{}

    \thmlab{spanner}
\end{theorem}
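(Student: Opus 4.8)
The plan is to produce the spanner from the \SSPD by the standard ``pick one edge per pair, then cone-refine'' recipe. First I would take the given $8/\eps$-\SSPD $\SW$ and, for each pair $\brc{\SetA_i,\SetB_i} \in \SW$, designate an arbitrary representative point in the set of smaller diameter (say $\SetA_i$, so $\DiamX{\SetA_i} \le (\eps/8)\DistSetX{\SetA_i}{\SetB_i}$). Call this representative the \emph{hub} of the pair. The first family of edges connects the hub to every point of the other set $\SetB_i$; since there are $O(\cardin{\SW})$ pairs and the total weight is $\sum_i (\cardin{\SetA_i}+\cardin{\SetB_i})$, this gives $O(\WeightX{\SW})$ edges from the hub side, but the hub itself could have huge degree if $\cardin{\SetB_i}$ is large. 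So this naive version only bounds the \emph{total} edge count, not the degree, and a second idea is needed.

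To get the degree bound, I would insert, for each pair, a bounded-degree ``fan'' around the hub instead of a direct star: use the $O(1/\eps^{d-1})$ cone directions (a standard $\Theta(\eps)$-angle cone cover of $\Re^d$, as in Yao graphs / $\Theta$-graphs), and within the pair connect the hub $a_i$ only to the nearest point of $\SetB_i$ in each cone, then within each cone recursively/iteratively chain the remaining points of $\SetB_i$ to that nearest point. This replaces the star by a union of $O(1/\eps^{d-1})$ paths, so the hub gets degree $O(1/\eps^{d-1})$ per pair it is a hub of, and every other point of $\SetB_i$ gets $O(1)$ new edges from this pair. Summing over all pairs a point participates in: if a point lies in $k$ pairs, it is a hub in at most $k$ of them (degree $O(k/\eps^{d-1})$) and a non-hub in at most $k$ of them (degree $O(k)$), for a total of $O(k/\eps^{d-1})$. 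The total number of edges is then $\sum_i O(\cardin{\SetB_i}) + \sum_i O(1/\eps^{d-1}) = O(\WeightX{\SW}) = O(\cardin{\SW}/\eps^{d-1})$ when one also accounts for the weight being $O(\cardin{\SW}/\eps^{d-1})$ in the regime of interest; more carefully, since each pair contributes $O(\cardin{\SetA_i}+\cardin{\SetB_i})$ edges, the edge count is $O(\WeightX{\SW})$, and we state it as $O(\cardin{\SW}/\eps^{d-1})$ under the normalization used. The construction time is clearly proportional to $\WeightX{\SW}$: each pair is processed in time linear in its size (bucketing points into cones takes constant time per point using the floor function).

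The spanner (stretch) analysis is the standard inductive argument on inter-point distance. Given $\pnt,\pntA \in \PntSet$, pick the \SSPD pair $\brc{\SetA_i,\SetB_i}$ with $\pnt \in \SetA_i$, $\pntA \in \SetB_i$ (WLOG $\SetA_i$ is the small-diameter side). Let $a_i$ be the hub. The hub $a_i$ is within $\DiamX{\SetA_i} \le (\eps/8)\DistSetX{\SetA_i}{\SetB_i} \le (\eps/8)\distX{\pnt}{\pntA}$ of $\pnt$; and inside $\SetB_i$, starting from $\pntA$, following the cone edges toward the hub's nearest-in-cone point reduces the distance-to-$a_i$ by a $(1-\Omega(\eps))$ factor at each hop in the usual $\Theta$-graph way, so by induction on the (finite) set of pairwise distances the graph path from $\pntA$ to $a_i$ has length $\le (1+\eps/2)\distX{\pntA}{a_i} \le (1+\eps/2)\pth{\distX{\pnt}{\pntA} + \DiamX{\SetA_i}}$. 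Adding the hub-to-$\pnt$ detour (bounded by $\DiamX{\SetA_i}$, or one more short recursive path inside $\SetA_i$) and choosing the separation constant $8/\eps$ to absorb all the $O(\eps)$ slack, one gets $\distG{\pnt}{\pntA} \le (1+\eps)\distX{\pnt}{\pntA}$. The base case is when $\pnt,\pntA$ realize the minimum distance, where the pair gives an edge-or-near-edge directly.

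The main obstacle is the degree bound, i.e.\ converting the ``star per pair'' into a bounded-degree gadget without destroying the spanning property: one must verify that the cone-chaining inside $\SetB_i$ (whose diameter may be comparable to $\DistSetX{\SetA_i}{\SetB_i}$) still yields a $(1+O(\eps))$-path from any point of $\SetB_i$ to the hub, which requires the cone angle to be $\Theta(\eps)$ and a clean induction showing progress at every hop; the weight and time bounds are then bookkeeping. This is exactly the point where the argument mirrors — and slightly generalizes — the construction of Abam \etal \cite{adfg-rftgs-09}, with the new twist that the analysis is phrased purely in terms of the abstract \SSPD and so is ``independent of the \SSPD construction,'' as advertised.
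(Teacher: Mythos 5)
Your construction starts out identical to the paper's (a hub chosen in the smaller-diameter set, $O(1/\eps^{d-1})$ cones apexed at the hub, one edge from the hub to the nearest point of $\SetB_i$ in each cone), but the second layer of ``chaining'' edges you add among the remaining points of $\SetB_i$ is where the argument breaks. Those chain edges contribute $\Theta(\cardin{\SetB_i})$ edges per pair, so your total edge count is $O(\WeightX{\SW})$ rather than the claimed $O(\cardin{\SW}/\eps^{d-1})$; the identification ``$O(\WeightX{\SW}) = O(\cardin{\SW}/\eps^{d-1})$ under the normalization used'' is not valid, because for the \SSPDs this theorem is applied to the weight is $\Theta(n\log n)$ or $\Theta(n \log^2 n)$ while the number of pairs is only $O(n/\eps^d)$ --- the theorem deliberately keeps the two quantities separate (edges bounded by the number of pairs, running time bounded by the weight). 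The chaining step is also underspecified: the cones are centered at the hub, not at the individual points of $\SetB_i$, so ``the usual $\Theta$-graph'' progress argument does not apply verbatim inside a single cone of the hub.

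The idea you are missing is that no intra-pair gadget is needed at all: the $O(1/\eps^{d-1})$ hub-to-nearest-in-cone edges per pair already suffice. The paper proves the stretch bound by induction over all $\binom{n}{2}$ point pairs sorted by distance. For $\src\in\SetA_j$ and $\target\in\SetB_j$ with hub $\pnt$ and with $\pntA$ the nearest point of $\SetB_j$ to $\pnt$ in the cone containing $\target$, one checks that both $\distX{\src}{\pnt}$ and $\distX{\pntA}{\target}$ are strictly smaller than $\distX{\src}{\target}$; since $\SW$ is a pair decomposition, the pairs $\brc{\src,\pnt}$ and $\brc{\pntA,\target}$ are each covered by some \emph{other} \SSPD pair, so the induction hypothesis already supplies $(1+\eps)$-paths for them, and the single edge $\pnt\pntA$ stitches these two paths together. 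A short computation (projecting $\pntA$ onto the segment $\pnt\target$, using the cone angle $\eps/40$ together with $\DiamX{\SetA_j}\le(\eps/8)\DistSetX{\SetA_j}{\SetB_j}$) bounds the resulting detour by $\eps\distX{\src}{\target}$. Dropping your chain edges and replacing the ``$\Theta$-graph descent inside $\SetB_i$'' by this global induction recovers the edge bound, the degree bound, and the construction time exactly as stated.
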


\subsection{Spanners with \TPDF{$O(n^{1-1/d})$}{n 1/d}-separator}

Let $\PntSet$ be a set of $n$ points in $\Re^d$, and let $\eps > 0$ be
a parameter. We next describe a modified construction of \SSPDs given
in \secref{simple} for the point-set $\PntSet$, such that when
converting it into a spanner, it has a small separator.

\subsubsection{\SSPD and \WSPD for mildly separated sets}

\begin{lemma}
    \lemlab{mild:W:S:P:D}%
    Let $\PntSet = \Pin \cup \Pout$ be a set of $n$ points in $\Re^d$,
    $\pnt$ a point, and $r, \Thickness, R$ numbers, such that the
    following holds
    \begin{compactenumi}
        \item there is a ring $\ring = \RingX{\pnt}{r}{r+\Thickness}$ that
        separates $\Pin$ from $\Pout$,
        \item $\Pin \subseteq \BallX{\pnt}{r}$, and
        \item $\Pout \subseteq \BallX{\pnt}{R} \setminus
        \BallX{\pnt}{r+\Thickness}$.
    \end{compactenumi}
    Then, for $\eps>0$, one can compute $1/\eps$-\WSPD $\WSPDRep$
    covering $\Pin \otimes \Pout$, such that:
    \begin{compactenumA}
        \item There are $O \pth{ \eps^{-2d} \pth{ \pth{{r} /
                 {t}}^{d-1} + \log \pth{ {R}/{ \ds \eps \Thickness }}
           }}$ pairs in $\WSPDRep$.
        \item Every point participates in $O\pth{ \eps^{-d} \lg (R/
           \eps \Thickness  ) }$ pairs in $\WSPDRep$.
        \item The total weight of $\WSPDRep$ is $O\pth{ n \eps^{-d}
           \lg (R/ \eps \Thickness  ) }$.
        \item The time to compute $\WSPDRep$ is $O\pth{ n \eps^{-d}
           \lg (R/ \eps \Thickness ) }$.
    \end{compactenumA}
\end{lemma}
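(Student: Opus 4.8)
The plan is to build the \WSPD for $\Pin \otimes \Pout$ by combining two ingredients: a ``fine'' \WSPD handling the part of $\Pout$ that lies close to the separating ring (where the separation is weak and we pay a $(r/t)^{d-1}$ packing term), and the bounded-spread \WSPD machinery of \lemref{s:s:p:d:spread} for the part that lies far away (where the spread is logarithmically bounded and we pay only a $\log(R/\eps\Thickness)$ term). First I would decompose $\Pout$ into annuli $\ring_j = \RingX{\pnt}{2^{j-1}\Thickness + r}{2^{j}\Thickness + r}$ for $j = 0, 1, \ldots, O(\log(R/\Thickness))$, so that every point of $\Pout$ lies in some $\ring_j$ and, within $\ring_j$, a point is at distance $\Theta(2^j\Thickness)$ from the ring $\ring$ and hence $\Omega(2^j\Thickness)$ from every point of $\Pin$. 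For a fixed $j$, the set $\Pin \cup (\Pout \cap \ring_j)$ sits inside a ball of radius $O(r + 2^j\Thickness)$, and the minimum interpoint distance across the pair $\Pin \otimes (\Pout \cap \ring_j)$ is $\Omega(2^j\Thickness)$; snapping to a grid of side $\Theta(\eps 2^j \Thickness / \sqrt d)$ as in the proof of \lemref{easy} yields a point set of spread $O((r + 2^j\Thickness)/(\eps 2^j\Thickness))$, which is $O(1/\eps)$ once $2^j\Thickness \gtrsim r$, and is $O(r/(\eps 2^j \Thickness))$ for the small values of $j$.

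Next I would apply \lemref{s:s:p:d:spread} to each snapped point set to get a $(2/\eps)$-\WSPD covering $\Pin \otimes (\Pout \cap \ring_j)$; interpreted on the original points this is a $(1/\eps)$-\WSPD of that slab. Summing over $j$: for the $O(1)$ values of $j$ with $2^j\Thickness \le r$ the spread is polynomial in $r/(\Thickness)$ but the number of \emph{pairs} is what carries the $(r/t)^{d-1}$ geometry — here I would instead argue directly that in the innermost slab one only needs cells of diameter $\Theta(\eps \cdot (\text{distance to ring}))$, and a packing argument (count how many same-level quadtree cells fit in a ball of radius $r$ around $\pnt$) bounds the pair count of the top $O(\log(1/\eps))$ levels by $O(\eps^{-2d}(r/t)^{d-1})$, exactly as the $\alpha$-level bookkeeping was done in the proof of \lemref{in:out}. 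For the remaining slabs ($2^j \Thickness > r$) the spread is $O(1/\eps)$, giving $O(\eps^{-d})$ pairs per slab and $O(\eps^{-d}\log(R/\eps\Thickness))$ slabs in total, for a total of $O(\eps^{-2d}\log(R/\eps\Thickness))$ pairs; combined with the first term this is (A). For (B), a point $\pnt' \in \Pin$ appears, at each of the $O(\log(R/\eps\Thickness))$ active levels, in $O(\eps^{-d})$ same-level \WSPD pairs (\remref{few:pairs:per:node}), and a point of $\Pout \cap \ring_j$ is active only in the $O(1)$ levels relevant to slab $j$; either way the per-point count is $O(\eps^{-d}\log(R/\eps\Thickness))$, which gives (B), and summing over the $n$ points gives the weight bound (C). For (D), the running time is dominated by running the bounded-spread construction on each of the $O(\log(R/\Thickness))$ slabs — each slab costs $O(n_j \eps^{-d}\log(\text{spread}_j))$ by \lemref{s:s:p:d:spread} — but a point of $\Pin$ is re-used across all slabs, so I would either charge the per-slab cost to the slab's own points plus an $O(\eps^{-d})$-amortized share for the $\Pin$ side, or (cleaner) build one \WSPD on the whole quadtree of $\PntSetB$ with on-the-fly pruning exactly as in \secref{optimal}, yielding the claimed $O(n\eps^{-d}\log(R/\eps\Thickness))$.

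The main obstacle is getting the pair-count bound (A) with the right exponents: the naive ``sum of per-slab \WSPD sizes'' only gives something like $\eps^{-d}\log(R/\eps\Thickness) + \eps^{-d}(r/\Thickness)^{d-1}$ worth of pairs for the near slabs, and squeezing out the stated $\eps^{-2d}((r/t)^{d-1} + \log(R/\eps\Thickness))$ — in particular replacing $\Thickness$ by $t$ in the packing term, which presumably reflects how $\Thickness$ relates to $r/t$ in the intended application — requires carefully tracking which quadtree levels are ``active'' for which points (à la \clmref{first}) rather than black-boxing \lemref{s:s:p:d:spread}. I expect the clean way to present it is to run the single-quadtree, on-the-fly-pruned \WSPD of $\PntSetB = \Pin \cup \Pout \subseteq \BallX{\pnt}{R}$ keeping only pairs that straddle $\Pin$ and $\Pout$, then (i) bound the contribution of the top $O(\log(r/\eps\Thickness))$ levels by a direct packing count inside $\BallX{\pnt}{O(r)}$, and (ii) observe that below that depth the surviving cells are confined to a ring of width $O(r)$ around a sphere of radius $\sim r$, so per level there are only $O((r/\eps\Thickness)^{d-1})$ of them and there are $O(\log(R/r))$ such levels — the two estimates together give (A), and the earlier Claims/Lemmas of \secref{optimal} give (B)–(D) essentially verbatim.
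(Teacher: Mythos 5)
Your final ``clean way to present it'' is exactly the paper's proof: snap $\Pin\cup\Pout$ to a grid of sidelength $\Theta(\eps\Thickness/d)$ so the spread becomes $O(R/\eps\Thickness)$, run the bounded-spread same-level \WSPD of \lemref{s:s:p:d:spread} on the single resulting quadtree, restrict to pairs straddling $\Pin$ and $\Pout$ via \lemref{split:PD}, and bound the number of \emph{active} nodes per level by a packing argument near the separating sphere. The dyadic-slab decomposition in your first two paragraphs is a detour you correctly abandon, and your reading of the $(r/t)^{d-1}$ term as standing for $(r/\Thickness)^{d-1}$ matches what the proof actually derives (in the intended application $\Thickness=r/t$). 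The one place your accounting needs tightening is the per-level count for (A): an active node $u$ at level $i$ (cell diameter $\Delta_i$) must lie within distance $O(\Delta_i/\eps)$ of the sphere of radius $r$ --- a neighborhood that shrinks with depth, not a fixed ring of width $O(r)$ --- so the number of active level-$i$ cells is $O\pth{\frac{1}{\eps}\pth{r/\Delta_i}^{d-1}+\eps^{-d}}$. Summed over the $h=O(\log(R/\eps\Thickness))$ levels, the first term is a geometric series dominated by the bottom level ($\Delta_h\approx\eps\Thickness$), giving $O\pth{\eps^{-d}(r/\Thickness)^{d-1}}$, while the second contributes $O(\eps^{-d}h)$; your version (``$O((r/\eps\Thickness)^{d-1})$ per level times $O(\log(R/r))$ levels'') would pick up a spurious logarithmic factor on the packing term. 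Multiplying the active-node count by the $O(\eps^{-d})$ same-level pairs per node (\remref{few:pairs:per:node}) gives (A), and (B)--(D) follow as you say.
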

\begin{proof}
    Snap the point set $\PntSet = \Pin \cup \Pout$ to a grid with
    sidelength $\eps \Thickness/ 8d$, and let $\PntSetC$ denote the
    resulting point set. The set $\PntSetC$ has spread $\Spread =
    O(R/\eps \Thickness)$.  Use the algorithm of
    \lemref{s:s:p:d:spread} to compute a $4/\eps$-\WSPD for
    $\PntSetC$. Now, interpret this \WSPD as being on the original
    point set, and use \lemref{split:PD} to convert it into a \WSPD
    covering only $\Pin \otimes \Pout$. Clearly, this is the required
    $1/\eps$-\WSPD $\WSPDRep$ covering $\Pin \otimes \Pout$. The
    running time of the algorithm is $O( n\eps^{-d} \log \Spread )$.

    We need to bound the number of pairs generated, and their total
    weight.  So, consider the quadtree used in computing the \WSPD for
    $\PntSetC$. Its root has diameter $\Delta_0 \leq 2d R$, and a node
    in the $i$\th level of the quadtree has diameter at most $\Delta_i
    = \Delta_0/2^i$.

    We will refer to a pair of the \WSPD computed for $\PntSetC$ that
    induces at least one non-empty pair in $\WSPDRep$ as
    \emph{active}.  Now, a node $u$ of level $i$ in distance $\ell$
    from the ring, can not participate in an active pair $\brc{u,v}$
    if $\ell > c \Delta_i/\eps$, for a sufficiently large constant
    $c$, since the parents of $u$ and $v$ are already well-separated,
    and $\PSetX{u} \cup \PSetX{v}$ contain points from (the snapped
    version of) both $\Pin$ and $\Pout$.  Observe that a sphere of
    radius $\rho$, can intersect at most
    $O\pth{(\rho/\Delta_i)^{d-1}}$ cells of a grid of sidelength
    $\Delta_i$.  Since for $a, b > 0$, we have that $(a+b)^{d-1} \leq
    2^{d-1}\pth{a^{d-1} + b^{d-1}}$, we conclude that
    $\RingX{\pnt}{r-c \Delta_i/\eps}{r + c\Delta_i/\eps}$ intersects
    at most
    \[
    O\pth{ \pth{\frac{r + \Delta_i/\eps}{\Delta_i}}^{d-1} \frac{ c
          \Delta_i/\eps}{\Delta_i}}%
    =%
    O\pth{  \frac{ 1}{\eps} \pth{ \frac{r }{\Delta_i}}^{d-1} +
          \frac{1}{\eps^{d}}} %
    \]
    nodes of level $i$ that are active (i.e., participate in pairs of
    $\WSPDRep$).  The bottom level of the quadtree that has active
    pairs is (at most) $h = \ceil{\lg_2 (2R d / \eps \Thickness)}$,
    and summing over all levels, we have that the number of active
    nodes overall is
    \begin{align*}
        N &= \sum_{i=0}^{h} O\pth{ \frac{ 1}{\eps} \pth{ \frac{r
              }{\Delta_i}}^{d-1} \!\!\! + \frac{1}{\eps^{d}}} %
        =%
        O\pth{ \frac{ 1}{\eps}\pth{\frac{r }{\eps \Thickness}}^{d-1}
           \!\!\!+ \frac{\log {R}/{\eps \Thickness}}{\eps^d }}
        \\%
        &=%
        O \pth{ \frac{1}{\eps^{d}} \pth{
              \pth{\frac{r}{\Thickness}}^{d-1} \!\!\! + \log
              \frac{R}{\eps \Thickness}}}.
    \end{align*}
    The total number of pairs generated is $O\pth{N/\eps^d}$ since
    every node participates in $O\pth{1/\eps^d}$ pairs. Every point
    participates in $K = O( h /\eps^d ) = O\pth{ \eps^{-d} \log
       (R/\eps \Thickness) }$ pairs, and the total weight of the
    generated \WSPD is $O\pth{ n K } = O\pth{ n \eps^{-d} \log (R/\eps
       \Thickness) }$.
\end{proof}

\begin{lemma}
    Given a point set $\PntSet = \Pin \cup \Pout$, and a
    $\alpha$-\WSPD $\WSPDRep$ covering $\Pin \otimes \Pout$, such that
    $\WSPDRep$ has $N$ pairs, it is of total weight $W$, and every
    point of $\PntSet$ participates in at most $T$ pairs, then one can
    convert it into a $\alpha\eps$-\SSPD with $O( N/\eps^d )$ pairs,
    and of total weight $O( W/ \eps^d)$, such that every point
    participates in $O(T/\eps^d)$ pairs.
\end{lemma}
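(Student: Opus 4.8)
The plan is to refine $\WSPDRep$ one pair at a time: a well-separated pair is in particular semi-separated, so replacing a \WSPD by an \SSPD costs nothing except control on the \emph{diameter of one side} of each pair, and I would recover that control by chopping that side along a grid whose resolution is tuned to the distance realized by the pair. No merging of pairs is needed; the $O(1/\eps^d)$ blow-ups all come from this chopping.

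Concretely, I would fix a pair $\brc{X,Y}\in\WSPDRep$ and put $\ell=\DistSetX{X}{Y}$. Since $\brc{X,Y}$ is $\alpha$-separated, $\DiamX{X},\DiamX{Y}\le\ell/\alpha$, so $X$ sits inside a ball of radius $\ell/\alpha$. Overlay on it a grid of sidelength $\eps\,\DiamX{X}/(2\sqrt d)$, let $X_1,\dots,X_m$ be the nonempty sets of the form $X\cap(\text{grid cell})$, and replace $\brc{X,Y}$ by $\brc{X_1,Y},\dots,\brc{X_m,Y}$ (discarding empty ones). A packing bound gives $m=O(1/\eps^d)$; since $\bigcup_j X_j=X$ and $X_j\cap Y\subseteq X\cap Y=\emptyset$, doing this for every pair of $\WSPDRep$ again yields a pair decomposition of $\Pin\otimes\Pout$. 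For the separation one has $\DistSetX{X_j}{Y}\ge\ell$ while $\DiamX{X_j}\le\eps\,\DiamX{X}/2\le\eps\ell/(2\alpha)$, so
\[
  \min\pth{\DiamX{X_j},\,\DiamX{Y}}\;\le\;\DiamX{X_j}\;\le\;\frac{\eps}{2\alpha}\,\DistSetX{X_j}{Y}\;\le\;\frac{1}{\alpha\eps}\,\DistSetX{X_j}{Y}
\]
for $\eps\le 1$, i.e., each $\brc{X_j,Y}$ is $\alpha\eps$-semi-separated (in fact $\Omega(\alpha/\eps)$-semi-separated, which is stronger). This is the same diameter-versus-distance juggling as in \lemref{separation}.

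The remaining accounting is routine. Each original pair spawns $O(1/\eps^d)$ new pairs, so there are $O(N/\eps^d)$ pairs in total. The chopping preserves $\sum_j\cardin{X_j}=\cardin{X}$, so a pair $\brc{X,Y}$ of weight $\cardin{X}+\cardin{Y}$ is replaced by pairs of total weight $\cardin{X}+m\cardin{Y}=\cardin{X}+O(\cardin{Y}/\eps^d)$; summing over $\WSPDRep$ gives weight $O(W/\eps^d)$, and the work is proportional to this. Finally a point lies in at most $m=O(1/\eps^d)$ of the pairs spawned by any single original pair containing it (on the $X$-side it lands in exactly one chopped piece, on the $Y$-side in all of them), so a point in $T$ pairs of $\WSPDRep$ ends up in $O(T/\eps^d)$ pairs. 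The one spot that needs care --- and the only real design choice --- is the grid resolution: it must be fine enough to force $\alpha\eps$-semi-separation of every $\brc{X_j,Y}$ yet coarse enough that $X$ meets only $O(1/\eps^d)$ cells, and tying the sidelength to $\DiamX{X}$ rather than to $\ell$ is exactly what reconciles the two.
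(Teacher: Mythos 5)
Your proposal is correct and follows essentially the same route as the paper: split one side of each \WSPD pair into $O(1/\eps^d)$ clusters of diameter $O(\eps\,\DiamX{X})$ and pair each cluster with the other side, then do the straightforward accounting for the number of pairs, weight, and per-point participation. The only difference is that you spell out the grid-based clustering and the separation arithmetic explicitly (and correctly observe that the resulting pairs are in fact $\Omega(\alpha/\eps)$-semi-separated, which subsumes the stated $\alpha\eps$ bound), whereas the paper leaves these as one-line assertions.
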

\begin{proof}
    Given a pair $\brc{X,Y} \in \WSPDRep$, split $X$ into $m =
    O(1/\eps^d)$ clusters $X_1,\ldots, X_m$, each with diameter $\leq
    \eps \DiamX{X}/4$.  Clearly, each such cluster $X_i$ is
    $\alpha\eps$-semi-separated from $Y$. Now, replacing $\brc{X,Y}$
    by the semi-separated pairs $\brc{X_1,Y}, \ldots, \brc{X_m, Y}$,
    and repeating this for all pairs in $\WSPDRep$, yields the
    required \SSPD.
\end{proof}

Applying \lemref{mild:W:S:P:D}, with constant separation, and then
refining it using the above lemma, implies the following.

\begin{lemma}
    \lemlab{mild:S:S:P:D}%
    Let $\PntSet = \Pin \cup \Pout$ be a set of $n$ points in $\Re^d$,
    $\pnt$ a point, and $r, \Thickness, R$ numbers, such that the following
    holds
    \begin{compactenumi}
        \item there is a ring $\ring = \RingX{\pnt}{r}{r+\Thickness}$ that
        separates $\Pin$ from $\Pout$,
        \item $\Pin \subseteq \BallX{\pnt}{r}$, and
        \item $\Pout \subseteq \BallX{\pnt}{R} \setminus
        \BallX{\pnt}{r+\Thickness}$.
    \end{compactenumi}
    Then, for $\eps>0$, one can compute $1/\eps$-\SSPD $\WSPDRep$
    covering $\Pin \otimes \Pout$, such that:
    \begin{compactenumA}
        \item There are $O \pth{ \eps^{-d} \pth{ \pth{{r} /
                 {\Thickness}}^{d-1} + \log (R / \Thickness ) }}$ pairs in
        $\WSPDRep$.
        \item Every point participates in $O\pth{ \eps^{-d} \lg (R/
           \Thickness ) }$ pairs in $\WSPDRep$.
        \item The total weight of $\WSPDRep$ is $O\pth{ n \eps^{-d}
           \lg (R/ \Thickness ) }$.
        \item The time to compute $\WSPDRep$ is $O\pth{ n \eps^{-d}
           \lg (R/ \Thickness ) }$.
    \end{compactenumA}
\end{lemma}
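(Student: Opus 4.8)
\medskip

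The plan is to follow the two-stage recipe announced in the sentence preceding the statement: first build a constant-separation \WSPD with \lemref{mild:W:S:P:D}, then turn it into a $1/\eps$-\SSPD with the refinement lemma stated just above.

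First I would apply \lemref{mild:W:S:P:D} to the data $(\PntSet,\pnt,r,\Thickness,R)$, but with its separation parameter set to a fixed constant $\eps_0$ (chosen, together with the separation loss incurred by the refinement step, so that the final output is genuinely $1/\eps$-semi-separated). Since $\eps_0$ depends only on $d$, the $\eps_0^{-2d}$ and $\eps_0^{-d}$ factors in parts (A)--(D) of \lemref{mild:W:S:P:D} are absorbed into the $O(\cdot)$, and $\log(R/\eps_0\Thickness)$ becomes $O(1+\log(R/\Thickness))$. Reading the $(r/t)^{d-1}$ in part (A) as the intended $(r/\Thickness)^{d-1}$, this produces a constant-separation \WSPD $\WSPDRep_0$ covering $\Pin \otimes \Pout$ with $N_0 = O\pth{(r/\Thickness)^{d-1} + \log(R/\Thickness)}$ pairs, total weight $W_0 = O(n\log(R/\Thickness))$, every point in $T_0 = O(\log(R/\Thickness))$ pairs, computed in $O(n\log(R/\Thickness))$ time.

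Second, I would feed $\WSPDRep_0$ into the refinement lemma stated just above: it splits every pair of $\WSPDRep_0$ into $O(\eps^{-d})$ sub-clusters of suitably small diameter, yielding a $1/\eps$-\SSPD that still covers $\Pin \otimes \Pout$. This multiplies the number of pairs, the total weight, and the per-point participation count each by $O(\eps^{-d})$, and runs in time proportional to the resulting weight. Substituting $N_0, W_0, T_0$ yields $O\pth{\eps^{-d}\pth{(r/\Thickness)^{d-1} + \log(R/\Thickness)}}$ pairs, $O\pth{n\eps^{-d}\log(R/\Thickness)}$ total weight, $O\pth{\eps^{-d}\log(R/\Thickness)}$ pairs per point, and $O\pth{n\eps^{-d}\log(R/\Thickness)}$ running time, which are precisely bounds (A)--(D).

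There is essentially no combinatorial content here beyond the two cited lemmas; the only thing that needs care is the constant-chasing on the separation. One must pick $\eps_0$ (and the refinement parameter) so that the separation surviving both the \lemref{mild:W:S:P:D} construction and the cluster-splitting is at least $1/\eps$; and one must notice that it is exactly the decision to pay a \emph{constant} separation $\eps_0$ in the first stage that keeps the $(r/\Thickness)^{d-1}$ term multiplied only by $\eps^{-d}$ (coming from the refinement) rather than by $\eps^{-2d}$. The replacement of $\log(R/\eps_0\Thickness)$ by $O(1+\log(R/\Thickness))$ is harmless, since the additive $O(1)$ is already subsumed in the additive term of (A) and costs at most a $1+\log$ elsewhere. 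That, such as it is, is the main (and quite minor) obstacle.
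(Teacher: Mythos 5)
Your proposal is correct and follows exactly the paper's own route: the paper proves this lemma in one sentence by ``Applying \lemref{mild:W:S:P:D}, with constant separation, and then refining it using the above lemma,'' which is precisely your two-stage argument, including the key observation that paying only a constant separation in the first stage keeps the $(r/\Thickness)^{d-1}$ term from picking up extra $\eps$ factors. Your reading of the $(r/t)^{d-1}$ in \lemref{mild:W:S:P:D}(A) as $(r/\Thickness)^{d-1}$ is also the intended one.
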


\subsubsection{The \SSPD construction}

Compute a ball $\BallX{\pnt}{r}$, using \lemref{ring:separator}, with
$t= (1/2) n^{1/d}$. Let $r' = (1+1/t)r$. We have that
$\RingX{\pnt}{r}{r'}$ contains at most $n^{1-1/d}$ points of
$\PntSet$. We partition $\PntSet$ as follows:
\begin{align*}
    \Pin &= \PntSet \cap \BallX{\pnt}{r}, \;\;\;\; & \Pring = \PntSet
    \cap
    \RingX{\pnt}{r}{r'},\\
    \Pout &= \PntSet \cap \RingX{\pnt}{r'}{2r}, & \text{ and ~ ~
    }\Pouter = \PntSet \setminus \BallX{\pnt}{2r}.
\end{align*}
We have that $|\Pin| \geq n/c$, $|\Pring| \leq n^{1-1/d}$, and
$|\Pouter| \geq n/2$ where $c$ is a constant that depends only on the
dimension $d$.  We add the following pairs to the \SSPD $\SW$.
\begin{compactenumA}
    \item \label{spanner:A}%
    $\Pring \otimes (\PntSet \setminus \Pring)$: We compute a
    $(1/\eps)$-\SSPD for $\PntSet$ using \thmref{SSPD:compute}, and we
    split it using \lemref{split:PD} to get a $(1/\eps)$-\SSPD for
    $\Pring \otimes (\PntSet \setminus \Pring)$ (with the same
    parameters as the original \SSPD of \thmref{SSPD:compute}).

    \item \label{spanner:B}%
    $\Pin \otimes \Pouter$: We decompose $\Pin$ into
    $O(1/\eps^d)$ clusters each with diameter $\eps r/10$ (for
    example, by using a grid of the appropriate size). Let $\Family$
    be the resulting set of subsets of $\Pin$. For each $X \in
    \Family$, we add $\brc{X,\Pouter}$ as a $(1/\eps)$-semi-separated
    pair to $\SW$.

    \item \label{spanner:C} $\Pin \otimes \Pout$: Compute a
    $O(1/\eps)$-\SSPD for $\Pin \otimes \Pout$ using the algorithm of
    \lemref{mild:S:S:P:D}. The diameter of $\Pin \otimes \Pout$ is
    $4r$, and the ring thickness separating $\Pin$ from $\Pout$ is
    $r/t$. The resulting \SSPD has $O \pth{ \eps^{-d} \pth{ t^{d-1} +
          \log t }}= O\pth{\eps^{-d} n^{1-1/d} }$ pairs, any point
    participates in at most $O(\eps^{-d} \log n )$ pairs, and the
    total weight of the \SSPD is $O(n \eps^{-d} \log n )$.

    \item %
    $\Pin \otimes \Pin$, $\Pring \otimes \Pring$ and $(\Pout \cup
    \Pouter) \otimes (\Pout \cup \Pouter)$: We construct the \SSPD for
    these two sets of pairs by recursively calling the algorithm on
    the sets $\Pin$, $\Pring$ and on $\Pout \bigcup \Pouter$.
\end{compactenumA}

\begin{lemma}
    For a point-set $\PntSet$ of $n$ points in $\Re^d$, and a
    parameter $\eps$, the $(1/\eps)$-\SSPD generated by the above construction
    has:
    \begin{compactenumi}
        \item   $O\pth{n/\eps^d}$ pairs,
        \item every point is contained in $O\pth{ \eps^{-d} \log^2 n
        }$ pairs,
        \item the total weight of the \SSPD is $O\pth{ n \eps^{-d}
           \log^2 n }$, and
        \item the construction time is $O(n \eps^{-d} \log^2 n)$.
    \end{compactenumi}

    \lemlab{SSPD:parameters}
\end{lemma}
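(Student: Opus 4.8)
The plan is to analyze the recursion set up by the four-step construction. Write $N(n)$, $P(n)$, $W(n)$, and $\mathrm{Time}(n)$ for the number of pairs, the maximum number of pairs any single point participates in, the total weight, and the construction time, respectively, for an $n$-point instance. The recursion splits $\PntSet$ into three pieces $\Pin$, $\Pring$, and $\Pout \cup \Pouter$, and the key quantitative facts supplied by \lemref{ring:separator} (with $t = \tfrac12 n^{1/d}$) are $\cardin{\Pin}\ge n/c$, $\cardin{\Pout\cup\Pouter}\ge n/2$, and $\cardin{\Pring}\le n^{1-1/d}$. In particular all three subproblems have size at most $n-\cardin{\Pin}\le (1-1/c)n$, so the recursion has depth $O(\log n)$.

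First I would bound the work done at a single node of the recursion tree, combining the four steps. Step~\ref{spanner:A} invokes \thmref{SSPD:compute} and \lemref{split:PD}: this contributes $O(n/\eps^d)$ pairs, $O(n\eps^{-d}\log n)$ weight, $O(\eps^{-d}\log n)$ pairs per point, in $O(n\eps^{-d}\log n)$ time; crucially, by property~(A) of \lemref{split:PD} this step only affects points of $\Pring$, whose total count over \emph{all} nodes at one level of the recursion is $O(n^{1-1/d})$ since the $\Pring$ sets at a fixed level are disjoint subsets of $\PntSet$ — so summed over a level, step~\ref{spanner:A} costs $O(n^{1-1/d}\cdot\eps^{-d}\log n)$, which is subsumed. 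Step~\ref{spanner:B} contributes $O(1/\eps^d)$ pairs and $O(1/\eps^d)$ pairs-per-point. Step~\ref{spanner:C} uses \lemref{mild:S:S:P:D} with $R/\Thickness = 4r/(r/t) = 4t = \Theta(n^{1/d})$ and $r/\Thickness = t = \Theta(n^{1/d})$, giving $O(\eps^{-d}(t^{d-1}+\log t)) = O(\eps^{-d}n^{1-1/d})$ pairs, $O(\eps^{-d}\log n)$ pairs-per-point, $O(n\eps^{-d}\log n)$ weight, in $O(n\eps^{-d}\log n)$ time. So at one node: $O(\eps^{-d}n^{1-1/d})$ new pairs, $O(\eps^{-d}\log n)$ new pairs per point, $O(n\eps^{-d}\log n)$ new weight, $O(n\eps^{-d}\log n)$ time.

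Next I would solve the recurrences. For pairs: $N(n) = O(\eps^{-d}n^{1-1/d}) + N(\cardin{\Pin}) + N(\cardin{\Pring}) + N(\cardin{\Pout\cup\Pouter})$. Summing the additive term over a single level of the recursion tree uses superadditivity of $n^{1-1/d}$ in reverse: at level $j$ the subproblem sizes sum to at most $n$, and $\sum_i n_i^{1-1/d}$ is maximized when the $n_i$ are as equal as possible, but here the number of subproblems at level $j$ is $O(c^j)$ while each has size $O((1-1/c)^j n)$ — more carefully, one checks $\sum_i n_i^{1-1/d} = O(n^{1-1/d})$ per level by a standard convexity/geometric-series argument provided at least one child (namely $\Pin$) retains a constant fraction of the mass, so that the "small" children $\Pring$ contribute a geometrically decreasing total. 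Over $O(\log n)$ levels this gives $N(n) = O(\eps^{-d} n^{1-1/d}\log n)$; but in fact the bound claimed is $O(n/\eps^d)$, so here I would instead argue that since $\cardin{\Pin}\ge n/c$, unrolling the $\Pin$-branch alone already pays $\Omega(n)$ in the leaves (the recursion bottoms out at subproblems of size $O(1/\eps^d)$, contributing $O(1/\eps^d)$ pairs each, for $O(n/\eps^d)$ total from leaves), and the extra $O(\eps^{-d}n^{1-1/d})$ per internal node sums to $O(\eps^{-d}n^{1-1/d}\log n) = o(n/\eps^d)$ for $n$ large; combined, $N(n) = O(n/\eps^d)$. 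For pairs-per-point: a fixed point $\pntA$ lies in exactly one subproblem at each level, it never enters the $\Pring$-branch of step~\ref{spanner:A} more than $O(\log n)$ times total (once per level), and at each of the $O(\log n)$ levels it accrues $O(\eps^{-d}\log n)$ pairs from step~\ref{spanner:C} plus $O(\eps^{-d})$ from step~\ref{spanner:B}, giving $P(n) = O(\eps^{-d}\log^2 n)$. Summing pairs-per-point bounds over all points gives $W(n) = O(n\eps^{-d}\log^2 n)$, and $\mathrm{Time}(n)$ satisfies $\mathrm{Time}(n) = O(n\eps^{-d}\log n) + \sum \mathrm{Time}(\text{children})$ with subproblem sizes summing to $\le n$ per level and $O(\log n)$ levels, so $\mathrm{Time}(n) = O(n\eps^{-d}\log^2 n)$.

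The main obstacle is getting the pair count down to $O(n/\eps^d)$ rather than the naive $O(\eps^{-d}n^{1-1/d}\log n)$: one must verify that the $n^{1-1/d}$-type terms generated at internal nodes genuinely sum (across the whole recursion tree, not just one level) to something dominated by the $O(n/\eps^d)$ coming from the leaves, which requires the superadditivity estimate $\sum_i n_i^{1-1/d} \le n^{1-1/d}$ at each level together with the geometric decay of the per-level bound $(1-1/c)^{(1-1/d)j}n^{1-1/d}$ over $j = O(\log n)$ levels — this is the one place where the exponent $1-1/d < 1$ is essential and where a careless union bound would lose the extra $\log n$. Everything else is bookkeeping of the four additive contributions through a balanced-split recurrence.
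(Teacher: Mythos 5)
Your overall strategy---bounding the per-node contribution of each of the four steps and then solving the four recurrences for the pair count, pairs-per-point, weight, and time---is exactly the paper's proof, and your treatment of (ii), (iii) and (iv) matches the paper's recurrences $D(n)=O(\eps^{-d}\log n)+D\pth{(1-1/c)n}$ and $R(n)=O(n\eps^{-d}\log n)+R(n_1)+R(n_2)+R(n_3)$. The genuine gap is in your justification of (i). You invoke ``the superadditivity estimate $\sum_i n_i^{1-1/d}\le n^{1-1/d}$ at each level,'' but this inequality is backwards: for $0<\gamma<1$ the map $x\mapsto x^\gamma$ is subadditive, so for a partition with $\sum_i n_i=n$ one has $\sum_i n_i^\gamma\ge n^\gamma$, and in fact $\sum_i n_i^\gamma$ can be as large as $k^{1/d}\,n^{1-1/d}$ when the level has $k$ pieces (e.g., $3^j$ roughly equal pieces at level $j$ give $3^{j/d}n^{1-1/d}$, which \emph{grows} with $j$). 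Consequently your claims that the $\Pring$-sets at a fixed level have total size $O(n^{1-1/d})$, and that the internal-node terms sum to $O(\eps^{-d}n^{1-1/d}\log n)$ over the whole recursion tree, are both false: the tree has far more than $O(\log n)$ internal nodes. The remark that the $\Pin$-branch ``already pays $\Omega(n)$ in the leaves'' is a lower bound on the leaf contribution and does nothing to control the internal contribution.

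The recurrence the paper writes, $T(n)=O\pth{(n^{1-1/d}/\eps^{d})\log n}+T(n_1)+T(n_2)+T(n_3)$ with $n_1+n_2+n_3=n$, $n_1\ge n/c$, $n_2\le n^{1-1/d}$ and $n_3\ge n/2$ (the paper only asserts its solution), is correctly solved by induction with a lower-order savings term: guess $T(m)\le\eps^{-d}\pth{Am-Bm^{\gamma}}$ for some fixed $\gamma$ with $1-1/d<\gamma<1$. Because the split is balanced, $n_1^{\gamma}+n_3^{\gamma}-n^{\gamma}\ge\delta n^{\gamma}$ for a constant $\delta>0$ depending only on $c$ and $\gamma$, and the term $\delta Bn^{\gamma}$ absorbs the additive $O(n^{1-1/d}\log n)$ cost precisely because $\gamma>1-1/d$. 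This absorption argument is the missing step; with it in place, the rest of your bookkeeping goes through and yields the stated bounds.
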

\begin{proof}
    (i) Let $T(n)$ denote the number of pairs generated by the above
    algorithm for a set of $n$ points. We bound the number of pairs
    generated in each stage separately:

    \begin{compactenumA}
        \item For $\Pring \otimes (\PntSet \setminus \Pring)$, observe
        that every point participates in at most $O( \eps^{-d} \log
        n)$ pairs, and there are $O(n^{1-1/d})$ points in $\Pring$. As
        such, the number of pairs generated in this step is
        $O\pth{\pth{n^{1-1/d}/\eps^d} \log n}$, as every pair must
        involve at least one point of $\Pring$.  Namely, the total
        number of pairs generated for $\Pring \otimes (\PntSet
        \setminus \Pring)$ is $O\pth{\pth{n^{1-1/d}/\eps^d} \log n}$.

        \item Separating $\Pin \otimes \Pouter$ requires $O(1/\eps^d)$
        pairs.

        \item Separating $\Pin \otimes \Pout$ requires
        $O\pth{\eps^{-d} n^{1-1/d} }$ pairs.

        \item Separating $\Pin \otimes \Pin$, $\Pring \otimes \Pring$
        and $(\Pout \cup \Pouter) \otimes (\Pout \cup \Pouter)$
        requires $T(\cardin{\Pin})$, $T(\cardin{\Pring})$ and
        $T(\cardin{\Pout \cup \Pouter})$ pairs, respectively.
    \end{compactenumA}

    As such, we have that
    \[
     T(n) = O\pth{\pth{n^{1-1/d}/\eps^d} \log
       n} + T\pth{ \cardin{\Pin} } + T\pth{ \cardin{\Pring} } + T\pth{
       \cardin{\Pout \cup \Pouter} }.
    \]
    The solution to this recurrence is $O\pth{n/\eps^d}$.

    \medskip \noindent (ii) We have that a point participates in at
    most
    \begin{align*}
        D(n) &= O( \eps^{-d} \log n) + O( \eps^{-d} \log (n/\eps)) +
        \max\pth{ \MakeBig%
           D\pth{ \MakeSBig\! \cardin{\Pin}}, %
           D\pth{ \MakeSBig\! \cardin{\Pring} }, %
           D\pth{ \MakeSBig\! \cardin{\Pout \cup \Pouter} }}\\
        &= O( \eps^{-d} \log n) + D\pth{\MakeBig (1-1/c) n },
     \end{align*}
     as $\eps \geq 1/n$, $\cardin{\Pin} \leq n/2$, $\cardin{\Pring}
     \leq n^{1-1/d}$, and $\cardin{\Pout \cup \Pouter} \leq n -
     \cardin{\Pin} \leq (1-1/c)n$.  The solution to this recurrence is
     $O\pth{ \eps^{-d} \log^2 n }$.

    \medskip \noindent (iii) By the above, the total weight of the
    \SSPD generated is $O\pth{ n \eps^{-d} \log^2 n }$.

    \medskip \noindent (iv) As for the construction time, we get $R(n)
    = O\pth{n\eps^{-d} \log n} + R(n_1) + R(n_2) + R(n_3)$, where $n_1
    + n_2 +n_3 \leq n$ and $n_1, n_2, n_3 \leq (1-1/c)n$ for some
    absolute constant $c>1$. As such, the construction time is $O(n
    \eps^{-d} \log^2 n)$.
\end{proof}

\subsubsection{Converting into spanner}

We plug the above \SSPD construction into \thmref{spanner} to get a
spanner. We remind the reader that the construction uses a cone
decomposition around the smaller part of each pair in the \SSPD and
connects the apex of the cone to the closest point in the other side
of the pair inside the cone, as such every \SSPD pair give rise to
$O\pth{1/\eps^{d-1}}$ edges, which all share a single vertex called
the \emphi{hub} of the pair.  Let $\Graph$ be the resulting spanner.

\begin{lemma}
    The graph $\Graph$ has a separator of size $O\pth{
       n^{1-1/d}/\eps^{d}}$.

    \lemlab{small:separator}
\end{lemma}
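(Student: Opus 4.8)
The plan is to exhibit an explicit separator in $\Graph$ by reading off which vertices of $\PntSet$ can be endpoints of edges that cross the ball $\BallX{\pnt}{2r}$ used in the top level of the recursive \SSPD construction of \lemref{SSPD:parameters}. Recall that the spanner of \thmref{spanner} adds, for each \SSPD pair $\{A,B\}$ with (say) $\DiamX{A}\le\DiamX{B}$, only edges incident to the single hub vertex of that pair, lying in the smaller set $A$; all the $O(1/\eps^{d-1})$ edges of the pair share that hub. So to cut all edges that go between a vertex ``deep inside'' $\BallX{\pnt}{r}$ and a vertex ``outside'' $\BallX{\pnt}{2r}$, it suffices to delete all hubs of pairs that span the separating ring. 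I would take $\SetC$ to be the union of $\Pring$ together with all hub vertices of the \SSPD pairs generated at this top recursion level that cross the ring — equivalently, the hubs of the pairs built in steps \ref{spanner:A}, \ref{spanner:B}, \ref{spanner:C} — and set $\SetA = \Pin\setminus\SetC$, $\SetB = (\Pout\cup\Pouter)\setminus\SetC$. Since $|\Pin|\ge n/c$ and $|\Pouter|\ge n/2$, both $\SetA$ and $\SetB$ will still be $\Omega(n)$ once we check the separator is only sublinear.

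The key steps, in order: First, verify the no-edge property. Any edge of $\Graph$ with one endpoint in $\SetA\subseteq\Pin$ and the other in $\SetB\subseteq\Pout\cup\Pouter$ must come from an \SSPD pair $\{A,B\}$ with one side meeting $\Pin$ and the other meeting $\Pout\cup\Pouter$; by the recursive structure such a pair is created only at the current top level (the recursive calls in step 4 act on $\Pin$, $\Pring$, $\Pout\cup\Pouter$ separately, so they never produce a pair straddling the partition), and all its edges are incident to its hub, which lies in $\SetC$ — contradiction. Second, bound $|\SetC|$: it is $|\Pring|$ plus the number of pairs that cross the ring at this level (each contributing one hub). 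By \lemref{ring:separator} with $t=\tfrac12 n^{1/d}$ we have $|\Pring|\le n^{1-1/d}$. Step \ref{spanner:A} contributes $O\pth{(n^{1-1/d}/\eps^d)\log n}$ pairs, step \ref{spanner:B} contributes $O(1/\eps^d)$ pairs, and step \ref{spanner:C}, via \lemref{mild:S:S:P:D} with ring thickness $r/t$ and diameter $O(r)$, contributes $O\pth{\eps^{-d}(t^{d-1}+\log t)}=O\pth{\eps^{-d}n^{1-1/d}}$ pairs. Summing, $|\SetC| = O\pth{n^{1-1/d}\eps^{-d}\log n}$ crossing hubs plus $n^{1-1/d}$, which is dominated by $O\pth{n^{1-1/d}/\eps^{d}}$ after noting that the $\log n$ factor should be absorbed.

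The main obstacle — and the place the argument needs care — is the $\log n$ factor lurking in step \ref{spanner:A}: splitting a full $(1/\eps)$-\SSPD of $\PntSet$ via \lemref{split:PD} onto $\Pring\otimes(\PntSet\setminus\Pring)$ produces $O\pth{(n^{1-1/d}/\eps^d)\log n}$ pairs, not $O(n^{1-1/d}/\eps^d)$. To get the clean $O(n^{1-1/d}/\eps^d)$ separator claimed, I would instead feed in the \emph{reduced} \SSPD of \thmref{SSPD:compute}(C), which has only $O(n/\eps^d)$ pairs total; after splitting via \lemref{split:PD} (item C of that lemma at most doubles the pair count) the number of pairs involving $\Pring$ is $O\pth{(n^{1-1/d}/\eps^d)\cdot\text{(max pairs per point)}}$ — but more simply, since each pair contributes at least one point of $\Pring$ on one side, and a point of $\Pring$ sits in $O(\log n)$ pairs of the reduced \SSPD, we still see a $\log$; the honest route is to observe that for the separator it is enough to count \emph{hubs}, and a single point of $\Pring$ can be the hub of many pairs but as a vertex it is a single element of $\SetC$, so $|\SetC|\le|\Pring| + (\text{hubs from steps B, C}) + (\text{hubs from step A that lie outside }\Pring)$, and the hubs of step-A pairs that are \emph{not} in $\Pring$ are themselves points of $\PntSet\setminus\Pring$ that got paired with $\Pring$, of which there are $O(n^{1-1/d}\eps^{-d})$ by a direct double-counting against the $O(n/\eps^d)$ total pairs and the $n^{1-1/d}$ bound on $|\Pring|$. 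I would write this double-counting out carefully, as it is the one genuinely non-routine estimate; everything else is bookkeeping against the already-established \lemref{SSPD:parameters} and \lemref{mild:S:S:P:D}.
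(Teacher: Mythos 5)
Your overall strategy is the paper's: the separator is $\Pring$ together with the hubs of the top-level pairs that cross the ring, the no-edge property follows because the recursive calls act on $\Pin$, $\Pring$, and $\Pout\cup\Pouter$ separately, and the counts for stages (\ref{spanner:B}) and (\ref{spanner:C}) are exactly right. But there is a genuine gap in how you handle stage (\ref{spanner:A}). You insist on putting the hubs of the stage-(\ref{spanner:A}) pairs into $\SetC$, and there are $O\pth{\pth{n^{1-1/d}/\eps^d}\log n}$ such pairs; since a hub may lie on the non-$\Pring$ side of its pair and distinct pairs may have distinct hubs, the only a priori bound on the number of these hubs is the number of pairs, which carries the $\log n$ factor. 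The ``double-counting'' you propose to kill that factor does not work: the set of points of $\PntSet\setminus\Pring$ that appear in some stage-(\ref{spanner:A}) pair is essentially all of $\PntSet\setminus\Pring$ (the decomposition must cover every pair of $\Pring\otimes(\PntSet\setminus\Pring)$), so nothing in the quantities you cite bounds the number of distinct hubs by $O\pth{n^{1-1/d}/\eps^d}$.

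The fix --- and what the paper actually does --- is to note that you do not need the stage-(\ref{spanner:A}) hubs at all. Every pair produced in stage (\ref{spanner:A}) has one side entirely contained in $\Pring$ (this is what \lemref{split:PD} guarantees), and every spanner edge generated from a pair $\brc{X,Y}$ joins a point of $X$ to a point of $Y$. Hence every stage-(\ref{spanner:A}) edge already has an endpoint in $\Pring$, and deleting $\Pring$ alone (only $O\pth{n^{1-1/d}}$ vertices) destroys all of them, regardless of where the hubs sit. With that observation your $\SetC$ becomes $\Pring$ plus the $O\pth{1/\eps^d}$ hubs of stage (\ref{spanner:B}) plus the $O\pth{n^{1-1/d}/\eps^d}$ hubs of stage (\ref{spanner:C}), giving the claimed $O\pth{n^{1-1/d}/\eps^{d}}$ bound with no log to absorb.
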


\begin{proof}
    If we remove all the points of $\Pring$, the graph $\Graph$ is
    almost disconnected.  Indeed, removing all $O(n^{1-1/d})$ points
    of $\Pring$ immediately kills all the edges of the spanner that
    rise out of stage (\ref{spanner:A}). Stage (\ref{spanner:B}) gives
    rise to $O(1/\eps^d)$ pairs and consequently $O(1/\eps^{2d-1})$
    edges. Eliminating the $O(1/\eps^d)$ hub vertices eliminates all
    such edges. Stage (\ref{spanner:C}) gives rise to
    $O(n^{1-1/d}/\eps^d)$ pairs, which in turn induces
    $O\pth{n^{1-1/d} /\eps^{2d - 1}}$ edges in the spanner. Again,
    removing the corresponding $O(n^{1-1/d}/\eps^d)$ hub vertices
    eliminates all such edges. Therefore the set $\SetA$, of all these
    $O(n^{1-1/d}/\eps^d)$ removed vertices, is an
    $O(n^{1-1/d}/\eps^d)$-separator for graph $\Graph$ as it separates
    $\Pin$ from $\Pout \cup \Pouter$.
\end{proof}

\begin{theorem}
    For any $\eps>0$ and any set $\PntSet$ of $n$ points in $\Re^d$,
    there is a $(1+\eps)$-spanner $\Graph$ with
    \begin{compactenumi}
        \item $O(n/\eps^{2d-1})$ edges,
        \item maximum degree $O\pth{
           (1/\eps^{2d-1}) \log^2 n }$, and
        \item a separator of size $O(
        n^{1-1/d}/\eps^{d})$.
    \end{compactenumi}
    The $(1+\eps)$-spanner can be constructed in
    $O\pth{(n/\eps^d)\log^{2} n}$ time.

    \thmlab{separator}
\end{theorem}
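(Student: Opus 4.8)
The plan is to assemble the theorem from three pieces already in hand: the tailored \SSPD of \lemref{SSPD:parameters}, the generic \SSPD-to-spanner transformation of \thmref{spanner}, and the separator bound of \lemref{small:separator}. No new idea is needed; the proof amounts to substituting one family of parameters into another.

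First I would invoke the construction preceding \lemref{SSPD:parameters}, but run it with separation parameter $\eps/8$ in place of $\eps$, so that its output $\SW$ is an $8/\eps$-\SSPD of $\PntSet$ (the resulting factor $8^d$ is $O(1)$ since $d$ is fixed, and is absorbed everywhere). By \lemref{SSPD:parameters}, this $\SW$ has $O(n/\eps^d)$ pairs, every point of $\PntSet$ appears in $O\pth{\eps^{-d}\log^2 n}$ of its pairs, its total weight is $O\pth{n\eps^{-d}\log^2 n}$, and it is produced in $O\pth{n\eps^{-d}\log^2 n}$ time. Since $\SW$ has separation $8/\eps$, \thmref{spanner} applies directly and yields a $(1+\eps)$-spanner $\Graph$; this $\Graph$ is exactly the graph analyzed in \lemref{small:separator}.

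Then I would read off the three claimed quantities. For (i), \thmref{spanner} gives $O\pth{\cardin{\SW}/\eps^{d-1}} = O\pth{(n/\eps^d)/\eps^{d-1}} = O\pth{n/\eps^{2d-1}}$ edges. For (ii), \thmref{spanner} says a point lying in $k$ pairs of $\SW$ has degree $O\pth{k/\eps^{d-1}}$ in $\Graph$; taking $k = O\pth{\eps^{-d}\log^2 n}$ gives maximum degree $O\pth{\eps^{-(2d-1)}\log^2 n}$. For (iii), \lemref{small:separator} already proves that $\Graph$ admits a separator of size $O\pth{n^{1-1/d}/\eps^d}$: deleting the $O(n^{1-1/d})$ points of the ring set $\Pring$ together with the $O\pth{n^{1-1/d}/\eps^d}$ hub vertices produced by stages (\ref{spanner:B}) and (\ref{spanner:C}) destroys every edge between $\Pin$ and $\Pout\cup\Pouter$, and each of these two sides has $\Omega(n)$ vertices because $\cardin{\Pin}\ge n/c$ and $\cardin{\Pouter}\ge n/2$, so the removed set is a valid graph separator. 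Finally, the total running time is the $O\pth{n\eps^{-d}\log^2 n}$ cost of building $\SW$ plus the $O\pth{\WeightX{\SW}} = O\pth{n\eps^{-d}\log^2 n}$ cost of the transformation of \thmref{spanner}, that is, $O\pth{(n/\eps^d)\log^2 n}$.

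The only point needing care --- rather than a genuine obstacle --- is to use the \SSPD of \lemref{SSPD:parameters} specifically, and with separation made strong enough ($8/\eps$) that \thmref{spanner} is applicable: the small separator is a direct consequence of the ring $\Pring$ deliberately inserted into that construction, whereas a generic near-linear-weight \SSPD (such as the one of \thmref{SSPD:compute}) need not admit any small separator. Everything else is a routine substitution of the bounds of \lemref{SSPD:parameters} into those of \thmref{spanner} and \lemref{small:separator}.
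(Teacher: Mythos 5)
Your proposal is correct and follows the same route as the paper: build the tailored \SSPD of \lemref{SSPD:parameters} (with separation boosted so that \thmref{spanner} applies), convert it via \thmref{spanner}, and read off the separator from \lemref{small:separator}. The paper's own proof is just a terser version of this substitution; your extra care about the $8/\eps$ separation and the $\Omega(n)$ sizes of the two sides is sound but does not change the argument.
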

\begin{proof}
    Computing the \SSPD takes $O\pth{(n/\eps^d)\log^{2} n}$ time, and
    this also bounds the total weight of the \SSPD.
    \thmref{spanner} converts this into the desired spanner in time
    proportional to the total weight of the \SSPD.
\end{proof}

\thmref{separator} compares favorably with the result of \Furer and
Kasiviswanathan \cite{fk-sgig-07}. Indeed, the stated running time of
their algorithm is $O\pth{n^{2-2/(\ceil{d/2} + 1)} }$ (ignoring
polylog factors and the dependency on $\eps$). It is quite plausible
that their algorithm can be made to be faster (most likely
$O(n\log^{d-1} n)$) for the special case of the complete
graph. However, in the worst case, the maximum degree of a vertex in
their spanner is $\Omega(n)$, while in our construction the maximum
degree is $O(\log^2 n)$.

\begin{remark}
    Consider the point set made out of the standard $n^{1/d} \times
    n^{1/d} \times \cdots \times n^{1/d}$ grid. For any $\eps<1$, all
    the edges of the grid must be in the spanner of this
    graph. However, any separator for such a grid graph requires
    $\Omega\pth{ n^{1-1/d} }$ vertices \cite{rh-gsa-01}.  Namely, the
    bound of \thmref{separator} is close to optimal in the worst case.
\end{remark}

\section{Conclusions}
\seclab{conclusions}

We presented several new constructions of \SSPDs that have several
additional properties that previous constructions did not have. Our
basic construction relied on finding a good ring separator in low
dimension, an idea that should have other applications. To get an
optimal construction we used a random partition scheme which might be
of independent interest.

Many of the applications of \SSPDs uses cones and angles. It would be
interesting to extend some of these applications to spaces with low
doubling dimension. In particular, can one construct a spanner for a
point set, with low degree and a small separator, in a
low-doubling-dimension space?

\subsubsection*{Acknowledgments.}

The authors thank Sylvie Temme and Jan Vahrenhold for pointing out
mistakes in earlier versions of this paper. The authors also thank the
anonymous referees for their detailed and insightful comments.

\printbibliography

\appendix

\section{Converting a \SSPD into a spanner}
\apndlab{SSPD:to:spanners}

Abam \etal \cite{adfg-rftgs-09} showed how to construct a
$(1+\eps)$-spanner from their \SSPD for a set of $n$ points in a
plane. The proof that the resulting graph is $(1+\eps)$-spanner
depends on the \SSPD construction, namely the monotonicity property of
the \SSPD---see \cite{adfg-rftgs-09} for the definition and details.
Our spanner construction from a \SSPD $\SW$ of a set $\PntSet
\subseteq \Re^d$ of $n$ points, as describe next, is just a simple
generalization of the construction given in \cite{adfg-rftgs-09} but
the proof is independent of the construction.

\subsection{The construction}

For a parameter $\coneAngle$, we define a \emphi{$\coneAngle$-cone} to
be the intersection of $d$ non-parallel half-spaces such that the
angle of any two rays emanating at the cone's apex and being inside
the cone is at most~$\coneAngle$.  Let $\cset$ be a collection of
$O(1/\coneAngle^{d-1})$ interior-disjoint $\coneAngle$-cones, each
with their apex at the origin, that together cover $\Re^d$.  We can
construct this collection of cones by a grid induced by a system of
halfspaces, such that given a direction, we can find the cone
containing this direction in constant time.

We call the cones in $\cset$ \emphi{canonical cones}.  For a cone
$\cone \in \cset$ and a point $p \in \Re^d$, let $\cone(p)$ denote the
translated copy of $\cone$ whose apex coincides with~$\pnt$.

\subsubsection{The spanner construction}

We are given a $1/\rho$-\SSPD $\SW$ of a point-set $\PntSet$ in
$\Re^d$, where $\rho \leq \eps/8$. We next show how to convert this
\SSPD into a $(1+\eps)$-spanner of $\PntSet$.

Let $\coneAngle = \eps/40$. We build a graph $\Graph$ having $\PntSet$
as its vertex set. Initially the graph has no edges.  Next, for each
pair $\brc{\SetA, \SetB} \in \SW$ pick an arbitrary point $\pnt$ from
the set with smaller diameter, say $\SetA$. We will refer to $\pnt$ as
the \emphi{hub} of $\SetA$, denoted by $\hub(X)$. For each cone $\cone
\in \cset$, we connect $\pnt$ to its nearest neighbor in $\SetB \cap
\cone(\pnt)$ (denoted by $\pntA$); that is, we insert the edge $\pnt
\pntA$ into $\Graph$, with $\distX{\pnt}{\pntA}$ as its weight. Thus,
every pair of $\SW$ contributes $\cardin{\cset}$ edges to $\Graph$.

We claim that $\Graph$ is the desired spanner.

\subsection{Analysis}

\noindent \textbf{Restatement of     \thmref{spanner}}
{\emph \ThmSpannerBody}
\bigskip

\begin{proof}
    The construction is described above (using $\rho = \eps/8$ and
    $\coneAngle = \eps/40$), and the bound on the number of edges in
    the spanner follows immediately.

    The proof of the spanner property is by induction. Sort the pairs
    of points in $\PntSet$ by their length and let $p_1 q_1, \ldots,
    p_{u}q_u$ be these $\ds u = \binom{n}{2}$ sorted pairs (in
    increasing order), where $n = \cardin{\PntSet}$.

    It is easy to verify that $p_1 q_1$ must be in the spanner.
    Assume it holds that $\distG{p_i}{q_i} \leq (1+\eps) \distX{ p_i}{
       q_i}$, for all $i \leq k$. We now prove the claim holds for
    $\src\target$, where $\src = p_{k+1}$ and $\target = q_{k+1}$.

    \begin{figure}[h]%
        \centering%
        \includegraphics{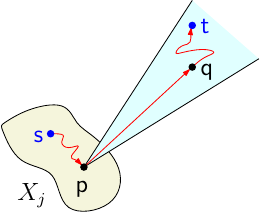}%
    \end{figure}
    Assume $\src \in \SetA_j$ and $\target \in \SetB_j$ for some pair $\brc{\SetA_j, \SetB_j}$ of $\SW$.  Furthermore, assume that $\SetA_j$ is the set with the smaller diameter, and let $\pnt = \hub(\SetA_j)$. Let $\pntA$ be the closest neighbor to $\pnt$ inside the cone containing $\target$. By construction, the edge $\pnt \pntA$ is in the spanner.

    Since $\src, \pnt \in \SetA_j$, and $\target \in \SetB_j$ it follows that $\distX{\src}{\pnt} < \distX{\src}{\target}$. As such, by induction, we have that $\distG{\src}{\pnt} \leq (1+ \eps) \distX{\src}{\pnt}$. It is also easy to verify that $\distX{\pntA}{\target} < \distX{\src}{\target}$ (this also follows from the calculations below), which implies that $\distG{\pntA}{\target} \leq (1+ \eps) \distX{\pntA}{\target}$. As such, we have that
    \begin{align*}
        \Err &=
        \distG{\src}{\target}  - \distX{\src}{\target}%
        \leq
        (1+\eps)\distX{\src}{\pnt} +
        \distX{\pnt}{\pntA} +
        (1+\eps)\distX{\pntA}{\target}  - \distX{\src}{\target}
        \\
        &\leq
        2\distX{\src}{\pnt} +
        \distX{\pnt}{\pntA} +
        (1+\eps)\distX{\pntA}{\target}  - \distX{\pnt}{\target}
        + \distX{\src}{\pnt}
        \leq
        3\distX{\src}{\pnt} +
        \overbrace{\distX{\pnt}{\pntA} +
        (1+\eps)\distX{\pntA}{\target}  - \distX{\pnt}{\target}}^{\Delta=}.
    \end{align*}

    \begin{figure}[h]%
        \centering
        \includegraphics{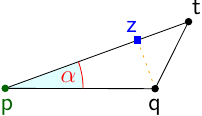}
    \end{figure}
    Let $\pntC$ be the projection of $\pntA$ on the segment $\pnt \target$. It holds $\distX{\pnt}{\target} = \distX{\pnt}{\pntC} + \distX{\pntC}{\target}$.  Let $\alpha = \angle \target \pnt \pntA \leq \coneAngle$.  Now, observe that $\ds \tan \alpha = \frac{\sin \alpha}{\cos \alpha} \leq 2 \coneAngle, $ because $\sin\alpha \leq \alpha \leq \coneAngle$, and $\cos \alpha \geq \cos \coneAngle \geq 1/2$, since $\coneAngle \leq \pi /3$. As such $\distX{\pntC}{\pntA} = \distX{\pnt}{\pntC} \tan \alpha \leq 2 \coneAngle \distX{\pnt}{\pntC} $. This implies that
    \begin{align*} %
        \Delta & = \distX{\pnt}{\pntA} +
        (1+\eps)\distX{\pntA}{\target} - \distX{\pnt}{\target} \leq
        \distX{\pnt}{\pntC} + \distX{\pntC}{\pntA} + (1+\eps)\pth{
           \distX{\pntC}{\pntA} + \distX{\pntC}{\target}} -
        \distX{\pnt}{\pntC} - \distX{\pntC}{\target} \\ %
        & %
        \leq %
        (2+\eps) \distX{\pntC}{\pntA} + \eps \distX{\pntC}{\target}
        \leq%
        2\coneAngle (2+\eps)  \distX{\pnt}{\pntC}
        + \eps \distX{\pntC}{\target}
        \leq%
        6 \coneAngle  \distX{\pnt}{\pntC}
        + \eps \distX{\pntC}{\target}
        \\
        &\leq
        (6 \coneAngle -\eps )  \distX{\pnt}{\pntC}
        + \eps \pth{  \distX{\pnt}{\pntC}  +
           \distX{\pntC}{\target}}%
        =
        (6 \coneAngle -\eps )  \distX{\pnt}{\pntC}
        + \eps  \distX{\pnt}{\target } .
    \end{align*}

    Now, by the above $\distX{\pnt}{\pntC} \leq \distX{\pnt}{\pntA}$
    and $\distX{\src}{\pnt} \leq (\eps/8) \distX{\pnt}{\pntA}$. As
    such,
    \begin{align*}
        \Err %
        &%
        \leq%
        3\distX{\src}{\pnt} + \Delta \leq %
        3\distX{\src}{\pnt} + (6 \coneAngle -\eps )
        \distX{\pnt}{\pntC} + \eps \pth{ \distX{\src}{\pnt}
           +\distX{\src}{\target}} \\
        &\leq%
        4 \distX{\src}{\pnt} + (6 \coneAngle -\eps )
        \distX{\pnt}{\pntC} + \eps \distX{\src}{\target}%
        \leq%
        \frac{\eps}{2} \distX{\pnt}{\pntA} + (6 \coneAngle -\eps )
        \distX{\pnt}{\pntA} + \eps \distX{\src}{\target} \\%
        &=%
        \pth{6 \coneAngle -\frac{\eps}{2} } \distX{\pnt}{\pntA} + \eps
        \distX{\src}{\target} \leq \eps \distX{\src}{\target},
    \end{align*}
    since $\coneAngle \leq \eps/12$ implies that $\pth{6 \coneAngle
       -\frac{\eps}{2} }\leq 0$. This implies that
    $\distG{\src}{\target} \leq (1+\eps) \distX{\src}{\target}$.

    \paragraph{Construction time.} To implement this construction we
    scan the pairs of the \SSPD one by one. For each such pair
    $\brc{\SetA,\SetB}$, we do the following.

    First, we need to determine if $\DiamX{\SetA} = O(\DiamX{\SetB})$
    or $\DiamX{\SetB} = O(\DiamX{\SetA})$ (so we known which set is
    roughly smaller). This can be done in $O(\cardin{\SetA} +
    \cardin{\SetB})$ by approximating the diameter of both sets in
    linear time.  Next, for each such pair (assume $\SetA$ has a
    smaller diameter than $\SetB$), we need to find the nearest
    neighbor to $\hub(\SetA)$ in $\SetB$, in each one of the cones. To
    this end, we have a grid of directions around $\hub(\SetA)$, and
    we compute for each grid cell all the points of $\SetB$ falling
    into this cell. Next, for all the points in such a grid cell, we
    find the closest point to $\hub(\SetA)$ in linear time.

    This takes $O( \cardin{\SetA} + \cardin{\SetB})$ time for the pair
    $\brc{X,Y}$. Overall, this takes time linear in the total weight
    of the given \SSPD.
\end{proof}

\section{Lower bound for \SSPD constructed using \BAR trees}
\apndlab{lower:bound}

Here, we demonstrate that a point might participate in a linear number
of pairs in the \SSPD if one uses the previous construction of Abam
\etal \cite{adfg-rftgs-09}.

\begin{theorem}
    There is a configuration of $n$ points in the plane, such that a
    point appearing in $n-1$ pairs of the \SSPD constructed using the
    algorithm of Abam \etal \cite{adfg-rftgs-09}.
\end{theorem}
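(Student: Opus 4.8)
The plan is to exhibit an explicit planar instance with a ``staircase of scales'' and then follow one designated point through the execution of the \SSPD algorithm of Abam \etal \cite{adfg-rftgs-09}, showing that the algorithm is forced to place that point in $n-1$ separate pairs, even though its global weight bound is not violated.

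First I would fix the instance. Put a point $\pnt$ at the origin and, for $i=1,\dots,n-1$, put a point $\pnt_i$ on the positive $x$-axis at distance $2^i$ from $\pnt$ (an arbitrarily small perpendicular jitter may be added to each $\pnt_i$ if the algorithm wants points in general position). This set has $n-1$ pairwise far-apart scales: its closest pair has length $\Theta(1)$ and its diameter is $\Theta(2^{n})$. The structural point is that the space-partitioning the construction of \cite{adfg-rftgs-09} performs, restricted to the regions containing $\pnt$, behaves like a chain $R_0\supset R_1\supset\cdots$ of length $\Theta(n)$: passing from $R_i$ to $R_{i-1}$ splits off exactly the single point $\pnt_i$ that lives at the next larger scale, and no two such splits can be coalesced, precisely because each new region strictly discards one scale. (If the underlying \BAR{} tree is kept balanced and hence shallow, then the same length-$\Theta(n)$ chain reappears inside the recursion that \emph{generates the pairs} when two sibling regions that are not well-separated must be semi-separated; I would phrase the argument so that it applies to whichever variant is the one actually used in \cite{adfg-rftgs-09}.)

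Second, I would run the pair-generating step of \cite{adfg-rftgs-09} along this chain. For the $i$\th link the algorithm must cover all pairs between the singleton $\brc{\pnt_i}$ and the inner set $\brc{\pnt,\pnt_1,\dots,\pnt_{i-1}}$ of everything at a smaller scale; since $\DiamX{\brc{\pnt_i}}=0$, the resulting pair is automatically $(1/\eps)$-semi-separated no matter how large the diameter of the inner set is, and the construction therefore emits exactly one pair per link rather than merging distinct scales. As $\pnt$ belongs to the inner set of every link $i=1,\dots,n-1$, and these $n-1$ inner sets are pairwise distinct, $\pnt$ ends up in $n-1$ distinct pairs of the produced \SSPD. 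I would then verify that this is consistent with the $O(n\log n)$-weight guarantee of \cite{adfg-rftgs-09} (the pairs that witness the high multiplicity of $\pnt$ are not the ones that carry the bulk of the weight), and a mild rescaling of the inter-scale gaps would be used to make the count touching $\pnt$ exactly $n-1$ rather than merely $\Theta(n)$.

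The only genuinely non-routine step is the second one: it requires pinning down the precise splitting/pairing rule of \cite{adfg-rftgs-09} and confirming that, on this staircase instance, it really produces a fresh pair for each of the $\Theta(n)$ links instead of combining far-but-different-scale fragments (and, in particular, that the \BAR-tree-driven recursion does descend $\Theta(n)$ deep along $\pnt$). Everything else — the instance, the distance estimates, and the elementary observation that a single point is semi-separated from every set — is immediate, so I expect the bulk of the work to be bookkeeping that matches the generic chain argument to the specifics of the \BAR-tree based algorithm.
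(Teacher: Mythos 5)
There is a genuine gap, and it sits exactly where you flag it: you never establish that the algorithm of Abam \etal, run on your staircase instance, actually emits a fresh pair for each of the $\Theta(n)$ ``links,'' and the structural claim you lean on to make this plausible --- a chain $R_0\supset R_1\supset\cdots$ of length $\Theta(n)$ in the space partition along $\pnt$ --- contradicts the defining property of \BAR trees. A \BAR tree is weight-balanced and has depth $O(\log n)$, so no point lies in more than $O(\log n)$ regions; your parenthetical fallback (``the same length-$\Theta(n)$ chain reappears inside the recursion that generates the pairs'') is asserted, not argued, and it is precisely the non-routine content of the theorem. On the staircase itself it is far from clear that anything forces $n-1$ pairs through $\pnt$: a balanced split of exponentially spaced collinear points produces sibling cells that are already $\Theta(1)$-semi-separated, and the leaf $\brc{\pnt}$ has a two-point parent, so it lives in only one weight class. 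You would need to engage with the actual pairing rule of \cite{adfg-rftgs-09} --- pairs are formed only between nodes of the same \emph{weight class}, where a node $v$ is in class $i$ when $\cardin{\PntSet_v}\le n/2^i$ and $\cardin{\PntSet_{\parentX{v}}}>n/2^i$ --- and your writeup never mentions weight classes at all.

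The paper's proof uses a different instance and exploits exactly the two features your argument is missing. It places two far-away points $\pntA,\pntB$ and $n-2$ points hugging a $135$-degree line, so that the first \BAR split peels off the cell $\regionAB=\brc{\pntA,\pntB}$ whose parent is the root; this tiny cell therefore belongs to \emph{every} weight class $i=1,\dots,\lg n-1$ and must be semi-separated anew at each one. The second ingredient is geometric: because the remaining points are squeezed against the splitting line, every non-singleton subcell of $\regionRest$ has a boundary crossing that line and hence cannot be semi-separated from $\regionAB$, forcing all pairs involving $\pntA,\pntB$ to be of the form $(\brc{\pntA,\pntB},\brc{\pntC})$ with $\pntC$ a singleton --- which is what drives the count to $n-1$. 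If you want to salvage your approach, you should replace the depth-$\Theta(n)$ chain by an argument about a constant-size cell appearing in $\Omega(\log n)$ weight classes and being unable to pair with anything but singletons; as written, the proposal's central step fails.
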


\begin{proof}
    For the sake of simplicity of exposition, we assume that $n$ is a
    power of $2$.

    \begin{figure}[h]
        \centering\includegraphics{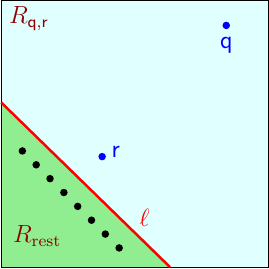}
        \caption{}
        \figlab{b_t}
    \end{figure}

    Consider the configuration illustrated in \figref{b_t}, where $n-2$ points are on a line making a $135$-degree angle with the $x$-axis, and the two other points $\pntA$ and $\pntB$ are far from them.  Let $\PntSet$ denote this set of points.

    The \BAR-tree construction algorithm \cite{dgk-bartc-01}, used on
    $\PntSet$, first separates the points $\pntA$ and $\pntB$ from the
    rest of the points by a $135$-degree splitting line $\ell$.  This
    splitting line produces two fat regions. One region, denoted by
    $\regionAB$, contains $\pntA$ and $\pntB$, and the other region,
    denoted by $\regionRest$, contains the rest of the points. The
    diameter of points inside $\regionAB$ is $\distX{\pntA}{\pntB}$,
    and it is large compared to the diameter of points inside
    $\regionRest$ or any subsequent subcell created inside
    $\regionRest$.

    Now, a node $v$ (and its corresponding region) of a \BAR tree is
    of \emphi{weight class} $i$, if $\cardin{\PntSet_v} \leq n/2^i$
    and $\cardin{ \PntSet_{\parentX{v}} } > n/2^i$.  As such, the
    region $\regionAB$ appears in all the weight classes for $i = 1,
    \cdots, \lg n - 1$, see \cite{adfg-rftgs-09}. The algorithm of
    Abam \etal \cite{adfg-rftgs-09} creates pairs only between nodes
    that belong to the same weight class (a node might belong to
    several weight classes).

    Observe that by placing the points of $\regionRest$ sufficiently
    close to the splitting line $\ell$, one can guarantee that any
    boundary of a subcell of $\regionRest$ in the \BAR-tree intersects
    $\ell$. Namely, no subcell of $\regionRest$ can be semi-separated
    from $\regionAB$, unless it contains a single point of $\PntSet$
    (and then its diameter is treated as being zero). Therefore,
    semi-separated pairs involving $\pntA$ and $\pntB$ that are
    produced in the weight class $i = 1, \cdots, \lg n - 1$ are of the
    form $(\{\pntA,\pntB\}, \{\pntC\})$ where $\pntC \in \PntSet
    \setminus \brc{\pntA, \pntB}$.  Moreover, sets involved in the
    produced semi-separated pairs in the weight class $\lg n$ are
    singletons. Therefore, both $\pntA$ and $\pntB$ appear in $n-2$
    pairs which all together semi-separate $\{\pntA,\pntB\}$ from
    $\PntSet \setminus \brc{\pntA, \pntB}$. Since the algorithm also
    generates the pair $\brc{ \brc{\pntA},\brc{\pntB}}$, these two
    points participate in $n-1$ pairs.
\end{proof}

\end{document}